\tikzstyle{box}=[fill=white, draw=black, shape=rectangle, minimum width=5mm, minimum height=5mm]
\tikzstyle{dot}=[fill=black, draw=black, shape=circle]
\tikzstyle{discard}=[circuit ee IEC, thick, ground, rotate=90, scale=1.5, inner sep=-2mm, tikzit fill={rgb,255: red,176; green,176; blue,176}]
\tikzstyle{maxmix}=[circuit ee IEC, thick, ground, rotate=-90, scale=1.5, inner sep=-2mm, tikzit fill={rgb,255: red,102; green,102; blue,102}]
\tikzstyle{arr}=[->]
\tikzstyle{shade}=[-, fill={rgb,255: red,191; green,191; blue,191}, opacity=.5]
\tikzstyle{empty}=[-, fill=white]
\tikzstyle{dashes}=[-, dashed]
\newcommand{\MatRp}{\ensuremath{\textrm{Mat}[\mathbb R^+]}\xspace}
\newcommand{\MatRaff}{\ensuremath{\textrm{Mat}[\mathbb R]}\xspace}
\newcommand{\CP}{\ensuremath{\textrm{CP}}\xspace}
\newcommand{\CPs}{\ensuremath{\textrm{CP}^*}\xspace}
\newcommand{\id}{\mathrm{id}}
\newcommand{\bv}{\mathrm{BV}}
\newcommand{\caus}{\mathrm{Caus}}
\newcommand{\sub}{\mathrm{Sub}}
\newcommand{\catc}{\mathcal{C}}
\newcommand{\ob}{\mathrm{Ob}}
\newcommand{\fsub}[1]{[#1]}
\def\namedlabel#1#2{\begingroup
    #2%
    \def\@currentlabel{#2}%
    \phantomsection\label{#1}\endgroup
}
\newcommand\discard{
\mathbin{\text{\begin{tikzpicture}[circuit ee IEC,yscale=0.6,xscale=0.5]
\draw (0,-2ex) to (0,0) node[ground,rotate=90,xshift=.65ex] {};
\end{tikzpicture}}}
}
\newcommand\maxmix{
\mathbin{\text{\begin{tikzpicture}[circuit ee IEC,yscale=0.6,xscale=0.5]
\draw (0,2ex) to (0,0) node[ground,rotate=-90,xshift=.65ex] {};
\end{tikzpicture}}}
}
\newcommand{\todoo}[1]{}
\newcommand{\aff}{\textsf{\sf aff}}
\newcommand{\affp}{\textsf{\sf aff}^+}
\title{Higher-order causal theories are models of $\bv$-logic}
\author{Will Simmons}{Department of Comptuter Science, University of Oxford, Oxford, UK \and Cambridge Quantum, Terrington House, 13-15 Hills Road, Cambridge, UK}{will.simmons@cambridgequantum.com}{https://orcid.org/0000-0001-7996-6577}{}
\author{Aleks Kissinger}{Department of Comptuter Science, University of Oxford, Oxford, UK}{aleks.kissinger@cs.ox.ac.uk}{https://orcid.org/0000-0002-6090-9684}{}
\authorrunning{W. Simmons and A. Kissinger}
\keywords{Causality, linear logic, categorical logic, probabilistic coherence spaces, quantum channels}
\begin{document}

\maketitle

\begin{abstract}
The $\caus[-]$ construction takes a compact closed category of basic processes and yields a *-autonomous category of higher-order processes obeying certain signalling/causality constraints, as dictated by the type system in the resulting category. This paper looks at instances where the base category C satisfies additional properties  yielding an affine-linear structure on $\caus[\catc]$ and a substantially richer internal logic. While the original construction only gave multiplicative linear logic, here we additionally obtain additives and a non-commutative, self-dual sequential product yielding a model of Guglielmi's BV logic. Furthermore, we obtain a natural interpretation for the sequential product as ``A can signal to B, but not vice-versa'', which sits as expected between the non-signalling tensor and the fully-signalling (i.e. unconstrained) par. Fixing matrices of positive numbers for $\catc$ recovers the BV category structure of probabilistic coherence spaces identified by Blute, Panangaden, and Slavnov, restricted to normalised maps. On the other hand, fixing the category of completely positive maps gives an entirely new model of BV consisting of higher order quantum channels, encompassing recent work in the study of quantum and indefinite causal structures.
\end{abstract}

\section{Introduction}\label{sec:introduction}

The causality condition \cite{Chiribella2010} is a simple equation one can impose on a family of processes that states essentially that ``discarding the output of $f$ is the same as discarding its input.'' Pictorially:\footnote{We will use \textit{string diagram} notation for monoidal categories throughout, depicting morphisms as boxes, composition as ``plugging'' boxes from bottom-to-top and $\otimes$ as placing boxes side-by-side. See e.g.~\cite{selinger2009survey}.}
\begin{equation}\label{eq:causality}
  \hfill\tikzfig{causality}\hfill
\end{equation}
for some fixed family of ``discarding'' processes $\discard_A$. While this seems to be a simple equation, and reasonable to impose on most sane collections of physical processes, it deserves the somewhat lofty name of \textit{causality} because it really seems to capture the essence of influences moving in a single direction: forward in time. In particular, it guarantees that distant agents cannot send messages to one another without their existing a forward-directed path in the diagram of their processes. Without referring explicitly to spacetime, this serves as an abstract stand-in for what a relativity theorist would also call causality, namely the limitation that influences cannot propagate faster than the speed of light~\cite{kissinger2017equivalence}.

In concrete categories of interest, namely those of probabilistic and quantum maps, equation \eqref{eq:causality} simply imposes that processes should preserve normalised states. Iterating this, we might wish to consider second-order processes that preserve processes that preserve normalised states, and so on. This turns out the provide a rich landscape for modelling causal relationships between events, enabling for example chains~\cite{chiribella2008quantum,Gutoski2007} (or directed acyclic graphs~\cite{Kissinger2019a}) of events in a definite causal order, probabilistic mixtures of causal orderings, or even exotic indefinite causal structures~\cite{Chiribella2013,Oreshkov2012,baumeler2014maximal}.


There are several routes to constructing a framework for characterising and composing such higher-order processes, including describing acyclic graphs of interactions between discrete first-order channels \cite{Blute2003,Chiribella2009} or building an infinite hierarchy of types recursively out of constructors describing a kind of causal relationship \cite{Bisio2019}.

The $\caus[-]$ construction of Kissinger and Uijlen \cite{Kissinger2019a} fits the latter style through a category where morphisms map normalised states of the source type to normalised states of the target type. The typing of a morphism is a judgement about externally-observable properties of normalisation and information flow as opposed to any assumed internal structure or spacetime geometry. The type constructors model the operators of Multiplicative Linear Logic (MLL), i.e. it forms a $*$-autonomous category. MLL features two logical connectives, tensor and par, which serve as two extremes in the $\caus[-]$ construction. Tensor yields a \textit{non-signalling} composition of processes, where causal influences are not allowed to pass from one side to the other, whereas par gives a \textit{fully-signalling} composition, i.e. one that imposes no signalling constraints.

In between this lies the \textit{one-way signalling} processes, where causal influences can flow from one agent (say, Alice in the past) to another (say, Bob in the future). While special cases of such processes were treated in an \textit{ad hoc} way in \cite{Kissinger2019a}, here we will show that one-way signalling can in fact be treated as a fully-fledged connective in its own right, yielding a substantially richer logical structure.

$\bv$-logic \cite{Guglielmi2007} adds a third logical connective that is non-commutative to capture sequentiality. It admits a categorical characterisation via $\bv$-categories, and has previously been applied to the study of (probabilistic) coherence spaces \cite{Blute2012} and a certain graph-based model of quantum causal structures called discrete quantum causal dynamics~\cite{Blute2003}.


In this paper, we will adapt the $\caus[-]$ construction by modifying some of the assumptions on the base category, requiring it to be \textit{additive precausal}. The extra structure allows us to consider new ways of combining processes and types to describe operational constructions such as binary tests, probabiity distributions, and one-way signalling processes. These correspond to extending the logical structure of $\caus[\catc]$ with the additive connectives (sometimes called ``with'' and ``plus'') of linear logic, as well as the sequential connective of $\bv$. We show that the main classical and quantum examples of precausal categories are furthermore additive precausal, enabling $\caus[\catc]$ to model higher-order theories and classical and quantum causal structures, respectively. We also note the existence of non-standard models, such as higher-order affine (a.k.a. quasi-probabilistic) processes.

\section{Higher-Order Causal Structure}\label{sec:caus}

We use ``process theory'' style terminology (see e.g.~\cite{CKbook}) for monoidal categories throughout. Namely, we use the terms \textit{type} or \textit{system} interchangeably to refer to objects of a symmetric monoidal category, \textit{process} to refer to a generic morphism $f : A \to B$, \textit{state} to refer to a morphism $\rho : I \to A$ from the tensor unit, \textit{effect} to refer to a morphism $\pi : A \to I$ to the tensor unit, and \textit{scalar} to refer to a morphism $\lambda : I \to I$. In categories admitting higher-order structure, we think of first-order types as state spaces and first-order processes as maps that transform first-order states to first-order states. Higher-order processes are transformations where the input or output system is itself a map. The main example used in causality literature is a quantum 2-comb which is a process taking a channel over first-order types as an input and transforms it into a new channel, typically by composing with some pre- and post-processing which may share some memory channel. One can extend this to an infinite hierarchy of higher-order processes where an $(n+1)$-comb transforms an $n$-comb into a $1$-comb (a channel) \cite{Chiribella2009}. A higher-order process theory is one which describes processes in such an infinite hierarchy uniformly.


Higher-order theories are commonly achieved by providing a mechanism to encode transformations into states of some function type. In category theory, this corresponds to an internal hom in monoidal closed categories, i.e. a bifunctor $\multimap : \catc^{op} \times \catc \to \catc$ such that $\catc(C \otimes A, B) \simeq \catc(C, A \multimap B)$ and this is natural in all arguments. Wilson and Chiribella \cite{Wilson2021a} demonstrate that adding basic manipulations (morphisms that capture sequential and parallel composition of such encoded functions) is sufficient to permit the inductive generation of comb types. However, with higher-order theories, there are more ways of composing processes that need not be obtainable in this way.


To solve this, we may further ask that $\catc$ be a \textit{compact closed category}, i.e. we require every object $A$ has a ``cup'' state $\eta_A : I \to A^* \otimes A$ and a ``cap'' effect satisfying the so-called ``yanking equations'': $(\id_A \otimes \eta_A) \fatsemi (\epsilon_A \otimes \id_A) = \id_A$ and $(\eta_A \otimes \id_{A^*}) \fatsemi (\id_{A^*} \otimes \epsilon_A) = \id_{A^*}$. These enable us to convert inputs to outputs at will, which in turn lets us ``wire up'' processes in arbitrary ways to each other. Namely, we can treat all higher order processes as states, and use ``caps'' to connect them to each other in arbitrary ways. See Figure \ref{fig:compact_comb} for an example and \cite[Section 2.1]{Kissinger2019a} for more details.



\begin{figure}
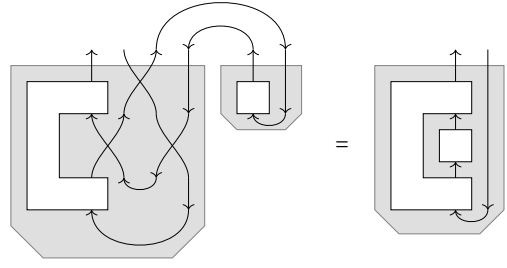

\centering
\begin{equation*}
\hfill\tikzfig{compact_closed_comb0} \ \ = \ \ \tikzfig{compact_closed_comb1}\hfill
\end{equation*}
\caption{Example of embedding higher-order processes and composition via compact closure.}
\label{fig:compact_comb}
\end{figure}

The causality condition in equation~\eqref{eq:causality} 
specialises to states as the requirement that $\rho \fatsemi \discard_A = \id_I$, i.e. it imposes that states be normalised with respect to the discarding effect. Discarding is unique in the sense that it is the only effect normalised for all states. However, for the higher-order analogue of state spaces, there might be more such effects, motivating the following definition.

\begin{definition}\label{def:dual}
  For $c \subseteq \catc(I,A)$, the dual set is $c^* := \{ \pi \in \catc(A, I) \,|\, \forall \rho \in \rho \fatsemi \pi = \id_I \}$.
%
\end{definition}

\begin{remark}
In this paper, we will freely mix between treating $c$/$c^*$ as a set of states in $\catc(I, A)$/$\catc(I, A^*)$ and a set of effects in $\catc(A^*, I)$/$\catc(A, I)$, depending on the context. They are equivalent through the transpose isomorphism induced by compact closure.
\end{remark}

From this we can recover the space of first-order/\textit{causal} states for a system $A$ as $\left\{\discard_A\right\}^*$. For example, in quantum theory we take $\discard_A \in \catc(A, I)$ to represent the partial trace over the system $A$, so this set describes the space of trace-$1$ density matrices. First-order/causal processes from $A$ to $B$ are those that map every state in $\left\{\discard_A\right\}^*$ to a state in $\left\{\discard_B\right\}^*$ (equivalent to mapping the effect $\discard_B$ to $\discard_A$ when our category has enough states/well-pointedness), which we can encode as $\left\{\left\{\discard_A\right\}^* \otimes \discard_B\right\}^* \subseteq \catc(I, A^* \otimes B)$. These concepts are used to define a precausal category, which specifies one possible set of conditions to enable discussions of higher-order causal structures.

\begin{definition}\label{def:precausal}
A compact closed category $\catc$ is a \textit{precausal category} if:

\begin{enumerate}[PC1.]
  \item[{\rm \namedlabel{prec:discard}{PC1}.}] $\catc$ has a discarding process $\discard_A \in \catc(A, I)$ for every system $A$, compatible with the monoidal structure as below;
\begin{align}
\discard_{A \otimes B} &= \discard_A \otimes \discard_B \\
\discard_I &= \id_I
\end{align}
\item[{\rm \namedlabel{prec:dimension}{PC2}.}] The dimension $d_A := \maxmix_A \fatsemi \discard_A$ is invertible for all non-zero $A$;
\item[{\rm \namedlabel{prec:enough}{PC3}.}] $\catc$ has enough causal states: $\forall f, g : A \to B . \left(\forall \rho \in \left\{\discard_A\right\}^* . \rho \fatsemi f = \rho \fatsemi g\right) \Rightarrow f = g$;
\item[{\rm \namedlabel{prec:oneway}{PC4}.}] Causal one-way signalling processes on first-order types factorise: for any causal $\Phi : A \otimes B \to C \otimes D$,
\begin{equation}
\left(
\begin{array}{c}
\exists \Phi' : A \to C \text{ causal} . \\
\tikzfig{precausal_oneway} = \tikzfig{precausal_oneway1}
\end{array}
\right) \Rightarrow \left(
\begin{array}{c}
\exists Z, \Phi_1 : A \to C \otimes Z \text{ causal}, \\
\Phi_2 : Z \otimes B \to D \text{ causal} . \\
\tikzfig{precausal_semilocal} = \tikzfig{precausal_semilocal1}
\end{array}
\right)
\end{equation}
\item[{\rm \namedlabel{prec:split}{PC5}.}] For all $w : I \to A \otimes B^*$:
\begin{equation}
\left(
\begin{array}{c}
\forall \Phi : A \to B \text{ causal} . \\
\tikzfig{precausal_split} = 1
\end{array}
\right) \Rightarrow \left(
\begin{array}{c}
\exists \rho : I \to A \text{ causal} . \\
\tikzfig{precausal_split1} = \tikzfig{precausal_split2}
\end{array}
\right)
\end{equation} 
\end{enumerate}
\end{definition}

Note that in~\cite{Kissinger2019a}, \ref{prec:oneway} and \ref{prec:split} are rolled into a single axiom (C4), which is then proven equivalent to \ref{prec:oneway} and \ref{prec:split}.

\begin{example}\label{ex:classical}
  \MatRp is a precausal category, whose objects are natural numbers and whose morphisms $M: m \to n$ are $n \times m$ matrices. Then, $\otimes$ is given by tensor product (a.k.a. Kronecker product) of matrices and consequently the tensor unit is the natural number $1$. Hence states are column vectors, discarding maps $\discard_m: m \to 1$ are given by row vectors of all $1$'s, and the causality condition for states $\rho \fatsemi \discard = \id_1$ imposes the condition that the entries of $\rho$ sum to 1. The conditions \ref{prec:discard}, \ref{prec:dimension}, and \ref{prec:enough} are easily checked, whereas \ref{prec:oneway} and \ref{prec:split} follow from the product rule for conditional probability distributions (see \cite{Kissinger2019a}).
\end{example}

\begin{example}\label{ex:quantum}
  \CP is a precausal category, whose objects are algebras $\mathcal L(H)$ of linear operators from a finite-dimensional Hilbert space to itself, and whose morphisms are completely positive maps (CP-maps). $\otimes$ is given by tensor product and consequently the tensor unit is the 1D algebra $\mathcal L(\mathbb C) \cong \mathbb C$. States are CP-maps $\mathbb C \to \mathcal L(H)$, which correspond to positive semidefinite operators $\rho \in \mathcal L(H)$. Discarding is given by the trace, hence causal states are the trace-1 positive semidefinite operators, a.k.a. quantum (mixed) states.
  Again the conditions \ref{prec:discard}, \ref{prec:dimension}, and \ref{prec:enough} are easily checked, whereas \ref{prec:oneway} and \ref{prec:split} follow from the essential uniqueness of purification for CP-maps (see \cite{Kissinger2019a}).
\end{example}

Given a precausal category $\catc$, we can refine to a category $\caus[\catc]$ which equips each object with a nice set of states that should be considered ``normalised'' (or ``causal'') and restricts to the morphisms that preserve them. First, we'll define what it means for a set of states to be suitably nice.


\begin{definition}
A set $c \subseteq \catc(I, A)$ is \textit{closed} if $c = c^{**}$ and \textit{flat} if either there exist invertible scalars $\lambda, \mu$ such that $\lambda \cdot \maxmix_A \in c$ and $\mu \cdot \discard_A \in c^*$ or $A$ is a zero system.
\end{definition}

\begin{definition}
Given a precausal category $\catc$, the category $\caus[\catc]$ has as objects pairs $\mathbf{A} = (A, c_\mathbf{A} \subseteq \catc(I, A))$ where $c_\mathbf{A}$ is closed and flat. A morphism $f : \mathbf{A} \to \mathbf{B}$ is a morphism $f : A \to B$ in $\catc$ such that $\forall \rho \in c_\mathbf{A} . \rho \fatsemi f \in c_\mathbf{B}$.
\end{definition}

$\caus[\catc]$ inherits a monoidal structure and monoidal closure from $\catc$, from which one can show that it is $*$-autonomous \cite{Kissinger2019a}. This is in fact a full subcategory of a particular double gluing construction (specifically the tight orthogonality subcategory of the double glueing construction using the $\{1\}$-focussed orthogonality \cite{Hyland2003}). Aside from the flatness restriction, this is therefore a relatively well-known means of constructing models of linear logic.

\begin{tabular}{lrl}
First-order states & $\mathbf{A^1} :=$ & $\left(A, \left\{\discard_A\right\}^*\right)$ \\
Dual & $\mathbf{A^*} :=$ & $\left(A^*, c_\mathbf{A}^*\right)$ \\
Tensor product & $\mathbf{A \otimes B} :=$ & $\left(A \otimes B, \left\{\rho_A \otimes \rho_B \middle| \rho_A \in c_\mathbf{A}, \rho_B \in c_\mathbf{B} \right\}^{**}\right)$ \\
Par & $\mathbf{A \parr B} :=$ & $\mathbf{\left(A^* \otimes B^*\right)^*}$ \\
Internal hom & $\mathbf{A \multimap B} := $ & $ \mathbf{A^* \parr B}$ \\
Monoidal unit & $\mathbf{I} :=$ & $\left(I, \left\{1\right\}\right)$
\end{tabular}

The intuition between the two monoidal products is that $\mathbf{A \otimes B}$ is the closure of the space of local processes and hence we can compose the $\mathbf{A}$ and $\mathbf{B}$ components however we choose, whereas $\mathbf{A \parr B}$ is the space of bipartite processes that are normalised in local contexts so we generally cannot compose the components, just act locally on each side. Both share the same unit $\mathbf{I} = \mathbf{I^*}$ and there is a canonical inclusion $\mathbf{A \otimes B} \hookrightarrow \mathbf{A \parr B}$ making $\caus[\catc]$ into an isomix category.

From these operators, we can build objects capturing the set of processes compatible with some common causal structures. For example, $\mathbf{(A^1 \multimap B^1) \parr (C^1 \multimap D^1)}$ includes all causal bipartite first-order processes, $\mathbf{(A^1 \multimap B^1) \otimes (C^1 \multimap D^1)}$ is the subset of those that are non-signalling \cite[Theorem 6.2]{Kissinger2019a}, and $\mathbf{(A^1 \multimap B^1) \multimap (C^1 \multimap D^1)}$ includes all $2$-combs that map causal channels $\mathbf{A^1 \multimap B^1}$ to causal channels $\mathbf{C^1 \multimap D^1}$.


\section{Additive Precausal Categories}\label{sec:apc}

When investigating higher order causal theories, it is useful to strengthen the definition of a precausal category in a handful of ways. For the remainder of this paper, we will adapt the definition of $\caus[\catc]$ to be built from an additive precausal category $\catc$.

\begin{definition}\label{def:apc}
  Let $\catc$ be a compact closed category with products (and hence biproducts and additive enrichment~\cite{Houston2008}). $\catc$ is an \textit{additive precausal category} if:

  \begin{enumerate}[APC1.]
    \item[{\rm \namedlabel{apc:discard}{APC1}.}] $\catc$ has a discarding process $\discard_A \in \catc(A, I)$ for every system $A$, compatible with the monoidal and biproduct structures as below;
\begin{align}
\discard_{A \otimes B} &= \discard_A \otimes \discard_B \\
\discard_I &= \id_I \\
\discard_{A \oplus B} &= \langle \discard_A, \discard_B \rangle
\end{align}
\item[{\rm \namedlabel{apc:dimension}{APC2}.}] The dimension $d_A := \maxmix_A \fatsemi \discard_A$ is invertible for all non-zero $A$;
\item[{\rm \namedlabel{apc:basis}{APC3}.}] Each object $A \in \ob(\catc)$ has a finite causal basis: some $\{\rho_i\}_i \subseteq \left\{ \discard_A \right\}^*$ such that $\forall B . \forall f, g \in \catc(A,B) . (\forall i .\ \rho_i \fatsemi f = \rho_i \fatsemi g) \Rightarrow f = g$.
\item[{\rm \namedlabel{apc:scalars}{APC4}.}] Addition of scalars is \textit{cancellative} ($\forall x,y,z.\ x + z = y + z \Rightarrow x = y$), \textit{totally pre-ordered} ($\forall x,y .\exists z .\ x+z=y \vee x=y+z$), and all non-zero scalars have a multiplicative inverse.
\item[{\rm \namedlabel{apc:effects}{APC5}.}] All effects have a complement with respect to discarding: for any $\pi \in \catc(A,I)$, there exists some $\pi' \in \catc(A,I)$ and scalar $\lambda$ such that $\pi + \pi' = \lambda \cdot \discard_A$.
\end{enumerate}
\end{definition}

The first 3 axioms relate closely to the corresponding ones in Definition~\ref{def:precausal}, whereas the last two are quite different in flavour, and are in some sense more elementary, as their proofs don't rely on any particularly deep facts about our main classical and quantum examples (see Examples~\ref{ex:classical-apc} and \ref{ex:quantum-apc} below).



The axioms \ref{apc:discard} and \ref{apc:dimension} above are essentially identical to the corresponding axioms in Definition \ref{def:precausal} of precausal categories, with the additional requirement that discarding be compatible with biproducts as well as tensor products.

\ref{apc:basis} is a strengthening of condition \ref{prec:enough}. Rather than requiring us to check a pair of processes agree on \textit{all} causal states to be equal, we only require agreement on some fixed finite set of states. In other other words, each system has a set of states that behaves like a basis spanning all of the others, for the purposes of distinguishing maps. In the quantum foundations literature, this is sometimes called a \textit{fiducial set of states}.


\ref{apc:scalars} says that the semiring of scalars $\mathcal C(I,I)$ behaves somewhat like the set of non-negative real numbers $\mathbb R^+$. While the scalars need not be a field (indeed our main example $\mathbb R^+$ is not), any field satisfies this axiom as well. We do however exclude categories of non-determinstic processes such as $\mathrm{Rel}$, since the scalars are boolean values with addition given by the (non-cancellative) operation of disjunction.



Note that, with the help of bases (\ref{apc:basis}), we can promote additive cancellativity of scalars to additive cancellativity for all processes.

\begin{proposition}\label{prop:apc-cancellative}
  In an additive precausal category:
  \begin{equation}
    \tag{APC5a}
    \hfill
    \forall f, g, h \in \catc(A,B).\ 
    f + h = g + h \Rightarrow f = g
    \hfill
  \end{equation}
\end{proposition}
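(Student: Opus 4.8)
The plan is to reduce the morphism-level equation to a family of scalar equations, where cancellativity is available from \ref{apc:scalars}, cancel there, and then rebuild $f=g$ from those scalar facts using the separation properties supplied by compact closure and \ref{apc:basis}.

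First I would transpose the whole situation to states. Write $C := A^* \otimes B$ and let $F,G,H : I \to C$ be the names of $f,g,h$ under compact closure. Since naming is an isomorphism $\catc(A,B) \cong \catc(I,C)$ built from $\id_{A^*}\otimes(-)$ followed by precomposition with the fixed state $\eta_A$ — both operations preserving $+$ by functoriality of $\otimes$ and the additive enrichment — the hypothesis $f+h=g+h$ becomes $F+H=G+H$ in $\catc(I,C)$. Next, probe with effects to land in scalars: for any effect $\pi : C \to I$, bilinearity of composition gives $F\fatsemi\pi + H\fatsemi\pi = G\fatsemi\pi + H\fatsemi\pi$ in $\catc(I,I)$, and cancelling the scalar $H\fatsemi\pi$ via \ref{apc:scalars} yields $F\fatsemi\pi = G\fatsemi\pi$. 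It then remains to check that enough effects $\pi$ separate $F$ from $G$. Here I would invoke \ref{apc:basis} for the object $C^*$, which provides a finite causal basis $\{\tau_k\}_k \subseteq \{\discard_{C^*}\}^*$ such that any two morphisms out of $C^*$ that agree on all $\tau_k$ are equal. Each $\tau_k : I \to C^*$ has a transpose effect $\tau_k^T : C \to I$, and the standard compact-closed ``sliding'' identity gives $F\fatsemi\tau_k^T = \tau_k\fatsemi F^T$ where $F^T : C^* \to I$ is the transpose of $F$ (and similarly for $G$). Taking $\pi = \tau_k^T$ in the step above therefore gives $\tau_k\fatsemi F^T = \tau_k\fatsemi G^T$ for every $k$, so \ref{apc:basis} for $C^*$ forces $F^T = G^T$, hence $F = G$, hence $f = g$.

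The main obstacle — really the only subtlety — is that \ref{apc:scalars} grants cancellativity for scalars only, so one cannot cancel $h$ at the level of morphisms (or $H$ at the level of states) directly; the equation must first be squeezed down into $\catc(I,I)$. This requires separating maps both ``on the left'' (supplied here by passing to names in $\catc(I,C)$, or equivalently by a causal basis of $A$) and ``on the right'' (supplied by effects), and it is exactly compact closure that lets the right-hand separation be obtained from the causal basis of $C^*$ via transposition. Everything else is routine bilinearity of composition and functoriality of $\otimes$ with respect to the additive enrichment noted in Definition~\ref{def:apc}.
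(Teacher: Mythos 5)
Your proof is correct and follows essentially the same strategy as the paper's: sandwich the equation between states and effects to land in the scalars $\catc(I,I)$, cancel there via \ref{apc:scalars}, and then recover $f=g$ from \ref{apc:basis}. The only cosmetic difference is that you use a single causal basis for the composite object $(A^*\otimes B)^*$ after naming, whereas the paper uses two bases (for $A$ and for $B^*$) and applies \ref{apc:basis} twice; these are interchangeable via compact closure.
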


This condition allows us to define the free \textit{subtractive closure} $\sub(\catc)$ which extends $\catc$ with all negatives and prove that there exists a faithful embedding $\fsub{-} : \catc \to \sub(\catc)$.
More details on the explicit construction of $\sub(\catc)$, its properties, and the embedding functor are given in Appendix \ref{sec:subtraction}.

The utility of freely introducing negatives is summarised in the following proposition.

\begin{proposition}\label{prop:subc-field}
  For a precausal category $\catc$, the scalars $K := \sub(\catc)(I,I)$ are a field, and hence $\sub(\catc)$ is enriched over $K$-vector spaces.
\end{proposition}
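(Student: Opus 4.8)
The plan is to use the explicit construction of $\sub(\catc)$ from the appendix, whose scalars should be formal differences $x - y$ of scalars $x, y \in \catc(I,I)$ modulo the equivalence $x - y \sim x' - y' \iff x + y' = x' + y$; Proposition~\ref{prop:apc-cancellative} (specialised to $A = B = I$, which is just \ref{apc:scalars}'s cancellativity) guarantees this relation is well-behaved and that the embedding $\fsub{-}$ is faithful. So $K = \sub(\catc)(I,I)$ is already known to be a commutative ring with additive inverses — it is the Grothendieck ring of the semiring $\catc(I,I)$ — and the only thing left to prove is that every nonzero element is invertible. First I would fix a representative $x - y$ with $x \neq y$ in $\catc(I,I)$.

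The key step is to reduce an arbitrary nonzero difference to a genuine nonzero scalar of $\catc$. By the total pre-order in \ref{apc:scalars}, for any $x, y$ there is a scalar $z$ with either $x + z = y$ or $x = y + z$; since $x \neq y$, in the first case $x - y = -z$ with $z \neq 0$ (using cancellativity to see $z \ne 0$), and in the second case $x - y = z$ with $z \neq 0$. Either way $x - y = \pm z$ for some nonzero $z \in \catc(I,I)$. Now \ref{apc:scalars} gives a multiplicative inverse $z^{-1} \in \catc(I,I)$, and its image $\fsub{z^{-1}}$ in $K$ inverts $\fsub{z}$; negating if necessary yields an inverse of $x - y$. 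Hence $K$ is a field.

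Finally, $\sub(\catc)$ is enriched over commutative monoids (it has all negatives by construction and inherits the additive enrichment of $\catc$), so each hom-set $\sub(\catc)(A,B)$ is an abelian group; the scalar action of $\catc(I,I)$ extends to an action of $K = \sub(\catc)(I,I)$ by composition with $I \to I$ on either side, and the semiring/ring axioms together with bilinearity of composition make each hom-set a $K$-vector space, with composition $K$-bilinear. I expect the only genuine obstacle to be bookkeeping: verifying that the construction in Appendix~\ref{sec:subtraction} indeed makes $K$ the Grothendieck ring and that the pre-order argument correctly handles the sign cases and the possibility $z = 0$; everything after the reduction to a nonzero scalar of $\catc$ is immediate from \ref{apc:scalars}.
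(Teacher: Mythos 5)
Your proposal is correct and follows essentially the same route as the paper: observe that $K$ is the Grothendieck ring of the scalar semiring, use the total pre-order of \ref{apc:scalars} to write any nonzero element as $\pm z$ for a nonzero $z \in \catc(I,I)$, and then invert $z$ using \ref{apc:scalars}. Your justification that $z \neq 0$ (from $x \neq y$, the correct meaning of ``nonzero in $K$'') is if anything slightly cleaner than the paper's phrasing.
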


In particular, we can therefore treat processes in $\catc(A,B)$ as being embedded in an ambient vector space $\sub(\catc)(A,B)$. So, while we might not be able to make all linear algebraic constructions directly in $\catc$, we can do so in $\sub(\catc)$. An important example of this is the ability to extend any independent set of states to a basis of states in $\catc$ as well as its corresponding dual basis of effects in $\sub(\catc)$.

\begin{lemma}\label{lemma:extend_basis}
Given any set of morphisms in $\catc(A,B)$ that are linearly independent in $\sub(\catc)$, they can be extended to a basis in $\catc$ with a dual basis in $\sub(\catc)$.
\end{lemma}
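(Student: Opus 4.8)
The plan is to do this in two stages: first extend the given independent family to a full basis of $\catc(A,B)$ using only linear algebra in $\sub(\catc)$, then argue that the extension can be chosen to consist of morphisms that actually live in $\catc$ (not just formal differences), and finally produce the dual basis in $\sub(\catc)$. For the first stage, by Proposition~\ref{prop:subc-field} the hom-set $\sub(\catc)(A,B)$ is a $K$-vector space, and by Proposition~\ref{prop:apc-cancellative} together with the faithful embedding $\fsub{-} : \catc \to \sub(\catc)$ we may regard $\catc(A,B)$ as sitting inside $\sub(\catc)(A,B)$, spanning it over $K$ (every element of $\sub(\catc)(A,B)$ is a difference of two morphisms of $\catc$, hence a $K$-linear combination of morphisms of $\catc$). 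The space is finite-dimensional: using a finite causal basis $\{\rho_i\}$ of $A$ from \ref{apc:basis} and a finite causal basis of $B^*$ (equivalently, enough effects on $B$), a morphism is determined by the finitely many scalars $\rho_i \fatsemi f \fatsemi \pi_j$, so $\dim_K \sub(\catc)(A,B)$ is finite. Hence any $K$-independent subset of $\catc(A,B)$ extends to a $K$-basis of $\sub(\catc)(A,B)$, and since $\catc(A,B)$ spans, we can choose the new basis vectors from $\catc(A,B)$ itself.

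The subtle point — and the one I expect to be the main obstacle — is the word ``basis in $\catc$'': we want the whole family $\{f_1, \dots, f_n\}$ (old and new) to be a basis \emph{in $\catc$}, meaning every morphism of $\catc(A,B)$ is a $K^+$-combination... no: what we actually need is only that it is a $K$-basis (coefficients in the field $K = \sub(\catc)(I,I)$), witnessed inside $\catc$ where possible, with the \emph{dual} basis allowed to live in $\sub(\catc)$. So the real content is just: (i) the $f_i$ are $K$-linearly independent and $K$-span $\sub(\catc)(A,B)$ (immediate from the finite-dimensional extension argument above, choosing spanning vectors from $\catc$); (ii) the dual functionals exist. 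For (ii), in a finite-dimensional vector space over $K$ any basis has a dual basis of linear functionals $\sub(\catc)(A,B) \to K$; we must realise each such functional as a morphism, i.e. as ``compose with a state $\rho \in \sub(\catc)(I,A)$ and an effect $\pi \in \sub(\catc)(A',I)$'' pattern, or more precisely as an element of $\sub(\catc)(B, A) \cong \sub(\catc)((A \multimap B)^*, I)$ via compact closure. Here compact closure of $\catc$ (inherited by $\sub(\catc)$) gives a perfect pairing $\sub(\catc)(A,B) \times \sub(\catc)(B,A) \to \sub(\catc)(I,I) = K$ sending $(f,g)$ to the scalar obtained by composing $f$ and $g$ around a loop; since $K$ is a field and the spaces are finite-dimensional of equal dimension, this pairing is non-degenerate, so the dual basis $\{g_j\} \subseteq \sub(\catc)(B,A)$ with loop$(f_i, g_j) = \delta_{ij}$ exists.

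So the key steps, in order, are: (1) invoke Propositions~\ref{prop:subc-field} and \ref{prop:apc-cancellative} to embed $\catc(A,B) \hookrightarrow \sub(\catc)(A,B)$ as a $K$-spanning subset of a $K$-vector space; (2) use \ref{apc:basis} to show $\dim_K \sub(\catc)(A,B) < \infty$; (3) extend the given independent family to a $K$-basis, picking the fresh basis elements from $\catc(A,B)$ since it spans; (4) use compact closure of $\sub(\catc)$ to exhibit the non-degenerate loop pairing $\sub(\catc)(A,B) \times \sub(\catc)(B,A) \to K$ and extract the dual basis in $\sub(\catc)$. The only place requiring care is step (2) — making sure the ``fiducial'' states on $A$ together with a dual description of $B$ genuinely pin down a morphism up to equality in $\catc$, which follows from \ref{apc:basis} applied to $A$ (and, for effects on $B$, from \ref{apc:basis} applied to $B^*$ via compact closure, or directly from \ref{prec:enough}-style well-pointedness). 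Everything else is standard finite-dimensional linear algebra carried out in $\sub(\catc)$.
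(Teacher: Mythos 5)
Your proposal is correct and rests on the same ingredients as the paper's proof (finite-dimensionality and separation from \ref{apc:basis}, the field structure of $K$ from Proposition~\ref{prop:subc-field}, and compact closure to pair states with effects), but it executes them more abstractly. The paper works with states $\catc(I,A)$, forms the explicit matrix of inner products $m_{i,j} = \fsub{\rho_i \fatsemi \pi_j}$ between a minimal basis of states and the transposes of a minimal basis of $A^*$, and runs Gaussian elimination by hand, using minimality of both families to argue the elimination terminates in an invertible square matrix whose column operations directly produce the dual effects $\{e_j\}$; the arbitrary independent set is then handled by adjoining elements of this canonical basis and re-diagonalising. You instead package the same content as standard finite-dimensional linear algebra: Steinitz exchange to extend the independent family using the spanning set $\catc(A,B) \subseteq \sub(\catc)(A,B)$, and existence of dual bases for a non-degenerate pairing. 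This buys a shorter argument at the cost of one imprecision worth fixing: you justify non-degeneracy of the loop pairing by saying the two spaces are ``finite-dimensional of equal dimension,'' which is backwards --- equal dimension does not give non-degeneracy, and equality of dimensions is itself a \emph{consequence} of non-degeneracy. The correct justification is the one you already use for finite-dimensionality, namely that \ref{apc:basis} applied to $A$ and to $B^*$ (via transposition and Proposition~\ref{prop:sub_preserves_basis}) makes states separate effects and effects separate states, which is exactly non-degeneracy on both sides; with that sentence repaired the argument is complete. You should also say a word on why a $K$-spanning independent family is a basis in the paper's separating-set sense (spanning plus linearity of composition gives separation), but this is routine.
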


While the dual basis in $\sub(\catc)$ may not be physically meaningful (e.g. in the classical case it will contain vectors with negative probabilities), it will be a useful mathematical tool for working with the morphisms in $\catc$.

\ref{apc:effects} allows us to interpret effects (up to some renormalisation) as testing some predicate. To see how this works, first assume for simplicity that $\lambda = \id_I$. For a type $A$, we can think of $\pi: A \to I$ as some predicate over $A$, and $\pi'$ as its negation. For some causal state $\rho$, we can think of the composition $p_1 := \rho \fatsemi \pi$ as the probability that $\pi$ is true for $\rho$ and $p_2 := \rho \fatsemi \pi'$ as the probability that $\pi$ is false. The fact that $\pi + \pi' = \discard$ lets us conclude that those probabilities sum to 1:
\[
  \hfill
  p_1 + p_2 = \rho \fatsemi \pi + \rho \fatsemi \pi' = \rho \fatsemi (\pi + \pi') = \rho \fatsemi \discard = \id_I
  \hfill
\]
If $\lambda \neq \id_I$, the previous reasoning holds after re-normalising, i.e. replacing $\pi$ and $\pi'$ with $\lambda^{-1} \cdot \pi$ and $\lambda^{-1} \cdot \pi'$.

Thanks to compact closure, we can promote \ref{apc:effects} to a property about all processes. Namely, any process $f : A \to B$ has a complement $f'$ where, up to re-normalisation, $f + f'$ gives the uniform noise process.

\begin{proposition}\label{prop:bintests}
  For any $f: A \to B$ in an additive precausal category, there exists $f' : A \to B$ and a scalar $\lambda$ such that:
  \begin{equation}
    \tag{APC5a}
    \hfill
    f + f' = \lambda \cdot \discard_A \fatsemi \maxmix_B
    \hfill
  \end{equation}
\end{proposition}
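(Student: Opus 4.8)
The plan is to deduce the claim from \ref{apc:effects} by transposing $f$ across the compact structure into an effect, applying the complementation result there, and then transposing back. Write $f^\flat := (f \otimes \id_{B^*}) \fatsemi \epsilon_B \in \catc(A \otimes B^*, I)$ for the transpose of $f : A \to B$. Compact closure makes $(-)^\flat$ a bijection $\catc(A,B) \to \catc(A \otimes B^*, I)$ with inverse $g \mapsto g^\sharp := (\id_A \otimes \eta_B) \fatsemi (g \otimes \id_B)$. Since $\catc$ has biproducts it is additively enriched, and $- \otimes X$ is a left adjoint (with right adjoint $- \otimes X^*$), hence preserves biproducts and is additive on hom-sets; so both $(-)^\flat$ and $(-)^\sharp$ preserve sums and scalar multiples. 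This compatibility with the enrichment is the only ingredient needed beyond \ref{apc:effects} and \ref{apc:discard}.

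First I would apply \ref{apc:effects} to the effect $f^\flat \in \catc(A \otimes B^*, I)$, obtaining some $\pi' \in \catc(A \otimes B^*, I)$ and a scalar $\lambda$ with $f^\flat + \pi' = \lambda \cdot \discard_{A \otimes B^*}$. Put $f' := (\pi')^\sharp \in \catc(A,B)$. Applying the additive, scalar-linear map $(-)^\sharp$ to both sides (and using $(f^\flat)^\sharp = f$) yields
\[
  f + f' \;=\; \lambda \cdot (\discard_{A \otimes B^*})^\sharp ,
\]
so it only remains to identify $(\discard_{A \otimes B^*})^\sharp$ with $\discard_A \fatsemi \maxmix_B$.

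By \ref{apc:discard}, $\discard_{A \otimes B^*} = \discard_A \otimes \discard_{B^*}$, and a short diagram chase shows that $(\discard_A \otimes \discard_{B^*})^\sharp$ factors as $\discard_A : A \to I$ on the $A$-strand followed by $\eta_B \fatsemi (\discard_{B^*} \otimes \id_B) : I \to B$ on the remaining strand. The latter composite is exactly $\maxmix_B$: the maximally mixed (uniform-noise) state is by definition the transpose of the discarding effect on $B^*$ — in $\MatRp$ it is the all-ones vector and in $\CP$ the unnormalised maximally mixed operator, consistent with $d_B = \maxmix_B \fatsemi \discard_B$ from \ref{apc:dimension}. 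Hence $(\discard_{A \otimes B^*})^\sharp = \discard_A \fatsemi \maxmix_B$ and we conclude $f + f' = \lambda \cdot \discard_A \fatsemi \maxmix_B$, as required. (If $A$ or $B$ is a zero object, both sides vanish and there is nothing to prove.)

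Everything here is routine string-diagram manipulation; the only points that genuinely need care are checking that transposition across the compact structure commutes with the additive enrichment — so that ``$+$'' and scalars pass cleanly through $(-)^\flat$ and $(-)^\sharp$ — and recalling that $\maxmix_B$ is precisely the transpose of $\discard_{B^*}$, so that the right-hand side of \ref{apc:effects} lands on the intended uniform-noise process rather than some other scalar multiple of it. Neither is a serious obstacle.
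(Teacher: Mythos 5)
Your proof is correct and follows essentially the same route as the paper's: transpose $f$ across the compact structure into an effect, apply \ref{apc:effects} there, and transpose the resulting complement back, identifying the transpose of $\discard_{A\otimes B^*}$ with $\discard_A \fatsemi \maxmix_B$. You are somewhat more explicit than the paper about the additivity of transposition and about $\maxmix_B$ being the transpose of $\discard_{B^*}$, but these are the same ingredients.
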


\begin{example}\label{ex:classical-apc}
  \MatRp defined as in Example~\ref{ex:classical} is also an additive precausal category, where $\oplus$ is given by the direct sum of matrices. The standard basis of unit vectors gives a basis for \ref{apc:basis}, the semiring of scalars $\MatRp(I,I) \cong \mathbb R^+$ satisfies \ref{apc:scalars}, and for \ref{apc:effects}, we just need to choose a suitably large $\lambda$ such that $\pi' := \lambda \cdot \discard_A - \, \pi$ contains all positive numbers.
\end{example}

\begin{example}\label{ex:affine-apc}
  In addition to $\mathbb R^+$, we can construct an additive precausal category $\textrm{Mat}[K]$ for any field of characteristic 0. In particular, \MatRaff is an additive precausal category that is identical to \MatRp but without any positivity constraint, describing affine or ``quasi-probabilistic'' maps where negative probabilities are permitted. In this case, the subtractive closure gives an equivalent category to \MatRaff itself.
\end{example}

\begin{example}\label{ex:quantum-apc}
  The quantum example is very nearly the category \CP, as defined in Example~\ref{ex:quantum}, but \CP doesn't have biproducts. If we freely add biproducts, we obtain a category \CPs whose objects are all finite-dimensional C*-algebras (or equivalently, algebras of the form $\mathcal L(H_1) \oplus \ldots \oplus \mathcal L(H_k)$) and completely positive maps. Discarding is again given by the trace operator, so \ref{apc:discard} and \ref{apc:dimension} are straightforward to verify. For \ref{apc:basis}, we can fix a (non-orthogonal) basis of states for each type. As in the classical case, the scalars are $\mathbb R^+$, so \ref{apc:scalars} is immediate and since $\discard_A$ is an interior point in the cone of positive effects, $\pi'$ can defined as $\lambda \cdot \discard_A - \, \pi$ for suitably large $\lambda$.
\end{example}

%

Given an additive precausal category, we can apply the $\caus$ construction in the same way as before. However, it may now be easier to devise interesting closed sets or interpret the impact of the closure operator since it just corresponds to taking affine combinations of states. For this to make sense, we should say precisely what we mean to take affine combinations of states in $\catc$.

\begin{definition}\label{def:affine-closure}
  For a set of states $c \subseteq \catc(I, A)$, we define sets $\aff(c) \subseteq \sub(\catc)(I, A)$ and $\affp(c) \subseteq \catc(I,A)$ as follows, for $K := \sub(\catc)(I,I)$:
  \begin{align*}
    \aff(c) & := \left\{ \rho \, \middle| \, \exists \{\rho_i\}_i \subseteq \catc(I, A), \{\lambda_i\}_i \subseteq K . \ \sum_i \lambda_i = \id_I, \rho = \sum_i \lambda_i \cdot \fsub{\rho_i} \right\} \\
    \affp(c) & := \left\{ \rho \, \middle| \, \exists \rho' \in \aff(c).\ \rho' = \fsub{\rho} \right\}
  \end{align*}
\end{definition}

If we identify the set $\catc(I,A)$ with its image under $\fsub{-}$, we can think if $\affp(c)$ as the intersection of the affine closure of $c$ with the set $\catc(I,A) \subseteq \sub(\catc)(I,A)$ of ``positive'' states embedded in the subtractive closure. In the classical and quantum case, $\affp(-)$ arises from taking all of the affine combinations of elements of $c$, then intersecting the resulting set with the positive cone of (unnormalised) probability distributions or quantum states, respectively.


\begin{theorem}\label{thrm:affine}
Given any flat set $c \subseteq \catc(I,A)$ for a non-zero $A$, $c^{**} = \affp(c)$.
\end{theorem}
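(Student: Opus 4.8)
The plan is to prove the two inclusions $\affp(c) \subseteq c^{**}$ and $c^{**} \subseteq \affp(c)$ separately. The first inclusion should be the easy direction: since $c \subseteq c^{**}$ and $c^{**}$ is closed (indeed it is a double dual, hence equal to its own double dual by general properties of the orthogonality relation), it suffices to show that $c^{**}$ is closed under the affine combinations appearing in the definition of $\affp$. Concretely, if $\rho = \sum_i \lambda_i \cdot \fsub{\rho_i}$ with $\sum_i \lambda_i = \id_I$ and each $\fsub{\rho_i} \in c^{**}$, then for any $\pi \in c^*$ we compute $\rho \fatsemi \pi = \sum_i \lambda_i \cdot (\rho_i \fatsemi \pi) = \sum_i \lambda_i \cdot \id_I = \id_I$, using linearity of composition in $\sub(\catc)$ and the fact that $\rho_i \fatsemi \pi = \id_I$ for all $i$. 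Hence $\rho \in c^{**}$, provided $\rho$ actually lies in $\catc(I,A)$, which is exactly the condition $\rho = \fsub{\sigma}$ for some $\sigma \in \catc(I,A)$ defining membership in $\affp(c)$; so $\sigma \in c^{**}$ and we are done with this direction. (One should be slightly careful that $c^* \subseteq \sub(\catc)(A,I)$ and that ``$\rho \fatsemi \pi = \id_I$ in $\catc$'' is equivalent to the same equation in $\sub(\catc)$, which follows from faithfulness of $\fsub{-}$.)

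For the reverse inclusion $c^{**} \subseteq \affp(c)$, the natural strategy is a separating-hyperplane argument carried out in the vector space $\sub(\catc)(I,A)$. Suppose $\sigma \in \catc(I,A)$ with $\fsub{\sigma} \notin \aff(c)$; I want to produce an effect $\pi \in c^*$ with $\sigma \fatsemi \pi \neq \id_I$, witnessing $\sigma \notin c^{**}$. Since $\aff(c)$ is an affine subspace-like set (an affine subset of the finite-dimensional $K$-vector space $\sub(\catc)(I,A)$, finite-dimensionality coming from \ref{apc:basis}), and $\fsub{\sigma}$ lies outside it, there is an affine functional $\varphi$ on $\sub(\catc)(I,A)$ that is constantly equal to $\id_I$ (or to $0$, depending on bookkeeping) on $\aff(c)$ but takes a different value at $\fsub{\sigma}$. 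The key point is to realise this affine functional as composition with an actual effect in $\sub(\catc)(A,I)$: here I would use Lemma~\ref{lemma:extend_basis}, extending a basis of states from $c$ (or rather from the affine span, translated to a linear span) to a basis of $\catc(I,A)$ with a dual basis of effects in $\sub(\catc)(A,I)$, so that any linear functional is given by composition with a suitable linear combination of dual basis effects. Normalising this effect so that it equals $\id_I$ on all of $c$ gives an element of $c^*$ (computed in $\sub(\catc)$), and it evaluates to something $\neq \id_I$ on $\sigma$.

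The remaining subtlety — and I expect this to be the main obstacle — is the interplay between $\aff(c)$ and $\affp(c)$, i.e. passing from ``$\sigma$ is separated from the affine closure'' to ``$\sigma \notin c^{**}$'' while respecting positivity. If $\fsub{\sigma} \in \aff(c)$ then by definition $\sigma \in \affp(c)$ and there is nothing to prove; so the separating-hyperplane case only needs to handle $\fsub{\sigma} \notin \aff(c)$, and there the argument above applies directly without any positivity concern, because the separating effect $\pi$ constructed in $\sub(\catc)$ only needs to witness $\sigma \fatsemi \pi \neq \id_I$, and membership of $\pi$ in $c^*$ is a condition purely about its action on $c$. The flatness hypothesis is what guarantees $\aff(c)$ is non-empty and that the relevant normalisations are by invertible scalars (so that $c^*$ is non-empty and the dual-basis effects can be rescaled): concretely $\lambda \cdot \maxmix_A \in c$ gives a reference point and $\mu \cdot \discard_A \in c^*$ guarantees the affine functionals we build can be taken to send all of $c$ to $\id_I$ after rescaling by an invertible scalar, using \ref{apc:scalars} and \ref{apc:dimension}. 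I would therefore structure the write-up as: (1) reduce to the case $\fsub{\sigma} \notin \aff(c)$; (2) set up the finite-dimensional ambient vector space and use flatness to fix a convenient affine chart; (3) invoke Lemma~\ref{lemma:extend_basis} to get dual effects; (4) build and normalise the separating effect, checking it lies in $c^*$; (5) conclude $\sigma \notin c^{**}$, hence $c^{**} \subseteq \affp(c)$.
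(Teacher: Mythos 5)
Your first inclusion is fine and matches the paper's: the paper writes the affine combination with the negative terms moved to the other side and uses cancellativity in $\catc$, whereas you compose in $\sub(\catc)$ and use linearity, but these are the same argument. Your overall architecture for the hard direction (work in the finite-dimensional $K$-vector space $\sub(\catc)(I,A)$, use Lemma~\ref{lemma:extend_basis} to get a dual basis, and detect non-membership in $\aff(c)$ by an affine functional) is also the right skeleton, and it is essentially what the paper does in contrapositive-free form.

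However, there is a genuine gap exactly at the point you declare harmless. You claim the separating functional poses ``no positivity concern'' because ``membership of $\pi$ in $c^*$ is a condition purely about its action on $c$.'' It is not: by Definition~\ref{def:dual}, $c^*$ is a subset of $\catc(A,I)$, so a witness for $\sigma \notin c^{**}$ must be an honest effect of the base category, not merely an element of $\sub(\catc)(A,I)$ that normalises $c$. The dual-basis functionals $e_i$ produced by Lemma~\ref{lemma:extend_basis} live only in $\sub(\catc)$ (in $\MatRp$ they generically have negative entries), and rescaling by an invertible scalar cannot repair this. Showing that $\sigma$ is not normalised by some element of the dual taken inside $\sub(\catc)$ does not show $\sigma \notin c^{**}$. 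The missing step is the one the paper supplies: decompose $e_i \sim \fsub{e_i^+} - \fsub{e_i^-}$, use \ref{apc:effects} to find $e'$ with $e' + e_i^- = \lambda \cdot \discard_A$, and use flatness ($\mu \cdot \discard_A \in c^*$) to check that the \emph{positive} effect $\mu \cdot \lambda^{-1} \cdot (e' + e_i^+)$ lies in $c^*$; evaluating $s \in c^{**}$ against this effect and against $\mu \cdot \discard_A$ then forces $\fsub{s} \fatsemi e_i \sim 0$ for every dual-basis direction $e_i$ transverse to the span of $c$, which is what reduces the basis expansion of $\fsub{s}$ to an affine combination of elements of $c$. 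Without this positivisation-by-discarding step your contrapositive does not close, so you should add it; with it, your argument becomes a reorganisation of the paper's proof.
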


This characterises the tensor space $c_{\mathbf{A \otimes B}} = \left\{ \rho_A \otimes \rho_B \middle| \rho_A \in c_\mathbf{A}, \rho_B \in c_\mathbf{B} \right\}^{**}$ in $\caus[\catc]$ as the affine closure of the separable states, from which we can prove the following property.

\begin{theorem}\label{thrm:no_interaction_with_trivial}
If $\mathbf{A} = (A, c_\mathbf{A})$ with $c_\mathbf{A} = \{\mu \cdot \maxmix_A\}$ for any non-zero $A$, then every $h \in c_{\mathbf{A \otimes B}}$ is a product morphism of the form $\mu \cdot \maxmix_A \otimes g$ for some $g \in c_\mathbf{B}$.
\end{theorem}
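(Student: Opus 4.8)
The plan is to use Theorem~\ref{thrm:affine} to rewrite $c_{\mathbf{A \otimes B}}$ as $\affp$ of the separable states, and then show that every affine combination of product states $\mu \cdot \maxmix_A \otimes \rho_i$ (with $\rho_i \in c_\mathbf{B}$) is again of the form $\mu \cdot \maxmix_A \otimes g$. First I would observe that since $c_\mathbf{A} = \{\mu \cdot \maxmix_A\}$ is a singleton, the separable set $\{\rho_A \otimes \rho_B \mid \rho_A \in c_\mathbf{A}, \rho_B \in c_\mathbf{B}\}$ is exactly $\{\mu \cdot \maxmix_A \otimes \rho_B \mid \rho_B \in c_\mathbf{B}\}$. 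By Theorem~\ref{thrm:affine}, $c_{\mathbf{A \otimes B}} = \affp$ of this set, so any $h \in c_{\mathbf{A \otimes B}}$ embeds under $\fsub{-}$ into an affine combination $\sum_i \lambda_i \cdot \fsub{\mu \cdot \maxmix_A \otimes \rho_i}$ with $\sum_i \lambda_i = \id_I$ and $\rho_i \in c_\mathbf{B}$. Pulling the common factor $\mu \cdot \maxmix_A$ out of the tensor using bilinearity of $\otimes$ in $\sub(\catc)$, this equals $\fsub{\mu \cdot \maxmix_A} \otimes \left(\sum_i \lambda_i \cdot \fsub{\rho_i}\right)$, i.e. $\fsub{\mu \cdot \maxmix_A \otimes g}$ where $g := \sum_i \lambda_i \cdot \fsub{\rho_i}$ a priori lives in $\sub(\catc)(I,B)$.

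The two things that remain are: (i) that $g$ is actually (the image of) a morphism in $\catc(I,B)$, not merely in the subtractive closure, and (ii) that $g \in c_\mathbf{B}$. For (i), I would argue that since $h \in \catc(I, A \otimes B)$ and $\mu \cdot \maxmix_A \otimes (-)$ is, up to the invertible scalar $\mu$ and the factor $\maxmix_A$, an injective-on-the-relevant-subspace linear map that can be ``undone'': composing $h$ with $\left(d_A^{-1} \cdot \discard_A\right) \otimes \id_B$ recovers $g$ as an honest morphism of $\catc$, since $\maxmix_A \fatsemi \discard_A = d_A$ is invertible by \ref{apc:dimension}. Concretely $g = h \fatsemi \left(\left(\mu^{-1} d_A^{-1} \cdot \discard_A\right) \otimes \id_B\right) \in \catc(I,B)$, and one checks that $\mu \cdot \maxmix_A \otimes g$ re-multiplies back to $h$ because $\maxmix_A \fatsemi d_A^{-1} \cdot \discard_A = \id_I$ at the scalar level — here I would need to be slightly careful that $h$ really does have the product form $\mu \cdot \maxmix_A \otimes (\text{something})$ and not just that its $\fsub{-}$-image is an affine combination of such; but that follows since the affine combination visibly has that form in $\sub(\catc)$ and $\fsub{-}$ is faithful.

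For (ii), that $g \in c_\mathbf{B}$: since $c_\mathbf{B}$ is closed, $c_\mathbf{B} = c_\mathbf{B}^{**} = \affp(c_\mathbf{B})$ by Theorem~\ref{thrm:affine} (using that $c_\mathbf{B}$ is flat, which holds as $c_\mathbf{B}$ is the state set of an object of $\caus[\catc]$). And $\fsub{g} = \sum_i \lambda_i \cdot \fsub{\rho_i}$ with $\rho_i \in c_\mathbf{B}$ and $\sum_i \lambda_i = \id_I$ exhibits $\fsub{g} \in \aff(c_\mathbf{B})$; combined with $g \in \catc(I,B)$ from step (i), this gives $g \in \affp(c_\mathbf{B}) = c_\mathbf{B}$.

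\textbf{Main obstacle.} The routine parts are the bilinearity manipulation and unpacking $\affp$. The one genuinely delicate point is step (i): ensuring that the ``common factor $\mu \cdot \maxmix_A$'' can legitimately be extracted so that $h$ itself (an actual $\catc$-morphism) has the form $\mu \cdot \maxmix_A \otimes g$ with $g \in \catc(I,B)$, rather than merely its embedding being an affine combination of product terms. This is where invertibility of $d_A$ (\ref{apc:dimension}) does the real work, by providing a left inverse to tensoring with $\maxmix_A$; I would want to state this as a small lemma (``$\mu \cdot \maxmix_A \otimes (-) : \catc(I,B) \to \catc(I, A\otimes B)$ is split mono, with retraction given by plugging $d_A^{-1} \mu^{-1} \cdot \discard_A$ into the $A$ wire'') and then the theorem is a short corollary.
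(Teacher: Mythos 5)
Your proposal is correct and follows essentially the same route as the paper: invoke Theorem~\ref{thrm:affine} to write $h$ as an affine combination of states $\mu \cdot \maxmix_A \otimes \rho_i$, recover $g$ by plugging an effect of $c_\mathbf{A}^*$ (your $d_A^{-1}\mu^{-1}\cdot\discard_A$ is exactly such an effect) into the $A$ wire, and conclude $g \in c_\mathbf{B}$ because it is itself an affine combination of the $\rho_i$. The only cosmetic difference is that you do the bookkeeping in $\sub(\catc)$ and close with faithfulness of $\fsub{-}$, whereas the paper keeps the affine combination as an equation of positive sums in $\catc$ and closes with cancellativity (Proposition~\ref{prop:apc-cancellative}).
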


This captures what \cite{Wilson2021a} refers to as the principle of ``no interaction with trivial degrees of freedom''. In particular, it recovers the precausal category axiom \ref{prec:split} by showing that every state of $\mathbf{(A^1 \multimap B^1)^*} = \mathbf{({A^1}^* \parr B^1)^*} = \mathbf{A^1 \otimes {B^1}^*}$ decomposes into a product of a state of $\mathbf{A^1}$ (i.e. a causal state) and $\maxmix_{B^*}$.


It should be noted that we have completely dropped condition \ref{prec:oneway} on our underlying category of basic processes. In Section \ref{sec:oneway} we will recover a slightly weaker version of this condition by the equivalence of one-way signalling and the affine closure of semi-localisability (Theorem \ref{thrm:one_way_self_dual_and_affine_semi_local}). Fortunately, we can still reuse the same proof to show that $\caus[\catc]$ is $*$-autonomous for additive precausal $\catc$ since it doesn't rely on \ref{prec:oneway}.


\section{Additive Types}\label{sec:additive}

We may also lean on the relation to the double glueing construction to add type constructors corresponding to the additives of linear logic. In categorical models of linear logic, additive conjunction of types is captured by cartesian product and additive disjunction by coproduct, satisfying a De Morgan duality with one another \cite{Mellies}. In terms of resources, these represent a classical choice: an instance of $A \times B$ is a single resource unit that we can choose to be used either as an instance of $A$ or an instance of $B$, whereas an instance of $A + B$ is a single resource unit which is fixed on creation as either a unit of $A$ or a unit of $B$.

The concept of classical choice is often built into operational theories in the form of probability distributions, classical outcomes of tests, and conditional tests. This is typically deemed essential for any experimentalist who is bound to classical data to interact with and make inference from an experiment. On the other hand, the $\caus$ construction concerns the underlying systems present in the physical theory without consideration of any classical agent. We can recover finite-outcome random variables by incorporating binary classical choice through new additive type constructors, i.e. finding constructions for products and coproducts in $\caus[\catc]$.

\begin{definition}
Given types $\mathbf{A} = (A, c_\mathbf{A}), \mathbf{B} = (B, c_{\mathbf{B}})$ in $\caus[\catc]$, we define $\mathbf{A \times B} := (A \oplus B, c_{\mathbf{A \times B}})$ and $\mathbf{A \oplus B} := (A \oplus B, c_{\mathbf{A \oplus B}})$ where:

\begin{equation}\label{eq:def_product}
\begin{split}
c_{\mathbf{A \times B}} &= \left( \{p_A \fatsemi \pi_A | \pi_A \in c_\mathbf{A}^* \subseteq \catc(A,I) \} \cup \{p_B \fatsemi \pi_B | \pi_B \in c_\mathbf{B}^* \subseteq \catc(B,I) \} \right)^* \\
&= \left\{ (\rho_A, \rho_B) | \rho_A \in c_\mathbf{A}, \rho_B \in c_\mathbf{B} \right\}
\end{split}
\end{equation}
\begin{equation}\label{eq:def_coproduct}
\begin{split}
c_{\mathbf{A \oplus B}} &= \left( \{ \rho_A \fatsemi \iota_A | \rho_A \in c_\mathbf{A} \} \cup \{ \rho_B \fatsemi \iota_B | \rho_B \in c_\mathbf{B} \} \right)^{**} \\
&= \left\{ \langle \pi_A, \pi_B \rangle | \pi_A \in c_\mathbf{A}^* \subseteq \catc(A,I), \pi_B \in c_\mathbf{B}^* \subseteq \catc(B,I) \right\}^*
\end{split}
\end{equation}
\end{definition}

\begin{lemma}\label{lemma:additive_defs_equivalent}
The alternative definitions of $c_{\mathbf{A \times B}}$ and $c_{\mathbf{A \oplus B}}$ are equivalent; that is, Equations \ref{eq:def_product} and \ref{eq:def_coproduct} hold.
\end{lemma}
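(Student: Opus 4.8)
The plan is to verify the two claimed identities \eqref{eq:def_product} and \eqref{eq:def_coproduct} directly from the definition of the dual $(\cdot)^*$ (Definition~\ref{def:dual}), using only that $A \oplus B$ is simultaneously the product and the coproduct of $A$ and $B$ (with projections $p_A, p_B$ and injections $\iota_A, \iota_B$) and that the state sets $c_\mathbf{A}, c_\mathbf{B}$ of objects of $\caus[\catc]$ are closed. The common principle is that a state $\omega \in \catc(I, A \oplus B)$ is recovered from its components $\omega \fatsemi p_A$ and $\omega \fatsemi p_B$ by the universal property of the product, while dually an effect $\sigma \in \catc(A \oplus B, I)$ is recovered from $\iota_A \fatsemi \sigma$ and $\iota_B \fatsemi \sigma$ by the universal property of the coproduct; in each case the orthogonality condition defining the dual set splits componentwise.

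First I would handle \eqref{eq:def_product}. Writing $\omega = (\omega \fatsemi p_A,\, \omega \fatsemi p_B)$, the condition that $\omega$ lie in $(\{p_A \fatsemi \pi_A \mid \pi_A \in c_\mathbf{A}^*\} \cup \{p_B \fatsemi \pi_B \mid \pi_B \in c_\mathbf{B}^*\})^*$ unpacks to $(\omega \fatsemi p_A) \fatsemi \pi_A = \id_I$ for all $\pi_A \in c_\mathbf{A}^*$ together with $(\omega \fatsemi p_B) \fatsemi \pi_B = \id_I$ for all $\pi_B \in c_\mathbf{B}^*$, i.e.\ $\omega \fatsemi p_A \in c_\mathbf{A}^{**}$ and $\omega \fatsemi p_B \in c_\mathbf{B}^{**}$; closedness of $c_\mathbf{A}$ and $c_\mathbf{B}$ then identifies this with the right-hand set $\{(\rho_A, \rho_B) \mid \rho_A \in c_\mathbf{A}, \rho_B \in c_\mathbf{B}\}$. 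The reverse inclusion is immediate from $(\rho_A, \rho_B) \fatsemi p_A \fatsemi \pi_A = \rho_A \fatsemi \pi_A$ and its counterpart for $B$.

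For \eqref{eq:def_coproduct} I would first establish the single-dual identity
\[
  (\{\rho_A \fatsemi \iota_A \mid \rho_A \in c_\mathbf{A}\} \cup \{\rho_B \fatsemi \iota_B \mid \rho_B \in c_\mathbf{B}\})^* = \{\langle \pi_A, \pi_B \rangle \mid \pi_A \in c_\mathbf{A}^*, \pi_B \in c_\mathbf{B}^*\},
\]
by writing an effect as $\sigma = \langle \iota_A \fatsemi \sigma,\, \iota_B \fatsemi \sigma \rangle$ and noting that $\rho_A \fatsemi \iota_A \fatsemi \sigma = \id_I$ for all $\rho_A \in c_\mathbf{A}$ says exactly $\iota_A \fatsemi \sigma \in c_\mathbf{A}^*$ (and symmetrically for $B$), whereas conversely $\rho_A \fatsemi \iota_A \fatsemi \langle \pi_A, \pi_B \rangle = \rho_A \fatsemi \pi_A$. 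Applying $(\cdot)^*$ to both sides of this identity rewrites the left-hand side as $(\{\rho_A \fatsemi \iota_A\} \cup \{\rho_B \fatsemi \iota_B\})^{**}$ and the right-hand side as $\{\langle \pi_A, \pi_B \rangle \mid \pi_A \in c_\mathbf{A}^*, \pi_B \in c_\mathbf{B}^*\}^*$, which is precisely \eqref{eq:def_coproduct}.

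I do not expect a genuine obstacle here; the step most likely to cause slips is simply the bookkeeping — keeping straight which of the two duals in play returns a set of states and which a set of effects, and using the biproduct consistently both as a product (to decompose states and to copair effects) and as a coproduct (to decompose effects and to pair states). Notably the argument invokes none of the additive precausal axioms beyond the existence of biproducts, together with the closedness already built into the objects of $\caus[\catc]$.
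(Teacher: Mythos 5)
Your proof is correct and follows essentially the same route as the paper's: decompose a state of $A \oplus B$ via the product projections (resp.\ an effect via the coproduct injections), observe that the orthogonality condition splits componentwise, use closedness of $c_\mathbf{A}, c_\mathbf{B}$ for the product case, and obtain the coproduct equation by dualising the single-dual identity. The only difference is presentational — you spell out the coproduct argument where the paper appeals to symmetry, and you make the use of $c_\mathbf{A} = c_\mathbf{A}^{**}$ explicit where the paper leaves it implicit.
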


\begin{corollary}\label{corollary:additive_demorgan}
The operators $\times$ and $\oplus$ are De Morgan duals under $(-)^*$.
\end{corollary}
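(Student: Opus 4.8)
The plan is to prove the single isomorphism $\mathbf{(A \times B)^*} \cong \mathbf{A^* \oplus B^*}$ in $\caus[\catc]$; since $(-)^*$ is an involution on $\caus[\catc]$ (each $c_\mathbf{A}$ is closed, so $\mathbf{A^{**}} = \mathbf{A}$), the companion duality $\mathbf{(A \oplus B)^*} \cong \mathbf{A^* \times B^*}$ then follows by dualising both sides and relabelling $\mathbf{A} \mapsto \mathbf{A^*}$, $\mathbf{B} \mapsto \mathbf{B^*}$. So the whole content is one equation of objects together with a remark that the witnessing isomorphism is natural.

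First I would record the purely compact-closed bookkeeping. For the biproduct $A \oplus B$ in $\catc$ there is a canonical isomorphism $(A \oplus B)^* \cong A^* \oplus B^*$ under which the transpose $p_A^{*}$ of the projection $p_A : A \oplus B \to A$ is exactly the injection $\iota_{A^*} : A^* \to A^* \oplus B^*$, and dually $\iota_A^{*} = p_{A^*}$ (and likewise for $B$); this is standard in any compact closed category with biproducts and is a short yanking-equation computation. In particular the underlying object of $\mathbf{(A \times B)^*}$, namely $(A \oplus B)^*$, is canonically identified with that of $\mathbf{A^* \oplus B^*}$, namely $A^* \oplus B^*$. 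I would also fix the state/effect transpose convention from the Remark after Definition~\ref{def:dual}, writing $\lceil g \rceil \in \catc(I, X^*)$ for the state transpose to an effect $g \in \catc(X, I)$, and note the identity $\lceil f \fatsemi g \rceil = \lceil g \rceil \fatsemi f^{*}$ for $f : X \to Y$, $g : Y \to I$, which again is immediate from yanking.

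Next I would compare the state sets. Writing $S \subseteq \catc(A \oplus B, I)$ for the generating set of effects $\{\, p_A \fatsemi \pi_A \mid \pi_A \in c_\mathbf{A}^* \,\} \cup \{\, p_B \fatsemi \pi_B \mid \pi_B \in c_\mathbf{B}^* \,\}$ from \eqref{eq:def_product}, we have $c_{\mathbf{(A \times B)^*}} = c_{\mathbf{A \times B}}^{*} = (S^{*})^{*} = S^{**}$. On the other side, applying the first line of \eqref{eq:def_coproduct} to $\mathbf{A^*}$ and $\mathbf{B^*}$ and using $c_{\mathbf{A^*}} = c_\mathbf{A}^*$, the set $c_{\mathbf{A^* \oplus B^*}}$ equals $T^{**}$ for $T = \{\, \rho \fatsemi \iota_{A^*} \mid \rho \in c_\mathbf{A}^* \,\} \cup \{\, \rho \fatsemi \iota_{B^*} \mid \rho \in c_\mathbf{B}^* \,\}$. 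The heart of the argument is then that $S$ and $T$ are the same subset of $\catc(I, A^* \oplus B^*) \cong \catc(I, (A \oplus B)^*)$: by the transpose identity and $p_A^{*} = \iota_{A^*}$ we get $\lceil p_A \fatsemi \pi_A \rceil = \lceil \pi_A \rceil \fatsemi p_A^{*} = \lceil \pi_A \rceil \fatsemi \iota_{A^*}$, and symmetrically for $B$, so taking transposes of the effects in $S$ yields exactly the states in $T$. Applying $(-)^{**}$ to this equality gives $c_{\mathbf{(A \times B)^*}} = c_{\mathbf{A^* \oplus B^*}}$, and since the canonical object isomorphism $(A \oplus B)^* \cong A^* \oplus B^*$ carries one state set onto the other it is an isomorphism in $\caus[\catc]$; naturality in $\mathbf{A}, \mathbf{B}$ is inherited from naturality of that canonical isomorphism in $\catc$.

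The only real obstacle here is the bookkeeping around conventions: one must be careful about which transpose (left versus right) is meant and about the identification of $(A \oplus B)^*$ with $A^* \oplus B^*$, and then verify the two diagrammatic facts $p_A^{*} = \iota_{A^*}$ and $\lceil f \fatsemi g \rceil = \lceil g \rceil \fatsemi f^{*}$; both are routine but they are where all the real content lies, everything else being formal manipulation of the $(-)^*$ operator. As a consistency check I would also match this against the alternative set-theoretic descriptions in \eqref{eq:def_product} and \eqref{eq:def_coproduct} (proved equivalent in Lemma~\ref{lemma:additive_defs_equivalent}): a product state $(\rho_A, \rho_B)$ of $\mathbf{A \times B}$ should transpose, under $(A \oplus B)^* \cong A^* \oplus B^*$, to a copairing effect $\langle \pi_A, \pi_B \rangle$ with $\pi_A \in c_{\mathbf{A^*}}^*$, $\pi_B \in c_{\mathbf{B^*}}^*$, which is precisely the form of the effects dual to $c_{\mathbf{A^* \oplus B^*}}$, confirming the duality from the other side.
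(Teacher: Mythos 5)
Your proposal is correct and is essentially the paper's own argument: both proofs reduce the duality to the single fact that, under the canonical compact structure $(A\oplus B)^* \cong A^*\oplus B^*$, projections transpose to injections, so the generating effects $p_A \fatsemi \pi_A$ of one side become exactly the generating states $\lceil\pi_A\rceil \fatsemi \iota_{A^*}$ of the other. The only cosmetic difference is that the paper verifies $\mathbf{(A\oplus B)^*} = \mathbf{A^*\times B^*}$ directly via the explicit closed-form descriptions from Lemma~\ref{lemma:additive_defs_equivalent}, whereas you verify the companion equation via the generating sets and their biduals and then invoke involutivity of $(-)^*$; both are valid.
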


\begin{proposition}\label{prop:product}
$\mathbf{A \times B}$ is a categorical product in $\caus[\catc]$.
\end{proposition}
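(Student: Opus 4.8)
The plan is to verify the universal property of the product directly using the underlying biproduct $A \oplus B$ in $\catc$ and the characterisation $c_{\mathbf{A \times B}} = \{(\rho_A, \rho_B) \mid \rho_A \in c_\mathbf{A}, \rho_B \in c_\mathbf{B}\}$ from Lemma~\ref{lemma:additive_defs_equivalent}. First I would identify the candidate projections: take $p_A : A \oplus B \to A$ and $p_B : A \oplus B \to B$ to be the biproduct projections of $\catc$, and check they are morphisms in $\caus[\catc]$. This is immediate from the pairing form of the state set: if $(\rho_A,\rho_B) \in c_{\mathbf{A \times B}}$, then postcomposing with $p_A$ yields $\rho_A \in c_\mathbf{A}$ (using that $p_A$ is the biproduct projection and states of $A \oplus B$ in $c_{\mathbf{A \times B}}$ are literally pairs), and symmetrically for $p_B$.

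Next, given any object $\mathbf{C} = (C, c_\mathbf{C})$ and morphisms $f : \mathbf{C} \to \mathbf{A}$, $g : \mathbf{C} \to \mathbf{B}$, I would take the pairing $\langle f, g \rangle : C \to A \oplus B$ from the biproduct structure of $\catc$ as the mediating morphism. The key point to check is that $\langle f, g \rangle$ is a morphism in $\caus[\catc]$, i.e. that for every $\rho \in c_\mathbf{C}$ we have $\rho \fatsemi \langle f, g \rangle \in c_{\mathbf{A \times B}}$. By the biproduct equations $\langle f,g\rangle \fatsemi p_A = f$ and $\langle f, g \rangle \fatsemi p_B = g$, so $\rho \fatsemi \langle f, g \rangle$ has first component $\rho \fatsemi f \in c_\mathbf{A}$ and second component $\rho \fatsemi g \in c_\mathbf{B}$; hence it lies in the pairing set $\{(\rho_A,\rho_B)\mid \rho_A \in c_\mathbf{A}, \rho_B \in c_\mathbf{B}\} = c_{\mathbf{A \times B}}$. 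Commutativity of the two triangles $\langle f, g\rangle \fatsemi p_A = f$, $\langle f, g\rangle \fatsemi p_B = g$ is then just the biproduct identity in $\catc$, which transports to $\caus[\catc]$ since the forgetful functor to $\catc$ is faithful and identity-on-morphisms. Uniqueness likewise follows: any $h : \mathbf{C} \to \mathbf{A \times B}$ with $h \fatsemi p_A = f$ and $h \fatsemi p_B = g$ must equal $\langle f, g \rangle$ already as a morphism of $\catc$, by the universal property of the biproduct there, and hence as a morphism of $\caus[\catc]$.

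The only genuinely non-formal step is confirming that a state $\sigma \in \catc(I, A\oplus B)$ lies in $c_{\mathbf{A \times B}}$ if and only if its two components $\sigma \fatsemi p_A$, $\sigma\fatsemi p_B$ lie in $c_\mathbf{A}$, $c_\mathbf{B}$ respectively — that is, that the "pair" description really coincides with the biproduct decomposition of states. I expect this to be the main obstacle, but it is exactly the content already extracted in Lemma~\ref{lemma:additive_defs_equivalent} (every state of $A \oplus B$ splits as $\langle \iota_A, \iota_B\rangle$-style pair $\rho_A \fatsemi \iota_A + \rho_B \fatsemi \iota_B$ via the biproduct, and the closure conditions defining $c_{\mathbf{A \times B}}^*$ force precisely $\rho_A \in c_\mathbf{A}$, $\rho_B \in c_\mathbf{B}$), so once that lemma is invoked the rest is bookkeeping with the biproduct equations. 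A final remark would note that $\mathbf{A \times B}$ is indeed a valid object of $\caus[\catc]$ (closed and flat), which can be cited from the surrounding development or checked quickly: closedness is built into the definition via $(-)^*$, and flatness follows since $\maxmix_{A \oplus B}$ decomposes appropriately along the biproduct and $\discard_{A \oplus B} = \langle \discard_A, \discard_B\rangle$ by \ref{apc:discard}.
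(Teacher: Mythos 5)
Your proposal is correct and follows essentially the same route as the paper: existence, uniqueness, and the commuting triangles are inherited from the biproduct in $\catc$, and the only substantive checks are that the projections and the pairing are causal, which both reduce to the pairing characterisation $c_{\mathbf{A \times B}} = \{(\rho_A,\rho_B) \mid \rho_A \in c_\mathbf{A}, \rho_B \in c_\mathbf{B}\}$ from Lemma~\ref{lemma:additive_defs_equivalent}. The only cosmetic difference is notation (the paper writes the product pairing as $(f,g)$, reserving $\langle -,- \rangle$ for the copairing), plus your extra remark on well-definedness of the object, which the paper handles elsewhere.
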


\begin{proposition}\label{prop:coproduct}
$\mathbf{A \oplus B}$ is a categorical coproduct in $\caus[\catc]$.
\end{proposition}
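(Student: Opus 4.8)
The plan is to exploit the De Morgan duality already in hand. By Corollary~\ref{corollary:additive_demorgan}, $\mathbf{A \oplus B} = \mathbf{(A^* \times B^*)^*}$, and Proposition~\ref{prop:product} tells us that $\mathbf{A^* \times B^*}$ is a categorical product. Since $\caus[\catc]$ is $*$-autonomous (hence in particular $(-)^*$ is a contravariant self-equivalence), a product in the opposite category transports to a coproduct in $\caus[\catc]$: concretely, the injections $\mathbf{\iota_A} : \mathbf{A} \to \mathbf{A \oplus B}$ and $\mathbf{\iota_B} : \mathbf{B} \to \mathbf{A \oplus B}$ are obtained by dualising the projections out of $\mathbf{A^* \times B^*}$. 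So the bulk of the proof is just checking that the underlying maps $\iota_A, \iota_B$ of $\catc$ (the biproduct injections) are legal morphisms in $\caus[\catc]$ for these chosen state sets, and that the universal morphism $[f,g] : \mathbf{A \oplus B} \to \mathbf{C}$ built from $f : \mathbf{A} \to \mathbf{C}$ and $g : \mathbf{B} \to \mathbf{C}$ — which as a $\catc$-morphism is just the copairing $\langle f, g\rangle$ out of the biproduct, or rather $[f,g]$ — preserves causal states.

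First I would record that $\iota_A$ is a morphism $\mathbf{A} \to \mathbf{A \oplus B}$: given $\rho_A \in c_\mathbf{A}$, I need $\rho_A \fatsemi \iota_A \in c_{\mathbf{A \oplus B}}$, which is immediate from the generating-set form of $c_{\mathbf{A \oplus B}}$ in Equation~\ref{eq:def_coproduct} (it contains exactly such elements before closure, and closure only enlarges the set); similarly for $\iota_B$. Next, the uniqueness half of the universal property is automatic: since $\iota_A, \iota_B$ are jointly epic already in $\catc$ (standard for biproducts, using \ref{apc:basis} or \ref{prec:enough} to detect equality of maps), any two mediating morphisms agreeing after precomposition with the injections are equal in $\catc$ and hence in $\caus[\catc]$. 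The existence half is the real content: given $f : \mathbf{A} \to \mathbf{C}$, $g : \mathbf{B} \to \mathbf{C}$, take $h := [f,g] : A \oplus B \to C$ in $\catc$, so that $\iota_A \fatsemi h = f$ and $\iota_B \fatsemi h = g$; I must show $h$ sends $c_{\mathbf{A \oplus B}}$ into $c_\mathbf{C}$.

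Here I would use the \emph{dual} (effect-side) description of $c_{\mathbf{A \oplus B}}$, namely $c_{\mathbf{A \oplus B}} = \{ \langle \pi_A, \pi_B \rangle \mid \pi_A \in c_\mathbf{A}^*, \pi_B \in c_\mathbf{B}^* \}^*$, i.e. a state $\sigma$ on $A \oplus B$ is causal iff $\sigma \fatsemi \langle \pi_A, \pi_B\rangle = \id_I$ for all such $\pi_A, \pi_B$. To check $\sigma \fatsemi h \in c_\mathbf{C}$, I must check $(\sigma \fatsemi h) \fatsemi \pi = \id_I$ for every $\pi \in c_\mathbf{C}^*$. Now $f : \mathbf{A} \to \mathbf{C}$ being a morphism means $f \fatsemi \pi \in c_\mathbf{A}^*$ for every $\pi \in c_\mathbf{C}^*$ (this is the contrapositive/adjoint form of preserving causal states, valid by \ref{prec:enough}/\ref{apc:basis} and closedness of the state sets), and likewise $g \fatsemi \pi \in c_\mathbf{B}^*$. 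Since $h \fatsemi \pi = [f,g] \fatsemi \pi = \langle f \fatsemi \pi, g \fatsemi \pi \rangle$ (the copairing composed with an effect is the pairing of the composed effects — a biproduct identity), we get $h \fatsemi \pi \in \{\langle \pi_A, \pi_B\rangle \mid \pi_A \in c_\mathbf{A}^*, \pi_B \in c_\mathbf{B}^*\}$, hence $\sigma \fatsemi (h \fatsemi \pi) = \id_I$ by definition of $\sigma \in c_{\mathbf{A \oplus B}}$. This gives $\sigma \fatsemi h \in c_\mathbf{C}^{**} = c_\mathbf{C}$, completing existence.

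The main obstacle, and the one step I would be most careful about, is the bookkeeping around passing between the ``preserves states'' and ``preserves effects'' formulations of a $\caus[\catc]$-morphism — i.e. justifying that $f : \mathbf{A} \to \mathbf{C}$ implies $\pi \in c_\mathbf{C}^* \Rightarrow f \fatsemi \pi \in c_\mathbf{A}^*$ — which relies on having enough causal states (or a causal basis) together with $c_\mathbf{A}, c_\mathbf{C}$ being closed, and on using Lemma~\ref{lemma:additive_defs_equivalent} so that both descriptions of $c_{\mathbf{A \oplus B}}$ are available simultaneously. Everything else is a direct application of the biproduct identities in $\catc$ together with the already-established duality and the $*$-autonomous structure; in fact, once Proposition~\ref{prop:product} and Corollary~\ref{corollary:additive_demorgan} are in place, one could alternatively dispatch the whole statement abstractly by dualising, and I would likely present that short argument as the primary proof with the concrete verification above as the unpacking of what the dualisation amounts to.
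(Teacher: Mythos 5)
Your proposal is correct and follows essentially the same route as the paper, which simply dualises the proof of Proposition~\ref{prop:product} via the De Morgan duality of Corollary~\ref{corollary:additive_demorgan} and checks that the injections and the copairing $\langle f,g\rangle$ are causal. The one step you flag as delicate is in fact immediate: $f:\mathbf{A}\to\mathbf{C}$ gives $\rho\fatsemi(f\fatsemi\pi)=(\rho\fatsemi f)\fatsemi\pi=\id_I$ for all $\rho\in c_\mathbf{A}$ and $\pi\in c_\mathbf{C}^*$, so $f\fatsemi\pi\in c_\mathbf{A}^*$ directly from Definition~\ref{def:dual}, with no appeal to \ref{apc:basis} or closedness needed.
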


The zero object $0$ has a unique state $0_{I,0} \in \catc(I,0)$ by terminality and a unique effect $0_{0,I} \in \catc(0,I)$ by initiality. There are only two candidates for causal sets: the empty set $\emptyset$ and the singleton $\{0_{I,0}\}$. These have roles in our causal category as the additive units.

\begin{proposition}\label{prop:initial_terminal}
The initial object in $\caus[\catc]$ is $\mathbf{0} := (0, \emptyset)$ and the terminal object is $\mathbf{1} := (0, \{0_{I,0}\})$. Furthermore, they are duals of each other and are units for $\mathbf{\oplus}$ and $\mathbf{\times}$ respectively.
\end{proposition}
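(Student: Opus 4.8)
The plan is to check in turn that $\mathbf{0}$ and $\mathbf{1}$ are well-defined objects of $\caus[\catc]$, that they carry the claimed universal properties, and then to read off the duality and unit statements, the last being essentially free. For well-definedness, both candidates have underlying system the zero object $0$, which is a zero system, so the flatness clause holds vacuously and only closedness ($c = c^{**}$) must be verified. Here I would use that $0$ is simultaneously initial and terminal in $\catc$, so each of $\catc(I,0)$, $\catc(0,I)$, $\catc(I,0^*)$, $\catc(0^*,I)$ is a singleton. Dualising $\emptyset \subseteq \catc(I,0)$ gives $\emptyset^* = \catc(0,I) = \{0_{0,I}\}$ since the defining condition is vacuous, and then $\{0_{0,I}\}^* = \{\rho \in \catc(I,0) \mid \rho \fatsemi 0_{0,I} = \id_I\}$; but $\rho \fatsemi 0_{0,I}$ factors through $0$ and so equals the zero scalar, which is distinct from $\id_I$ provided $\catc$ is non-degenerate ($I \not\cong 0$), whence this set is $\emptyset$. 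Thus $\emptyset^{**} = \emptyset$ and $\mathbf{0}$ is closed; symmetrically $\{0_{I,0}\}^* = \emptyset$ (the unique candidate effect composes with $0_{I,0}$ to the zero scalar, not $\id_I$) and $\emptyset^* = \catc(I,0) = \{0_{I,0}\}$, so $\mathbf{1}$ is closed. Since $\emptyset \neq \{0_{I,0}\}$, these realise the two closed-and-flat sets on $0$ anticipated in the surrounding text.

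For the universal properties, a morphism $\mathbf{0} \to \mathbf{B}$ is a $\catc$-morphism $0 \to B$ subject to a constraint quantified over $\emptyset$, hence no constraint at all; as $0$ is initial in $\catc$ there is exactly one such morphism, so $\mathbf{0}$ is initial in $\caus[\catc]$. A morphism $\mathbf{B} \to \mathbf{1}$ is a $\catc$-morphism $g : B \to 0$ with $\rho \fatsemi g \in \{0_{I,0}\}$ for every $\rho \in c_\mathbf{B}$; since $0$ is terminal in $\catc$ there is a unique candidate $g$, and by terminality $\rho \fatsemi g = 0_{I,0}$ automatically, so the constraint is satisfied. Hence $\mathbf{1}$ is terminal.

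For duality, unfolding definitions gives $\mathbf{0}^* = (0^*, \emptyset^*) = (0^*, \{0_{0,I}\})$, and transporting the unique effect $0_{0,I}$ through the compact-closed transpose identifies it with the unique state of $0^* \cong 0$, so $\mathbf{0}^* \cong \mathbf{1}$; likewise $\mathbf{1}^* = (0^*, \{0_{I,0}\}^*) = (0^*, \emptyset) \cong \mathbf{0}$, consistent with $(-)^{**} \cong \id$. The unit statements then follow from general category theory together with Propositions~\ref{prop:product} and~\ref{prop:coproduct}: an initial object is a unit for any coproduct, giving $\mathbf{0} \oplus \mathbf{B} \cong \mathbf{B}$, and a terminal object is a unit for any product, giving $\mathbf{1} \times \mathbf{B} \cong \mathbf{B}$.

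There is no substantial obstacle here; the content is bookkeeping. The only points needing care are keeping the compact-closed transpose conventions straight when forming $(-)^*$, and the implicit non-degeneracy hypothesis $I \not\cong 0$ — precisely what prevents $\mathbf{0}$ and $\mathbf{1}$ from collapsing to the full hom-set (and from coinciding with one another) — which holds in all the intended models $\MatRp$, $\MatRaff$, and $\CPs$.
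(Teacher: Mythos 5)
Your proof is correct and follows essentially the same route as the paper: vacuous/automatic causality for the unique zero-morphisms gives initiality and terminality, the dual sets are computed directly to get $\mathbf{0}^* = \mathbf{1}$ and $\mathbf{1}^* = \mathbf{0}$, and the unit claims are the standard facts about initial/terminal objects. The one small difference is that where you posit a non-degeneracy hypothesis $I \not\cong 0$ to conclude $0_{I,I} \neq \id_I$, the paper derives this from Proposition~\ref{prop:subc-field} (the scalars of $\sub(\catc)$ form a field, so its zero and unit are distinct), so no extra assumption is needed.
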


Thinking of first-order types as describing systems with no input (i.e. no choice in how to consume them), both the product and coproduct have interesting interactions with first-order types because of where the classical choice happens. For coproducts, the choice is already fixed in the creation of a state so we expect it to preserve the first-order property. However, products introduce freedom of choice in effects allowing us to view the projectors as inputs to the system dictating whether it should prepare the left or the right state.



\begin{proposition}\label{prop:coproduct_first_order}
If $\mathbf{A}$ and $\mathbf{B}$ are both first-order types, then so is $\mathbf{A \oplus B}$.
\end{proposition}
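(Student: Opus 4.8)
The plan is to prove the slightly sharper statement that $c_{\mathbf{A\oplus B}} = \{\discard_{A\oplus B}\}^*$ on the nose, i.e.\ that $\mathbf{A\oplus B}$ coincides with the first-order type on the object $A\oplus B$ whenever $c_\mathbf{A} = \{\discard_A\}^*$ and $c_\mathbf{B} = \{\discard_B\}^*$. The first step is to record a convenient reformulation of ``first-order''. Since the causal set of any object of $\caus[\catc]$ is closed, a type $\mathbf{X}$ is first-order iff $c_\mathbf{X}^* = \{\discard_X\}$: indeed, if $c_\mathbf{X} = \{\discard_X\}^*$ then $c_\mathbf{X}^* = \{\discard_X\}^{**}$, and the non-trivial inclusion $\{\discard_X\}^{**}\subseteq\{\discard_X\}$ follows from \ref{apc:basis} (already \ref{prec:enough} suffices), since any effect $\pi$ with $\rho\fatsemi\pi = \id_I = \rho\fatsemi\discard_X$ for all causal $\rho$ must equal $\discard_X$ by ``enough causal states'' applied with codomain $I$; conversely $c_\mathbf{X}^* = \{\discard_X\}$ gives $c_\mathbf{X} = c_\mathbf{X}^{**} = \{\discard_X\}^*$ by closedness.

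Next I would compute $c_{\mathbf{A\oplus B}}^*$. Writing $S := \{\rho_A\fatsemi\iota_A \mid \rho_A\in c_\mathbf{A}\}\cup\{\rho_B\fatsemi\iota_B \mid \rho_B\in c_\mathbf{B}\}$, the first presentation in \eqref{eq:def_coproduct} says $c_{\mathbf{A\oplus B}} = S^{**}$, so by the usual triple-dual collapse $c_{\mathbf{A\oplus B}}^* = S^{***} = S^*$. Unfolding Definition~\ref{def:dual} and using associativity of composition, an effect $\pi : A\oplus B\to I$ lies in $S^*$ exactly when $\rho_A\fatsemi(\iota_A\fatsemi\pi) = \id_I$ for all $\rho_A\in c_\mathbf{A}$ and $\rho_B\fatsemi(\iota_B\fatsemi\pi) = \id_I$ for all $\rho_B\in c_\mathbf{B}$; that is, exactly when the two ``restrictions'' $\iota_A\fatsemi\pi$ and $\iota_B\fatsemi\pi$ belong to $c_\mathbf{A}^*$ and $c_\mathbf{B}^*$ respectively.

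Now I would feed in the hypothesis. By the first step, $c_\mathbf{A}^* = \{\discard_A\}$ and $c_\mathbf{B}^* = \{\discard_B\}$, so $\pi\in c_{\mathbf{A\oplus B}}^*$ iff $\iota_A\fatsemi\pi = \discard_A$ and $\iota_B\fatsemi\pi = \discard_B$. By the universal property of the biproduct $A\oplus B$ (which furnishes the coproduct, so an effect out of $A\oplus B$ is determined by its composites with $\iota_A$ and $\iota_B$), there is at most one such $\pi$, and \ref{apc:discard} identifies it as $\discard_{A\oplus B} = \langle\discard_A,\discard_B\rangle$. Hence $c_{\mathbf{A\oplus B}}^* = \{\discard_{A\oplus B}\}$, so by the first step $\mathbf{A\oplus B}$ is first-order.

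I do not anticipate a genuine obstacle: the content is almost entirely the bookkeeping of duals together with the two structural facts ``first-order $\Leftrightarrow$ dual set is the discard singleton'' and ``discarding is compatible with biproducts''. The only points needing mild care are that the manipulations of $(-)^*$ remain valid in degenerate cases (e.g.\ when $A$ or $B$ is a zero object), and pinning down that the identification $\catc(A\oplus B, I)\cong\catc(A,I)\times\catc(B,I)$ is precisely what justifies the ``at most one $\pi$'' step. As a sanity check and an alternative route, one can argue dually via Corollary~\ref{corollary:additive_demorgan}: $\mathbf{(A\oplus B)^*} = \mathbf{A^*\times B^*}$, and by \eqref{eq:def_product} its causal set is $\{(\rho,\sigma)\mid\rho\in c_{\mathbf{A^*}},\,\sigma\in c_{\mathbf{B^*}}\}$; since $c_{\mathbf{A^*}} = c_\mathbf{A}^* = \{\discard_A\}$ and $c_{\mathbf{B^*}} = c_\mathbf{B}^* = \{\discard_B\}$ are singletons, this is the singleton $\{(\discard_A,\discard_B)\}$, which under compact closure is exactly $\discard_{A\oplus B}$ — landing again at $c_{\mathbf{A\oplus B}}^* = \{\discard_{A\oplus B}\}$.
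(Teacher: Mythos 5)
Your proof is correct and follows essentially the same route as the paper's: the paper simply invokes the second presentation in \eqref{eq:def_coproduct} to get $c_{\mathbf{A\oplus B}} = \{\langle\pi_A,\pi_B\rangle \mid \pi_A\in c_\mathbf{A}^*,\ \pi_B\in c_\mathbf{B}^*\}^* = \{\langle\discard_A,\discard_B\rangle\}^* = \{\discard_{A\oplus B}\}^*$ via \ref{apc:discard}, which is exactly the content of your computation of $S^*$ (and of your ``alternative route''). The only difference is that you spell out the step $c_\mathbf{A}^* = \{\discard_A\}$ via enough causal states, which the paper leaves implicit.
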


\begin{proposition}\label{prop:product_not_first_order}
If $A$ and $B$ are non-zero, then $\mathbf{A \times B}$ is never a first-order type.
\end{proposition}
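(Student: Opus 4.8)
The plan is to separate $\mathbf{A \times B}$ from the first-order types by counting causal \emph{effects}. A first-order type $\mathbf{Y^1} = (Y, \{\discard_Y\}^*)$ has exactly one causal effect, namely $\discard_Y$ itself, whereas a product $\mathbf{A \times B}$ always carries (at least) two distinct ones — the ``left-branch'' and ``right-branch'' tests that read off which summand of the classical choice is live. Since the cardinality of $c_{\mathbf{X}}^*$ is an isomorphism invariant in $\caus[\catc]$ (an iso $\phi : \mathbf{X} \to \mathbf{Y}$ induces a bijection $c_{\mathbf{X}}^* \to c_{\mathbf{Y}}^*$ via $\pi \mapsto \phi^{-1} \fatsemi \pi$), it is enough to establish these two facts.

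First I would pin down the first-order side. For non-zero $Y$ the set $\{\discard_Y\}^*$ of causal states is non-empty, since it contains $d_Y^{-1} \cdot \maxmix_Y$. Given any $\pi \in c_{\mathbf{Y^1}}^* = \{\discard_Y\}^{**}$, we then have $\rho \fatsemi \pi = \id_I = \rho \fatsemi \discard_Y$ for every causal $\rho$, so by ``enough causal states'' (axiom \ref{apc:basis}, which strengthens \ref{prec:enough}) we conclude $\pi = \discard_Y$. Hence $c_{\mathbf{Y^1}}^* = \{\discard_Y\}$ is a singleton, and so is $c_{\mathbf{X}}^*$ for any $\mathbf{X}$ isomorphic to some $\mathbf{Y^1}$.

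Next I would handle the product side. By the first description in Equation~\ref{eq:def_product}, $c_{\mathbf{A \times B}} = S^*$ where $S$ contains $p_A \fatsemi \pi_A$ for each $\pi_A \in c_{\mathbf{A}}^*$ and $p_B \fatsemi \pi_B$ for each $\pi_B \in c_{\mathbf{B}}^*$; hence $S \subseteq S^{**} = c_{\mathbf{A \times B}}^*$. Flatness of the non-zero objects $\mathbf{A}$ and $\mathbf{B}$ supplies invertible scalars with $\mu_A \cdot \discard_A \in c_{\mathbf{A}}^*$ and $\mu_B \cdot \discard_B \in c_{\mathbf{B}}^*$, and $\discard_B \neq 0$ because otherwise $d_B = \maxmix_B \fatsemi \discard_B = 0$ would contradict \ref{apc:dimension}. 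The two effects $p_A \fatsemi (\mu_A \cdot \discard_A)$ and $p_B \fatsemi (\mu_B \cdot \discard_B)$ therefore both lie in $c_{\mathbf{A \times B}}^*$, and they are distinct: precomposing with the injection $\iota_B$ and using the biproduct identities $\iota_B \fatsemi p_A = 0$ and $\iota_B \fatsemi p_B = \id_B$ produces $0$ from the former and $\mu_B \cdot \discard_B \neq 0$ from the latter. So $|c_{\mathbf{A \times B}}^*| \geq 2$, and $\mathbf{A \times B}$ cannot be (isomorphic to) a first-order type.

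The only substantive point is the first step: making precise that ``first-order'' pins the dual set down to exactly $\{\discard\}$, which is where enough causal states is used (together with the observation that the causal-state set is non-empty, so the argument is not vacuous). Everything else is bookkeeping with $(-)^*$, flatness, and the defining equations of the biproduct $A \oplus B$; in particular the genuine inequality of the two branch-effects as morphisms of $\catc$ — rather than as mere labels in the generating set $S$ — falls out immediately from $\iota_B \fatsemi p_A = 0$.
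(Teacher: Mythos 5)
Your proposal is correct and takes essentially the same route as the paper: both exhibit two distinct causal effects on $\mathbf{A \times B}$ (scalar multiples of $p_A \fatsemi \discard_A$ and $p_B \fatsemi \discard_B$, obtained from flatness) and contrast this with the fact that a first-order type has $\{\discard\}$ as its only causal effect. The only cosmetic differences are that you separate the two effects by precomposing with the injection $\iota_B$ where the paper uses the state $(\rho_A, 0_{I,B})$, and you make explicit the singleton-dual-set fact (and its invariance under isomorphism) that the paper leaves implicit.
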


We have both $*$-autonomy and all finite products and coproducts, which is all the evidence required to show that $\caus[\catc]$ is a model of linear logic with additives. For both the multiplicative and additive structures, we found that the $\caus$ construction took a degenerative categorical structure and generated non-degenerate structures from them. For example, the compact closure of $\catc$ means that it is a $*$-autonomous category in which the two monoidal products are the same, but $\caus[\catc]$ is $*$-autonomous with distinct $\mathbf{\otimes}$ and $\mathbf{\parr}$. Similarly, the construction for the additives takes biproducts/the zero object and yields distinct products and coproducts/initial and terminal objects. The degenerate exception here is that $\mathbf{I}$ is still the unit for both $\mathbf{\otimes}$ and $\mathbf{\parr}$.

\section{One-Way Signalling Types}\label{sec:oneway}

When examining multi-partite systems, causal structures can be investigated from the perspective of information signalling (which parties can observe changes to another party's inputs) or decompositions (does the channel admit a decomposition into local channels compatible with some configuration of time- and space-like separations between the parties). In the bipartite case, we can compare these perspectives with the examples of one-way signalling and semi-localisable channels.

The one-way signalling causal structure refers to a bipartite system (say, between Alice and Bob) in which causal influence may only exist in one direction. There is a standard definition given in causality literature for one-way signalling for quantum channels.

\begin{definition}\label{def:OneWay}
A bipartite process is one-way signalling (Alice to Bob) if discarding Bob's output admits a decomposition into local processes.

\begin{equation*}
\hfill
\tikzfig{one_way_def} = \tikzfig{one_way_def2}
\hfill
\end{equation*}
\end{definition}

Semi-localisability is an alternative to one-way signalling for expressing compatibility with a time-like separation of parties, giving a constructive example of how the combined channel can be decomposed into local operations.

\begin{definition}
A bipartite channel between Alice and Bob is semi-localisable (with Alice before Bob) if it decomposes into local processes with a channel from Alice to Bob.

\begin{center}
\tikzfig{semi_local}
\end{center}
\end{definition}

Both of these properties state that the channel is compatible with the setting where Bob is in Alice's future light cone. It is important to note that both are judgements of compatibility with a causal structure rather than inference of any necessary causal relationship between Alice and Bob - completely local processes trivially satisfy these properties but obviously have no causal influence between the parties.

These definitions are specific to first-order channels, but we can consider situations where Alice and Bob have higher-order systems $\mathbf{A}, \mathbf{B} \in \ob(\caus[\catc])$ representing some more complex interaction with their environments. We will describe these properties as constructions on the causal sets.

The essence of one-way signalling is that Alice's local effect is independent of any input given to Bob by his context:
\begin{equation}
  \hfill
  c_\mathbf{A} < c_\mathbf{B} := \left\{ h \in c_{\mathbf{A \parr B}} \middle| \exists m \in c_\mathbf{A} . \forall \pi \in c_{\mathbf{B}^*} . \tikzfig{one_way_HO} = \tikzfig{one_way_HO2} \right\}
  \hfill
\end{equation}
When this property holds, we refer to $m$ as the \textit{(left-)residual} of $h$ (with \textit{right-residual} for the corresponding component in the symmetrically-defined $c_\mathbf{A} > c_\mathbf{B}$). The residual represents the constant local effect Alice observes regardless of inputs provided to Bob. In the case where $\mathbf{A}$ and $\mathbf{B}$ both describe first-order channels, this exactly reduces to Definition \ref{def:OneWay}, since the only causal contexts for a causal channel is to provide an arbitrary state at the input and discard the output.

For semi-localisability, it is not enough to consider a factorisation by any system, but specifically by a first-order system:
\begin{equation}
  \hfill
  c_\mathbf{A} \triangleleft c_\mathbf{B} := \left\{h \in c_{\mathbf{A \parr B}} \middle| \begin{array}{l} \exists \mathbf{Z} = \left( Z, \left\{ \discard_Z \right\}^* \right), m \in c_{\mathbf{A \parr Z}}, n \in c_{\mathbf{Z^* \parr B}} . \\ \tikzfig{semi_local_HO} = \tikzfig{semi_local_HO2} \end{array} \right\}
  \hfill
\end{equation}
This is because using a higher-order system (e.g. the dual of a first-order system) may allow information to flow in the opposite direction. Again, we recover the original definition of semi-localisability if we fix $\mathbf{A}$ and $\mathbf{B}$ to types of first-order channels (up to our encoding of channels via the Choi-Jamio\l{}kowski isomorphism).

Intuitively, based on the interpretation of the sequence operator in a $\bv$-category as representing an ordered combination of systems, both of these would be good candidates for sequence operator in $\caus[\catc]$. It is also interesting to compare them to sequence operators from other $\bv$-categories. The following construction is inspired by that of the $\bv$-category of probabilistic coherence spaces \cite{Blute2012}:
\begin{equation}
  \hfill
  c_\mathbf{A} \obslash c_\mathbf{B} := \left\{ h \in c_{\mathbf{A \parr B}} \middle| \begin{array}{l} \exists \mathcal{I}, \{f_i\}_{i \in \mathcal{I}} \subseteq \sub(\catc)(I, A \otimes B), \{g_i\}_{i \in \mathcal{I}} \subseteq c_\mathbf{B}, f \in c_\mathbf{A} . \\ \fsub{h} \sim \sum_{i \in \mathcal{I}} f_i \otimes \fsub{g_i} \wedge \fsub{f} \sim \sum_{i \in \mathcal{I}} f_i \end{array} \right\}
  \hfill
\end{equation}
Note this definition refers to maps in the subtractive closure $\sub(\catc)$ (cf. Appendix~\ref{sec:subtraction}) to make use of the affine-linear structure.



A core result in the field of causal structures is the equivalence of one-way signalling and semi-localisability for first-order channels \cite{Chiribella2010}. In the setting provided by the $\caus$ construction, we find a higher-order generalisation of this, where all three of these definitions coincide up to affine closure, as well as a proof that they are self-dual properties.


\begin{theorem}\label{thrm:one_way_self_dual_and_affine_semi_local}
$c_\mathbf{A} < c_\mathbf{B} = \left(c_\mathbf{A}^* < c_\mathbf{B}^*\right)^* = \left(c_\mathbf{A} \triangleleft c_\mathbf{B}\right)^{**} = c_\mathbf{A} \obslash c_\mathbf{B}$
\end{theorem}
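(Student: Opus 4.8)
The plan is to prove the chain of equalities
$$c_\mathbf{A} < c_\mathbf{B} = \left(c_\mathbf{A}^* < c_\mathbf{B}^*\right)^* = \left(c_\mathbf{A} \triangleleft c_\mathbf{B}\right)^{**} = c_\mathbf{A} \obslash c_\mathbf{B}$$
by establishing a cycle of inclusions, routing through whichever description is most convenient for each step and using Theorem~\ref{thrm:affine} to convert double-dual closures into affine closures $\affp(-)$. Throughout, I will freely pass through the subtractive closure $\sub(\catc)$, since all four sets live inside $c_{\mathbf{A \parr B}}$, which is flat, so $(-)^{**} = \affp(-)$ there.

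First I would show the two ``easy'' inclusions $c_\mathbf{A} \triangleleft c_\mathbf{B} \subseteq c_\mathbf{A} < c_\mathbf{B}$ and $c_\mathbf{A} \obslash c_\mathbf{B} \subseteq c_\mathbf{A} < c_\mathbf{B}$. For the semi-local one: given a factorisation of $h$ through a first-order $\mathbf{Z}$ via $m \in c_{\mathbf{A \parr Z}}$, $n \in c_{\mathbf{Z^* \parr B}}$, compose with any $\pi \in c_{\mathbf{B}^*}$ on Bob's wire; since $\mathbf{Z}$ is first-order, $n$ composed with $\pi$ and with $\discard_Z$ on the $Z^*$ side yields $\discard_Z$ back (this is exactly causality of the channel $n : Z \to B$ feeding into $\pi$), leaving an $m$-shaped state on $\mathbf{A \parr Z}$ discarded on $Z$, i.e.\ a fixed residual $\mathfrak{m} \in c_\mathbf{A}$ independent of $\pi$. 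The sum-description $\obslash$ gives the residual directly as $f$. Next, since $c_\mathbf{A} < c_\mathbf{B}$ is closed (it is cut out by closed conditions — an intersection of preimages of the closed set $\{m\}$ under the affine maps $h \mapsto (h \text{ with } \pi \text{ on } B)$), both inclusions lift to $(c_\mathbf{A} \triangleleft c_\mathbf{B})^{**} \subseteq c_\mathbf{A} < c_\mathbf{B}$ and $c_\mathbf{A} \obslash c_\mathbf{B} \subseteq c_\mathbf{A} < c_\mathbf{B}$ once I check $c_\mathbf{A} \obslash c_\mathbf{B}$ is already closed, which follows because the sum-condition is manifestly affine-combination-stable.

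The real content is the reverse direction: $c_\mathbf{A} < c_\mathbf{B} \subseteq c_\mathbf{A} \obslash c_\mathbf{B}$ and $c_\mathbf{A} < c_\mathbf{B} \subseteq (c_\mathbf{A} \triangleleft c_\mathbf{B})^{**}$. Here I would take $h \in c_\mathbf{A} < c_\mathbf{B}$ with residual $m$ and work in $\sub(\catc)$, where by Proposition~\ref{prop:subc-field} everything is a vector space over the field $K$. Using a finite causal basis for $B$ (axiom \ref{apc:basis}) together with Lemma~\ref{lemma:extend_basis}, fix a basis $\{\rho_i\}$ of states for $B$ with dual basis of effects $\{\pi_i\}$ in $\sub(\catc)$; expand $\fsub{h} = \sum_i f_i \otimes \fsub{\rho_i}$ where $f_i := (\id_A \otimes \pi_i)(h) \in \sub(\catc)(I,A)$. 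The one-way condition, applied to the causal effects in $c_{\mathbf{B}^*}$ (and extended by affine-linearity to the spanning effects $\pi_i$, using that the $\pi_i$ are affine combinations of elements of $c_{\mathbf{B}^*}$ after renormalising via \ref{apc:effects} — this is where the ``complemented effects'' axiom earns its keep), forces $\sum_i \lambda_i f_i = m$ whenever $\sum_i \lambda_i \pi_i$ is (a scalar multiple of) $\discard_B$; in particular $\sum_i f_i = \fsub{m}$. This is exactly the witnessing data for $h \in c_\mathbf{A} \obslash c_\mathbf{B}$ provided each $\rho_i$ (hence each $g_i$) can be taken in $c_\mathbf{B}$, which holds since causal bases consist of causal states; and $m \in c_\mathbf{A}$ by hypothesis. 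For the semi-localisability inclusion, I would bundle the index set into a first-order system $\mathbf{Z} = (Z, \{\discard_Z\}^*)$ with $\dim Z = |\mathcal{I}|$, defining $n : Z \to B$ to send the $i$-th basis state to $g_i$ and $m' \in c_{\mathbf{A \parr Z}}$ to be $\sum_i f_i \otimes \fsub{\delta_i}$; the residual condition $\sum_i f_i = \fsub{m}$ is precisely what makes $m'$ lie in $c_{\mathbf{A \parr Z}}$ and makes the composite equal $h$ — but a priori $m'$ and $n$ only live in $\sub(\catc)$, which is why this inclusion only lands in $(c_\mathbf{A} \triangleleft c_\mathbf{B})^{**}$ rather than $c_\mathbf{A} \triangleleft c_\mathbf{B}$ on the nose.

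The main obstacle I anticipate is the bookkeeping around positivity/subtractive closure in the $\obslash$ and $\triangleleft$ directions: the expansion $\fsub{h} = \sum_i f_i \otimes \fsub{\rho_i}$ naturally produces $f_i$ that are genuine morphisms in $\catc$ only if the basis $\{\rho_i\}$ can be chosen so that the dual effects act ``positively'' enough on the relevant cone — in general the $f_i$ are only in $\sub(\catc)$, and one must argue that the \emph{witnessing sum} can be rearranged (absorbing signs into the $g_i$ and re-splitting) so that the $g_i$ genuinely lie in $c_\mathbf{B}$ and $f$ in $c_\mathbf{A}$, which is the whole point of phrasing $\obslash$ with $f_i \in \sub(\catc)$ but $g_i \in c_\mathbf{B}$, $f \in c_\mathbf{A}$. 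Relatedly, closing the circle requires the self-duality statement $c_\mathbf{A} < c_\mathbf{B} = (c_\mathbf{A}^* < c_\mathbf{B}^*)^*$, which I would get almost for free from the $\obslash$ description: the sum-form is visibly symmetric under $(-)^*$ once one notes that an affine decomposition $\fsub{h} \sim \sum_i f_i \otimes \fsub{g_i}$ with $\fsub{f} \sim \sum_i f_i$ dualizes to the same shape with the roles of states and effects on each factor swapped, using flatness of $c_{\mathbf{A \parr B}}$ and $c^{**} = c$ for closed $c$; so I would prove $c_\mathbf{A} \obslash c_\mathbf{B} = (c_\mathbf{A}^* \obslash c_\mathbf{B}^*)^*$ directly and then transport it along the already-established equality $c_\mathbf{A} < c_\mathbf{B} = c_\mathbf{A} \obslash c_\mathbf{B}$.
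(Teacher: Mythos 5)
Your overall architecture (a cycle of inclusions, doing the easy directions $c_\mathbf{A} \triangleleft c_\mathbf{B} \subseteq c_\mathbf{A} < c_\mathbf{B}$ and $c_\mathbf{A} \obslash c_\mathbf{B} \subseteq c_\mathbf{A} < c_\mathbf{B}$ first and then upgrading via closedness) matches the paper's, but two of your steps have genuine gaps. The first is in the key inclusion $c_\mathbf{A} < c_\mathbf{B} \subseteq c_\mathbf{A} \obslash c_\mathbf{B}$: you expand $\fsub{h} \sim \sum_i f_i \otimes \fsub{\rho_i}$ over a causal basis $\{\rho_i\}$ of $B$ and claim the $\rho_i$ can serve as the $g_i$ ``since causal bases consist of causal states''. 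The basis supplied by \ref{apc:basis} lies in $\left\{\discard_B\right\}^*$, the \emph{first-order} causal states, not in $c_\mathbf{B}$; for higher-order $\mathbf{B}$ these are different sets, and the definition of $\obslash$ demands $g_i \in c_\mathbf{B}$. This is not the positivity bookkeeping you anticipate: if, say, $c_\mathbf{B} = \{\lambda \cdot \maxmix_B\}$ (the dual of a first-order type), the only admissible $g_i$ is the single state $\lambda \cdot \maxmix_B$, and no basis of $B$ can occupy that slot, nor can any rescaling or noise-adding repair it. The paper's proof (Lemma~\ref{lemma:one_way_equals_pcs}) expands over a basis of $A$ instead, so the basis states land in the unconstrained $f_i$ slot, and the induced $B$-components $g_i := (\fsub{m} \fatsemi e_i)^{-1} \cdot (\fsub{h} \fatsemi (e_i \otimes \id_B))$ are normalised against all of $c_\mathbf{B}^*$ precisely \emph{because} of the one-way signalling hypothesis; Proposition~\ref{prop:bintests} then repairs positivity. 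Your expansion direction gives you no access to that normalisation.

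The second gap is the self-duality $c_\mathbf{A} < c_\mathbf{B} = \left(c_\mathbf{A}^* < c_\mathbf{B}^*\right)^*$, which you propose to get ``almost for free'' because the sum form of $\obslash$ is ``visibly symmetric'' under $(-)^*$. Symmetry of the defining formula only yields the easy inclusion $c_\mathbf{A} \obslash c_\mathbf{B} \subseteq \left(c_\mathbf{A}^* \obslash c_\mathbf{B}^*\right)^*$ by pairing decompositions (the paper's Lemma~\ref{lemma:one_way_normalises_pcs}). The converse --- that every state normalising all of $c_\mathbf{A}^* < c_\mathbf{B}^*$ is itself one-way signalling --- is the substantive half and does not follow from the shape of the formula or from $c^{**} = c$ for closed $c$. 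The paper proves it as Lemma~\ref{lemma:normalisers_of_semi_local_are_one_way}, by building, for each pair $b, b' \in c_\mathbf{B}$, an explicit element of $c_\mathbf{A} \triangleleft c_\mathbf{B}$ mediated by the first-order system $\mathbf{I \oplus I}$ and invoking Theorem~\ref{thrm:no_interaction_with_trivial} to force the residual on the $\mathbf{A}$ side to be independent of the choice of $b, b'$. Some argument of this kind, crossing from the dual description back to the one-way condition, is indispensable and is missing from your proposal.
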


$c_\mathbf{A} < c_\mathbf{B}$ is closed, since it the dual of another set, and flat, since $c_{\mathbf{A \otimes B}} \subseteq c_\mathbf{A} < c_\mathbf{B}$ gives the uniform state and $c_\mathbf{A} < c_\mathbf{B} \subseteq c_{\mathbf{A \parr B}}$ gives discarding. We can therefore elevate it to a genuine object $\mathbf{A < B} := (A \otimes B, c_\mathbf{A} < c_\mathbf{B})$ in $\caus[\catc]$.

We can go further into the categorical structure induced by this type constructor and show that it adds another monoidal structure with a weak interchange with both $\mathbf{\otimes}$ and $\mathbf{\parr}$, turning $\caus[\catc]$ into a model of $\bv$-logic.



\begin{definition}
Given a symmetric, linearly distributive category $\mathcal{D}$, a \textit{weak interchange} is an additional monoidal structure $(\mathcal{D}, \obslash, I_\obslash)$ with natural transformations
\begin{equation*}
  \begin{array}{rccrc}
    w_\otimes &: (R \obslash U) \otimes (T \obslash V) \to (R \otimes T) \obslash (U \otimes V) &
    \qquad\qquad &
    w_{I_\otimes} &: I_\otimes \to I_\otimes \obslash I_\otimes \\
    w_\parr &: (C \parr E) \obslash (D \parr F) \to (C \obslash D) \parr (E \obslash F) &
    \qquad\qquad &
    w_{I_\parr} &: I_\parr \obslash I_\parr \to I_\parr
  \end{array}
\end{equation*}
which are compatible with the structure isomorphisms of $(\mathcal{D}, \otimes, I_\otimes)$ and $(\mathcal{D}, \parr, I_\parr)$ and the distributive structure (we refer the reader to Blute, Panangaden, and Slavnov \cite[Definition 4.1]{Blute2012} for the full set of conditions).

A \textit{$\bv$-category} is a symmetric linearly distributive category with a weak interchange and an isomorphism $m : I_\obslash \to I_\otimes$ such that $m$ is an isomix map and $m^{-1}$ is a counit for $w_{I_\otimes}$:
\begin{equation*}
\begin{split}
(\id_{I_\obslash} \otimes m) \fatsemi \rho_{I_\obslash}^\otimes = (m \otimes \id_{I_\obslash}) \fatsemi \lambda_{I_\obslash}^\otimes : I_\obslash \otimes I_\obslash \to I_\obslash\\
w_{I_\otimes} \fatsemi (m^{-1} \obslash \id_{I_\otimes}) \fatsemi \lambda_{I_\otimes}^\obslash = \id_{I_\otimes} = w_{I_\otimes} \fatsemi (\id_{I_\otimes} \obslash m^{-1}) \fatsemi \rho_{I_\otimes}^\obslash : I_\otimes \to I_\otimes
\end{split}
\end{equation*}
\end{definition}

\begin{theorem}\label{thrm:bv_category}
$\caus[\catc]$ is a $\bv$-category.
\end{theorem}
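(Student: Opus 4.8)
The plan is to verify that $\caus[\catc]$, equipped with the monoidal product $\mathbf{\otimes}$, the par $\mathbf{\parr}$, and the sequential product $\mathbf{A < B} := (A \otimes B, c_\mathbf{A} < c_\mathbf{B})$, satisfies every clause of the definition of a $\bv$-category. Most of the data has already been assembled: $\caus[\catc]$ is $*$-autonomous (hence symmetric linearly distributive) by the remarks following Theorem~\ref{thrm:no_interaction_with_trivial}, and Theorem~\ref{thrm:one_way_self_dual_and_affine_semi_local} tells us that $\mathbf{<}$ is well-defined on objects and self-dual, i.e. $(\mathbf{A} < \mathbf{B})^* \cong \mathbf{A}^* < \mathbf{B}^*$ (reading the displayed equalities through $(-)^*$). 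So the real work is (i) to promote $\mathbf{<}$ to a monoidal bifunctor with unit $\mathbf{I_\obslash} := \mathbf{I}$, (ii) to construct the weak-interchange natural transformations $w_\otimes, w_\parr, w_{I_\otimes}, w_{I_\parr}$, and (iii) to check the unit $\mathbf{I_\obslash} \to \mathbf{I_\otimes}$ is the identity and satisfies the isomix and counit conditions, then cite \cite[Definition 4.1]{Blute2012} for the remaining coherence diagrams.

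First I would establish functoriality and the monoidal structure of $\mathbf{<}$. On underlying objects $\mathbf{A} < \mathbf{B}$ is just $A \otimes B$, so the associator, unitors, and symmetry-failure are inherited from $\catc$; what must be checked is that the underlying maps land in the right causal sets. Using the characterisation $c_\mathbf{A} < c_\mathbf{B} = c_\mathbf{A} \obslash c_\mathbf{B}$ from Theorem~\ref{thrm:one_way_self_dual_and_affine_semi_local}, an element $h$ decomposes (in $\sub(\catc)$) as $\sum_i f_i \otimes \fsub{g_i}$ with $\sum_i f_i \sim \fsub f$, $f \in c_\mathbf{A}$, $g_i \in c_\mathbf{B}$. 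For functoriality, given causal $u : \mathbf{A} \to \mathbf{A}'$ and $v : \mathbf{B} \to \mathbf{B}'$, post-composing $h$ with $u \otimes v$ sends this decomposition to $\sum_i (f_i \fatsemi u) \otimes \fsub{g_i \fatsemi v}$, and $\sum_i (f_i \fatsemi u) \sim \fsub{f \fatsemi u}$ with $f \fatsemi u \in c_{\mathbf{A}'}$ and $g_i \fatsemi v \in c_{\mathbf{B}'}$, so the image lies in $c_{\mathbf{A}'} \obslash c_{\mathbf{B}'}$. The same style of argument — pushing the $\obslash$-decomposition through the structural isomorphism and checking the defining existentials are preserved — gives that $\alpha, \lambda, \rho$ restrict to $\caus[\catc]$ maps; the unit laws use that $c_\mathbf{I} = \{1\}$ and $f \in c_\mathbf{A}$ means $\fsub h = \fsub{1} \otimes \fsub f = \fsub f$ sits trivially in $c_\mathbf{I} \obslash c_\mathbf{A}$ with residual $1$, and symmetrically.

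Next, the weak interchange maps. Each candidate $w$ has an underlying $\catc$-morphism obtained purely from associators and symmetries of $\otimes$ in $\catc$ (all four rearrange the same list of four tensor factors), so naturality is automatic and the only content is that the underlying map is a valid morphism in $\caus[\catc]$, i.e. preserves causal sets. For $w_\otimes : (\mathbf{R} < \mathbf{U}) \otimes (\mathbf{T} < \mathbf{V}) \to (\mathbf{R} \otimes \mathbf{T}) < (\mathbf{U} \otimes \mathbf{V})$, I would use Theorem~\ref{thrm:affine} to reduce checking membership on the generating product states $h \otimes h'$ with $h \in c_\mathbf{R} \obslash c_\mathbf{U}$, $h' \in c_\mathbf{T} \obslash c_\mathbf{V}$; taking $\obslash$-decompositions $h \sim \sum_i f_i \otimes \fsub{g_i}$, $h' \sim \sum_j f'_j \otimes \fsub{g'_j}$, the rearranged $h \otimes h'$ decomposes as $\sum_{i,j} (f_i \otimes f'_j) \otimes \fsub{g_i \otimes g'_j}$, and $\sum_{i,j} f_i \otimes f'_j \sim \fsub{f} \otimes \fsub{f'} = \fsub{f \otimes f'}$ with $f \otimes f' \in c_{\mathbf{R} \otimes \mathbf{T}}$ and $g_i \otimes g'_j \in c_{\mathbf{U} \otimes \mathbf{V}}$ — so membership holds. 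For $w_\parr$ I would instead use the self-dual characterisation $c_\mathbf{A} < c_\mathbf{B} = (c_\mathbf{A}^* < c_\mathbf{B}^*)^*$ together with the fact that $\mathbf{\parr}$ is De Morgan dual to $\mathbf{\otimes}$, so that $w_\parr$ is essentially $w_\otimes^*$ transported along these isomorphisms; this is the standard trick used in \cite{Blute2012} and I expect it to go through verbatim here. The unit maps $w_{I_\otimes} : \mathbf{I} \to \mathbf{I} < \mathbf{I}$ and $w_{I_\parr} : \mathbf{I} < \mathbf{I} \to \mathbf{I}$ are both the identity on $I$ once one observes $\mathbf{I} < \mathbf{I} = \mathbf{I}$ (as $c_\mathbf{I} \obslash c_\mathbf{I} = \{1\}$), which also makes the required triangle/counit equations and the isomix condition on $m = \id_{\mathbf{I}}$ trivial.

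The main obstacle I anticipate is the bookkeeping for the coherence conditions of \cite[Definition 4.1]{Blute2012}: the compatibility of $w_\otimes, w_\parr$ with the associators, unitors, symmetry, and the linear distributivity $\delta$. Since all the underlying $\catc$-maps in sight are built from the coherence isomorphisms of a single symmetric monoidal structure $(\catc, \otimes)$, these diagrams all commute at the level of $\catc$ by Mac Lane coherence, and $\caus[\catc]$-maps are determined by their underlying $\catc$-maps (faithfulness of the forgetful functor), so each diagram commutes in $\caus[\catc]$ as soon as we know every edge is a genuine $\caus[\catc]$-morphism — which the membership arguments above supply. Thus the only genuine risk is (a) getting the correct orientation of each $w$ so that it points the way the definition demands (the $\obslash$-decomposition argument is directional, so one must check e.g. that $w_\parr$ really can be built and not merely its reverse), and (b) confirming that the linear distributor of $\caus[\catc]$ agrees with the one of $\catc$ under the forgetful functor, which follows from the explicit form of linear distributivity in the double-gluing/$*$-autonomous structure. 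I would close by assembling these pieces and invoking the cited definition.
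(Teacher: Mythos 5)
Your proposal is correct and follows essentially the same route as the paper: inherit all structural isomorphisms from $\catc$, verify causality of the $\mathbf{<}$-monoidal structure maps via the $\obslash$-decomposition (using $c_\mathbf{A} < c_\mathbf{B} = c_\mathbf{A} \obslash c_\mathbf{B}$), take $w_\otimes$ to be the middle-swap $\id_R \otimes \sigma_{U,T} \otimes \id_V$ checked on product generators, obtain $w_\parr$ and $w_{I_\parr}$ by transposition/self-duality, and observe that all three units coincide so the isomix map is $\id_\mathbf{I}$ and the coherence diagrams follow from coherence in $\catc$ plus faithfulness of the forgetful functor. Your explicit appeal to Theorem~\ref{thrm:affine} to reduce the $w_\otimes$ membership check to product states is a point the paper leaves implicit, but otherwise the two arguments coincide.
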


The equivalence of one-way signalling and (the affine closure of) semi-localisability show that any directed information signalling will always exhibit an equivalent construction where the information transfer is mediated by first-order types. Moreover, first-order types (or scaled versions thereof) are \textit{exactly} those that can carry information in one direction.

\begin{theorem}\label{thrm:first_order_characterisation}
$\mathbf{A^* \parr A} = \mathbf{A^* < A} \Leftrightarrow \exists \mu \neq 0 . c_\mathbf{A}^* = \{\mu \cdot \discard_A\}$.
\end{theorem}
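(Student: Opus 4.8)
The plan is to reduce the claimed identity of objects to an equality of causal sets, and then to probe it with one canonical element, the identity channel. By the inherited closure, $\mathbf{A^* \parr A} = \mathbf{A \multimap A}$, whose states (elements of $c_{\mathbf{A^* \parr A}}$) are exactly the $\caus[\catc]$-morphisms $\mathbf{A} \to \mathbf{A}$ under the name correspondence $f \mapsto \mathrm{name}(f) := \eta_A \fatsemi (\id_{A^*} \otimes f)$; here I identify the underlying $\catc$-objects $A^* \otimes A$ of $\mathbf{A^* \parr A}$ and of $\mathbf{A^* < A}$. Since $c_{\mathbf{A^*}} < c_{\mathbf{A}}$ is by definition a subset of $c_{\mathbf{A^* \parr A}}$, and $\mathbf{A^* < A} = (A^* \otimes A,\ c_{\mathbf{A^*}} < c_{\mathbf{A}})$ with $c_{\mathbf{A^*}} = c_{\mathbf{A}}^*$, the equation $\mathbf{A^* \parr A} = \mathbf{A^* < A}$ is equivalent to saying that \emph{every} $h \in c_{\mathbf{A^* \parr A}}$ is one-way signalling from the $A^*$-side to the $A$-side. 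The one elementary computation I shall use is that restricting the $A$-component of $\mathrm{name}(f)$ by an effect $\pi$ yields $\eta_A \fatsemi (\id_{A^*} \otimes (f \fatsemi \pi))$, the transpose of the effect $f \fatsemi \pi$, and that $\pi \mapsto \eta_A \fatsemi (\id_{A^*} \otimes \pi)$ is the transpose bijection $\catc(A, I) \cong \catc(I, A^*)$ of compact closure.

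For the direction ($\Leftarrow$), assume $c_{\mathbf{A}}^* = \{\mu \cdot \discard_A\}$ with $\mu \neq 0$ (hence invertible, by \ref{apc:scalars}). Then $c_{\mathbf{A}} = \{\mu\discard_A\}^* = \mu^{-1} \cdot \{\discard_A\}^*$ is a rescaling of the first-order causal states, so --- by the same short calculation one uses to characterise morphisms of first-order types, relying on \ref{apc:basis} to give $\{\discard_A\}^{**} = \{\discard_A\}$ --- $\mathrm{name}(f) \in c_{\mathbf{A^* \parr A}}$ holds iff $f$ is a causal channel, i.e. $f \fatsemi \discard_A = \discard_A$. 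Now $c_{\mathbf{A}}^*$ consists of the single effect $\pi = \mu\discard_A$, so there is only one context to verify; by causality $f \fatsemi \mu\discard_A = \mu\discard_A$, so the $A^*$-marginal of $\mathrm{name}(f)$ is the transpose of $\mu\discard_A$, which is precisely the unique element of $c_{\mathbf{A^*}} = c_{\mathbf{A}}^*$ and is independent of $f$. Hence every $h \in c_{\mathbf{A^* \parr A}}$ lies in $c_{\mathbf{A^*}} < c_{\mathbf{A}}$ with this as residual, giving the reverse inclusion and therefore equality.

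For the direction ($\Rightarrow$), note that $\mathrm{name}(\id_A) = \eta_A$ always lies in $c_{\mathbf{A^* \parr A}}$, since the identity is always a $\caus[\catc]$-morphism $\mathbf{A} \to \mathbf{A}$. If $\mathbf{A^* \parr A} = \mathbf{A^* < A}$, then $\eta_A \in c_{\mathbf{A^*}} < c_{\mathbf{A}}$, so there is some $m \in c_{\mathbf{A^*}}$ with $\eta_A \fatsemi (\id_{A^*} \otimes \pi) = m$ for all $\pi \in c_{\mathbf{A}}^*$. Since $\pi \mapsto \eta_A \fatsemi (\id_{A^*} \otimes \pi)$ is injective, all elements of $c_{\mathbf{A}}^*$ coincide, i.e. $c_{\mathbf{A}}^*$ is a singleton. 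But $\mathbf{A^*} = (A^*, c_{\mathbf{A}}^*)$ is an object of $\caus[\catc]$, so $c_{\mathbf{A}}^*$ is flat, and for non-zero $A$ flatness provides an invertible scalar $\mu$ with $\mu \cdot \discard_A \in c_{\mathbf{A}}^*$. Combining, $c_{\mathbf{A}}^* = \{\mu \cdot \discard_A\}$ with $\mu \neq 0$. (If $A$ is a zero system, $c_{\mathbf{A}}^*$ is already empty or a singleton and the two sides are compared directly; alternatively the statement is read for non-zero $A$, as throughout.)

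The only delicate points are bookkeeping rather than mathematical depth: fixing the transpose/Choi conventions so that ``restrict the $A$-component by $\pi$'' is literally the name bijection on effects and the resulting marginal lands in $\catc(I, A^*)$ where claimed, and, in the forward direction, using flatness of $c_{\mathbf{A}}^*$ to pin the singleton down as exactly $\{\mu\discard_A\}$ rather than merely ``some one-point set''. I do not expect a genuine obstacle beyond these.
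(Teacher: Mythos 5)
Your proof is correct and follows essentially the same route as the paper's: the forward direction probes the hypothesis with the cup $\eta_A$ (the name of $\id_A$) and uses injectivity of the transpose to force $c_\mathbf{A}^*$ to be a singleton, pinned down as $\{\mu\cdot\discard_A\}$ by flatness, while the backward direction observes that a unique effect yields a unique residual, making one-way signalling trivial. The extra bookkeeping via the name correspondence with causal channels is sound but not needed beyond what the paper's shorter argument already uses.
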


This result presents a characterisation of first-order system types that can be interpreted in any $\bv$-category, which may be a useful lemma in characterising which categories arise as $\caus[\catc]$ for some additive precausal $\catc$.

It is also interesting to compare this result to the causality principle: that an operational theory has no signalling from the future iff every system type has a unique deterministic effect \cite{Chiribella2010}, i.e. is a first-order type. One would think this would also give a characterisation of first-order types, but it contains a subtle exception for zero objects where the existence of an effect might fail. This is not particularly concerning for practical theories since the zero objects have little pragmatic importance, but does complicate the usage of this as a logical characterisation of first-order systems. The following shows this in the case of binary test outcomes encoded in $\mathbf{I \oplus I}$, but the same proof works if we replace it with any $\mathbf{B}$ with at least two distinct states (using a dual basis from Lemma \ref{lemma:extend_basis} to fill the role of the injectors and projectors).

\begin{proposition}\label{prop:causality_exception}
$\mathbf{(I \oplus I) \parr A} = \mathbf{(I \oplus I) < A} \Leftrightarrow \mathbf{A} = \mathbf{1} \vee \exists \mu \neq 0 . c_\mathbf{A}^* = \{\mu \cdot \discard_A\}$
\end{proposition}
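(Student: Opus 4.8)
The plan is to mirror the structure of the proof of Theorem~\ref{thrm:first_order_characterisation}, but track carefully where the argument needs $\mathbf{A}$ to be non-zero and where the exceptional case $\mathbf{A} = \mathbf{1}$ sneaks in. The $(\Leftarrow)$ direction splits into two sub-cases. If $c_\mathbf{A}^* = \{\mu \cdot \discard_A\}$, i.e. $\mathbf{A}$ is (a scaled version of) a first-order type, then I would argue exactly as in Theorem~\ref{thrm:first_order_characterisation}: since $\mathbf{I \oplus I}$ is a first-order type by Proposition~\ref{prop:coproduct_first_order}, the general machinery forcing $\mathbf{X^* \parr A} = \mathbf{X^* < A}$ whenever $\mathbf{A}$ is first-order applies. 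If instead $\mathbf{A} = \mathbf{1} = (0, \{0_{I,0}\})$, then $\mathbf{(I \oplus I) \parr 1}$ and $\mathbf{(I \oplus I) < 1}$ both live over the zero object $(I \oplus I) \otimes 0 = 0$, which has a unique state; so the two causal sets are forced to be equal (both are either $\emptyset$ or $\{0_{I,0}\}$, and flatness/closedness pins down which). This is the ``subtle exception'' the surrounding text flags: the sequential and par types degenerate to the same thing simply because there are no non-trivial states to distinguish them.

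For the $(\Rightarrow)$ direction, assume $\mathbf{(I \oplus I) \parr A} = \mathbf{(I \oplus I) < A}$ and that $\mathbf{A} \neq \mathbf{1}$; I must deduce $c_\mathbf{A}^* = \{\mu \cdot \discard_A\}$ for some $\mu \neq 0$. The idea is contrapositive: if $c_\mathbf{A}^*$ is \emph{not} a singleton scaled copy of $\discard_A$, I want to exhibit a bipartite state $h \in c_{\mathbf{(I\oplus I) \parr A}}$ that fails the one-way signalling condition $c_{\mathbf{I \oplus I}} < c_\mathbf{A}$, i.e. has no residual. Using the two distinct basis states of $\mathbf{I \oplus I}$ (the images of $\iota_A, \iota_B$ of the two copies of $\mathbf{I}$) together with a dual basis of effects (guaranteed by Lemma~\ref{lemma:extend_basis}), I would build $h$ so that conditioning on the ``left'' outcome of $\mathbf{I \oplus I}$ produces one state of $\mathbf{A}$ and conditioning on the ``right'' outcome produces a different state — concretely, pick two causal effects $\pi_1, \pi_2 \in c_\mathbf{A}^*$ (or a causal effect and a scaled discard) that are not proportional, which exists precisely because $c_\mathbf{A}^* \neq \{\mu \cdot \discard_A\}$, and use them to define the $\mathbf{A}$-marginals seen by Bob's context. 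Then the putative residual $m \in c_{\mathbf{I \oplus I}}$ would have to reproduce \emph{both} marginals against all of Bob's effects, which is impossible. Since $\mathbf{A}$ has at least two distinct states (the text notes this is what replaces ``$\mathbf{A} \neq \mathbf{1}$'' with ``$c_\mathbf{A}^*$ not a singleton'' — one needs non-triviality on both sides to carry out the construction, and the hypothesis $\mathbf{A}\neq\mathbf{1}$ together with non-first-order-ness supplies it), the construction goes through.

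The main obstacle I anticipate is the bookkeeping around degenerate and scaled cases: making sure the argument distinguishes ``$\mathbf{A}$ is genuinely higher-order'' from ``$c_\mathbf{A}^*$ is a scaled discard but $\mathbf{A} \neq \mathbf{1}$'' from ``$\mathbf{A} = \mathbf{1}$'', and checking that when $\mathbf{A}$ is the zero object with the singleton causal set the equality of types really is \emph{forced} rather than merely possible. A secondary technical point is verifying that the witness $h$ I construct genuinely lies in $c_{\mathbf{(I\oplus I)\parr A}}$ — this amounts to checking it is normalised in all local contexts, which should follow from the second description of $c_{\mathbf{A\oplus B}}$ in Equation~\ref{eq:def_coproduct} as $\{\langle \pi_A, \pi_B\rangle \mid \ldots\}^*$ and the characterisation of $\parr$ as the dual of $\otimes$ of duals. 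Once $h$ is in the par type but demonstrably has no residual, it is outside $c_{\mathbf{I\oplus I}} < c_\mathbf{A}$, contradicting the assumed equality, and we are done.
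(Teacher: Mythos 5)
Your proposal follows essentially the same route as the paper: dispose of the zero-object cases separately, note that the $(\Leftarrow)$ direction is immediate when $c_\mathbf{A}^*$ is a singleton (the residual condition quantifies over one effect, and locality of the par type puts that residual in $c_{\mathbf{I\oplus I}}$), and prove $(\Rightarrow)$ by contrapositive via a witness in $c_{\mathbf{(I\oplus I)\parr A}}$ that admits no constant residual. Two points are under-specified. First, for $\mathbf{A}=\mathbf{1}$ you say ``flatness/closedness pins down which'' of $\emptyset$ and $\{0_{I,0}\}$ the two causal sets are; it does not, since both candidates are closed and flat over the zero object (flatness explicitly exempts zero systems), so one must actually compute that both $c_{\mathbf{(I\oplus I)\parr 1}}$ and $c_{\mathbf{(I\oplus I)<1}}$ equal $\{0_{I,0}\}$, as the paper does in Equations \ref{eq:zero_tensor} and \ref{eq:zero_one_way}. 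Second, and more substantively, your witness is described as built from ``two causal effects used to define the $\mathbf{A}$-marginals,'' which elides the actual construction: what you need are two \emph{states} on the $A$ wire whose sum is (a scalar multiple of) a causal state of $\mathbf{A}$ — so that applying the unique effect $\discard_{I\oplus I}$ lands in $c_\mathbf{A}$ and the witness really lies in the par type — and at least one of which separates the two given effects $\pi\neq\pi'$. The paper obtains these from \ref{apc:basis} (a state $f$ with $f\fatsemi\pi\neq f\fatsemi\pi'$) together with Proposition \ref{prop:bintests} (a complement $f'$ with $f+f'=\lambda\cdot\maxmix_A$), setting $h:=\mu\cdot\lambda^{-1}\cdot(\iota_1\otimes f+\iota_2\otimes f')$; the tools you cite for the normalisation check (Lemma \ref{lemma:extend_basis} and Equation \ref{eq:def_coproduct}) do not by themselves supply it. With those two repairs your argument coincides with the paper's.
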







\section{Non-Signalling Systems}\label{sec:nonsignalling}

The one-way signalling structure is not the only interesting causal structure on a bipartite system. For example, the non-signalling causal structure is a symmetric property about the statistical independence of the different parties.

\begin{definition}
A bipartite process is non-signalling if it satisfies the one-way signalling condition in both directions.
\end{definition}

This is a necessary condition for compatibility with the setting where Alice and Bob are space-like separated. However, this is not a sufficient condition since there exist non-signalling processes, such as a quantum implementation of a PR box, that cannot be factorised into local processes with a shared history \cite{Popescu1994}.

In~\cite{Kissinger2019a}, the tensor product of first-order channel types $\mathbf{(A^1 \multimap B^1) \otimes (C^1 \multimap D^1)}$ was shown to exactly contain those bipartite channels that are non-signalling \cite{Kissinger2019a}. This proof relied on some properties that do not generalise beyond first-order channels, such as the ability to apply \ref{prec:split} to decompose causal contexts for channels into a causal input state and discarding the output. Between Gutoski \cite{Gutoski2008} and Chiribella et al. \cite{Chiribella2013}, it was shown that the space of non-signalling channels in quantum theory can be characterised as the affine closure of product channels. By translating this proof into categorical terms, we can generalise the result to hold for arbitrary higher-order systems in $\caus[\catc]$ for any additive precausal $\catc$.

\begin{theorem}\label{thrm:tensor_nonsignalling}
$(c_\mathbf{A} < c_\mathbf{B}) \cap (c_\mathbf{A} > c_\mathbf{B}) = c_{\mathbf{A \otimes B}}$.
\end{theorem}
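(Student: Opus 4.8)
The plan is to prove the two inclusions separately, leaning heavily on Theorem~\ref{thrm:one_way_self_dual_and_affine_semi_local} (which identifies $c_\mathbf{A} < c_\mathbf{B}$ with the $\obslash$-characterisation and with $\left(c_\mathbf{A}^* < c_\mathbf{B}^*\right)^*$) and on the affine-closure characterisation of the tensor from Theorem~\ref{thrm:affine}, i.e. $c_{\mathbf{A \otimes B}} = \affp\!\left(\{\rho_A \otimes \rho_B\}\right)$. The easy inclusion is $c_{\mathbf{A \otimes B}} \subseteq (c_\mathbf{A} < c_\mathbf{B}) \cap (c_\mathbf{A} > c_\mathbf{B})$: a product state $\fsub{\rho_A} \otimes \fsub{\rho_B}$ manifestly lies in $c_\mathbf{A} \obslash c_\mathbf{B}$ (take $\mathcal I$ a singleton, $f_1 = \fsub{\rho_A}$, $g_1 = \rho_B$, $f = \rho_A$), hence in $c_\mathbf{A} < c_\mathbf{B}$ by Theorem~\ref{thrm:one_way_self_dual_and_affine_semi_local}, and symmetrically in $c_\mathbf{A} > c_\mathbf{B}$; since both one-way sets are closed (duals of sets) they are in particular closed under the affine closure, so all of $\affp(\{\rho_A\otimes\rho_B\}) = c_{\mathbf{A\otimes B}}$ is contained in the intersection.

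For the reverse inclusion I would take $h \in (c_\mathbf{A} < c_\mathbf{B}) \cap (c_\mathbf{A} > c_\mathbf{B})$ and aim to write $\fsub h$ as an affine combination of product states. Using the $\obslash$-characterisation from Theorem~\ref{thrm:one_way_self_dual_and_affine_semi_local} for the ``$<$'' direction, there are $\{f_i\}_{i} \subseteq \sub(\catc)(I, A\otimes B)$, $\{g_i\}_i \subseteq c_\mathbf{B}$ and $f \in c_\mathbf{A}$ with $\fsub h \sim \sum_i f_i \otimes \fsub{g_i}$ and $\fsub f \sim \sum_i f_i$. The remaining freedom is in the $f_i$, which live in $\sub(\catc)(I,A)$ and need not be (scaled) states of $\mathbf{A}$; the job is to use the symmetric ``$>$'' hypothesis to force them into $\aff(c_\mathbf{A})$. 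Concretely, I would expand each $f_i$ in a basis of $\catc(I,A)$ extending a causal basis of $\mathbf{A}$ (Lemma~\ref{lemma:extend_basis}, \ref{apc:basis}), with dual basis of effects in $\sub(\catc)$; applying a dual-basis effect $\pi_j$ to the $A$-leg of $\fsub h$ produces an element of (the subtractive span of) $c_\mathbf{B}$, and the ``$>$'' one-way condition — Alice's residual effect being independent of Bob's input — should pin down these partial evaluations enough to conclude that the ``non-causal'' basis coefficients of the $f_i$ sum to zero, i.e. that $\sum_i f_i \otimes \fsub{g_i}$ can be rearranged as a genuine affine combination of products $\fsub{\rho_A}\otimes\fsub{\rho_B}$ with $\rho_A \in c_\mathbf{A}$, $\rho_B \in c_\mathbf{B}$. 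That places $\fsub h$ in $\aff(\{\rho_A\otimes\rho_B\})$, hence $h \in \affp(\{\rho_A\otimes\rho_B\}) = c_{\mathbf{A\otimes B}}$ by Theorem~\ref{thrm:affine}.

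The main obstacle I expect is precisely this last rearrangement step: showing that combining the two one-way decompositions forces the product expansion to be supported on causal states of \emph{both} factors simultaneously. The one-way data for ``$<$'' gives control over the $B$-leg (the $g_i$ are already in $c_\mathbf{B}$) but only a \emph{linear} decomposition on the $A$-leg, and symmetrically for ``$>$''; one must show these two linear decompositions are compatible, which is essentially the categorical translation of the Gutoski/Chiribella argument that non-signalling channels are affine combinations of product channels. I would handle this by working in the $K$-vector-space enrichment of $\sub(\catc)$ (Proposition~\ref{prop:subc-field}), choosing dual bases on both $A$ and $B$ so that $\fsub h$ is a matrix of scalars, and checking that the ``$<$'' condition kills cross-terms in one index while the ``$>$'' condition kills them in the other, leaving a combination living in $\aff(c_\mathbf{A}) \otimes \aff(c_\mathbf{B})$; the normalisation constraint $\sum$(coefficients)$=\id_I$ then follows from flatness ($\discard_{A\otimes B} \in (c_{\mathbf{A\otimes B}})^*$) applied to $h$. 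A small amount of care is also needed for the degenerate cases where $A$ or $B$ is a zero system, which can be dispatched directly since then one side of the intersection is all of $c_{\mathbf{A\parr B}} = \{0\}$ or the statement is vacuous.
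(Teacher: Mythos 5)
Your overall architecture matches the paper's: both directions separately, the easy inclusion via the $\obslash$-form of $c_\mathbf{A} < c_\mathbf{B}$ and closure, and the hard inclusion by massaging the $\obslash$-decomposition $\fsub{h} \sim \sum_i f_i \otimes \fsub{g_i}$ into an affine combination of products and invoking Theorem~\ref{thrm:affine}. The easy direction is fine. The problem is that the hard direction's key step --- the ``rearrangement'' you flag as the main obstacle --- is exactly where the proof lives, and the mechanism you sketch for it does not work as stated. First, the $>$-condition only constrains evaluations of $h$ against \emph{causal} effects $\pi \in c_\mathbf{A}^*$; the dual-basis effects $e_j$ on $A$ live in $\sub(\catc)$ and are not causal, so you cannot ``apply a dual-basis effect to the $A$-leg and invoke $>$''. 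Second, the picture of the two one-way conditions ``killing cross-terms in each index'' of a matrix of scalars is wrong: non-signalling states that are not products (e.g.\ classically correlated states) have plenty of cross-terms; the conditions constrain marginals, not matrix entries. Third, even once the $A$-components are shown to be normalised against every element of $c_\mathbf{A}^*$, that normalisation holds only in $\sub(\catc)$ --- the components need not be positive, i.e.\ need not lie in $\catc(I,A)$ at all --- and your plan has no step that repairs this.

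The paper resolves all three points at once by working on the $B$-leg rather than the $A$-leg: extend a maximal linearly independent subset of the (already causal) $g_i$ to a basis $\{\rho_j\}_j$ of $B$ with dual basis $\{e_j\}_j$, arranged so that $\fsub{n} \fatsemi e_j$ is invertible on the relevant index set ($n$ the right-residual). Resolving the identity on $B$ gives $\fsub{h} \sim \sum_j \bigl(\fsub{h}\fatsemi(\id_A \otimes e_j)\bigr) \otimes \fsub{\rho_j}$ with causal $B$-factors; the $>$-condition is then applied in the only legal order --- causal $\pi$ on $A$ first, giving the constant residual $n$, then $e_j$ --- to show each rescaled $A$-component $(\fsub{n}\fatsemi e_j)^{-1}\cdot(\fsub{h}\fatsemi(\id_A\otimes e_j))$ pairs to $1$ with every $\pi \in c_\mathbf{A}^*$; and finally Proposition~\ref{prop:bintests} shifts each such component by a multiple of $\maxmix_A$ to land in $\catc(I,A)$, the correction terms $\maxmix_A \otimes \rho_j$ being themselves (scaled) products of causal states, so the whole expression remains a linear combination of elements of $c_\mathbf{A} \otimes c_\mathbf{B}$, made affine by normalisation against $\discard_{A\otimes B}$. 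Without something playing the role of this ordered application of the residual condition and of Proposition~\ref{prop:bintests}, your plan does not close.
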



From this, we can construct a model of Barrett's Generalised Non-Signalling Theory \cite{Barrett2007} within $\caus[\MatRp]$. States are vectors of positive real numbers describing the probabilities of each outcome in some set of fiducial measurements (i.e. a finite set of measurements with finite outcomes that are sufficient to characterise the state, fixed for each system up front). The set of allowed states of a single system include any vector that gives a normalised probability distribution for each fiducial measurement. An $(n,k)$-system ($n$ fiducial measurements, each with $k$ outcomes) is characterised by the type $\mathbf{(I^{\oplus k})^{\times n}}$, since states are a vector of $n \cdot k$ positive reals and the dual space of effects is generated by $\left\{p_i \fatsemi \discard_{I^{\oplus k}}\right\}_i$ (choosing a fiducial measurement and marginalising over the outcomes). For multi-partite systems, the permissible states require both the normalisation condition and an extra non-signalling condition, matching the characterisation of $\mathbf{A \otimes B}$ as the space of non-signalling processes. In terms of transformations, the theory allows every matrix that preserves the normalisation condition on all input states, i.e. yields $1$ for every normalised state provided as input and normalised effect applied to the output. There is no equivalent realisation of the Generalised Local Theory (also from \cite{Barrett2007}) since the space of local multi-partite states is not affine-closed and therefore will not form a closed causal set.

\section{Conclusion}\label{sec:conclusion}

By extending the assumptions on the base category with additive structure, 
the $\caus$ construction yields a $\bv$-category with additives within which the characterisation of first-order types resembles the causality condition expressed as an equation of types. We also obtain general characterisations of $(-)^{**}$ as affine closure and the tensor product as the space of non-signalling bipartite processes. A number of the proofs are similar to those used to show that probabilistic coherence spaces form a $\bv$-category \cite{Blute2012}, 
and the characterisation of non-signalling processes as affine closure of local processes for first-order channels in quantum theory \cite{Gutoski2008}.

We recover almost all of the precausal conditions from an additive precausal category. Instead of the equivalence of one-way signalling and semi-localisability for first-order channels (Condition \ref{prec:oneway}), we obtained equivalence with the affine closure of semi-localisability for arbitrary higher-order systems (Theorem \ref{thrm:one_way_self_dual_and_affine_semi_local}). Since existing proofs of \ref{prec:oneway} for precausal categories rely on the essential uniqueness of purification, it may be possible to strengthen this result in future work by incorporating notions of purity and purification into this framework.

There are still other interesting process theories that cannot be considered as either the base category or the result of the $\caus$ construction in which it would be interesting to analyse constructions for one-way signalling and their logical role. For example, settings with infinite-dimensional systems are rarely compact closed, real quantum mechanics doesn't have enough causal states (since it is compact closed but doesn't admit local discrimination), and $\mathrm{Rel}$ fails Condition \ref{prec:split}. We needn't expect the results of this paper to generalise to other theories since (non-deterministic) coherence spaces form a $\bv$-category using a similar construction to $\mathbf{A \obslash B}$ for the non-commutative operator \cite{Blute2012}, but this is distinct from the naive adaptation of $\mathbf{A < B}$ (instead of a constant residual, every local effect must be a subset of some constant clique) which is not self-dual in this category.


\bigskip

\noindent \textbf{Acknowledgements.} The authors would like to thank Alessio Guglielmi for posing the question of modelling $\bv$-logic within $\caus[\catc]$, as well as Chris Barrett, Lutz Stra{\ss}burger, and members of the quantum group at University of Oxford for useful discussions. WS was supported by Cambridge Quantum (Quantinuum Ltd). AK acknowledges support from Grant No. 61466 from the John Templeton Foundation as part of the QISS project. The opinions expressed in this publication are those of the authors and do not necessarily reflect the views of the John Templeton Foundation.

\bibliography{library}
\bibliographystyle{plainurl}

\newpage

\appendix

\section{Subtractive Closure}\label{sec:subtraction}

Categories enriched with addition need not always have subtraction, but the additive precausal assumptions consider freely adjoining subtraction.

\begin{definition}
Given an additive precausal category $\catc$, the free subtractive closure $\sub(\catc)$ formed by the objects of $\catc$ and morphisms $A \to B$ are equivalence classes of pairs of morphisms $f, g \in \catc(A, B)$ under the relation $(f, g) \sim (f', g') \xLeftrightarrow{\mathrm{def}} f + g' = f' + g$. Composition of morphisms is defined as $(f, g) \fatsemi (x, y) \triangleq (f \fatsemi x + g \fatsemi y, f \fatsemi y + g \fatsemi x)$.
\end{definition}

The intent behind this construction is that a morphism $(f, g)$ should represent the expression $f-g$, and hence this should be seen as equivalent to $f'-g'$ when $(f - g) + g + g' = f + g' = f' + g = (f' - g') + g + g'$ by cancellativity.

\begin{proposition}
$\sub(\catc)$ is an $\mathrm{Ab}$-enriched category.
\end{proposition}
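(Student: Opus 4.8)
The plan is to verify the axioms of an $\mathrm{Ab}$-enriched category directly, viewing $\sub(\catc)$ as the hom-wise group completion of $\catc$ (which is at least $\mathrm{CMon}$-enriched, since it has biproducts), with additive cancellativity (Proposition~\ref{prop:apc-cancellative}) supplying the one ingredient that makes the construction well-behaved. Throughout, I would exploit that $[(f,g)]$ is intended to denote the formal difference ``$f-g$'', so that each axiom to be checked, once unfolded via the definition of $\sim$, becomes an identity between sums of composites that already holds in $\catc$.

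First I would show $\sim$ is an equivalence relation. Reflexivity and symmetry are immediate from commutativity of $+$ in $\catc(A,B)$; transitivity is the only nontrivial point: from $f+g'=f'+g$ and $f'+g''=f''+g'$ one adds to obtain $f+g''+(f'+g') = f''+g+(f'+g')$ and cancels $f'+g'$ using Proposition~\ref{prop:apc-cancellative}. Next, defining $[(f,g)]+[(f',g')]:=[(f+f',g+g')]$, I would check well-definedness in each argument (a short computation needing only commutativity), and observe that this addition is associative and commutative because $\catc(A,B)$ is, has unit $[(0,0)]$ with $0$ the biproduct zero morphism, and has $[(g,f)]$ as an additive inverse of $[(f,g)]$ since $(f+g,\,g+f)\sim(0,0)$. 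Hence every hom-set is an abelian group.

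Then I would turn to composition. For well-definedness it suffices by symmetry to show $(f,g)\sim(f',g')$ implies $(f,g)\fatsemi(x,y)\sim(f',g')\fatsemi(x,y)$; expanding via $(f,g)\fatsemi(x,y)=(f\fatsemi x+g\fatsemi y,\ f\fatsemi y+g\fatsemi x)$ and using bilinearity of composition over $+$ in $\catc$, the goal collapses to $(f+g')\fatsemi(x+y)=(f'+g)\fatsemi(x+y)$, which is the hypothesis post-composed with $x+y$. Associativity, the unit laws with identity $[(\id_A,0)]$, and bilinearity of composition in both variables are then term-by-term comparisons of the defining formulas, using commutativity and distributivity in $\catc$ and no further appeal to cancellativity. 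Bilinearity together with the abelian-group structure on hom-sets is exactly $\mathrm{Ab}$-enrichment.

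The main obstacle is essentially bookkeeping rather than any genuine difficulty: one must carefully track which component of a pair is the ``positive'' and which the ``negative'' part when expanding the composition formula, and confirm that every equation expected in the difference group is witnessed by an honest identity in $\catc$. The single genuinely load-bearing hypothesis is additive cancellativity, used only for transitivity of $\sim$; without it the relation fails to be transitive and the quotient is ill-defined, which is precisely why this hypothesis is imposed.
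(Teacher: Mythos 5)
Your proposal is correct and follows the same overall skeleton as the paper's proof: check that $\sim$ is an equivalence relation (with cancellativity carrying transitivity), check well-definedness of addition and composition, then verify units, associativity, inverses, and bilinearity. The one step where you genuinely diverge is well-definedness of composition. The paper proves $(f,g)\fatsemi(x,y)\sim(f',g')\fatsemi(x',y')$ by varying both arguments simultaneously: it writes down eight equations obtained by pre- and post-composing the two hypotheses, sums them all, cancels, and is left with an equation of the form $2\cdot(\cdots)=2\cdot(\cdots)$, which it resolves by invoking invertibility of the scalar $2$ from \ref{apc:scalars}. You instead fix one argument at a time — showing $(f,g)\sim(f',g')$ implies $(f,g)\fatsemi(x,y)\sim(f',g')\fatsemi(x,y)$ by collapsing the goal to $(f+g')\fatsemi(x+y)=(f'+g)\fatsemi(x+y)$, and dually in the other variable — and then chains the two via the already-established transitivity of $\sim$. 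This is cleaner: it avoids the eight-equation bookkeeping and, more interestingly, removes the dependence on $2$ being invertible, so that in your rendition cancellativity really is the only load-bearing hypothesis, exactly as you claim. Everything else (the unit $[(\id_A,0)]$, associativity by term-by-term expansion, inverses $-[(f,g)]=[(g,f)]$) matches the paper.
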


\begin{proof}
Firstly, we need $\sim$ to be an equivalence relation. Reflexivity and symmetry are both trivial. For transitivity, given $(f, g) \sim (a, b) \sim (x, y)$ we have $f + b = a + g$ and $a + y = x + b$. Hence $f + y + b = a + g + y = x + b + g$, so by cancellativity $f + y = x + g$, i.e. $(f, g) \sim (x, y)$.

Since morphisms of $\sub(\catc)$ are equivalence classes, we need composition to be well-defined with respect to our choices of representatives. That is, if $(f, g) \sim (f', g')$ and $(x, y) \sim (x', y')$, then $(f, g) \fatsemi (x, y) \sim (f', g') \fatsemi (x', y')$. From $f + g' = f' + g$ and $x + y' = x' + y$ we have:
\begin{equation}
\begin{split}
f \fatsemi x + g' \fatsemi x &= f' \fatsemi x + g \fatsemi x \\
f' \fatsemi y + g \fatsemi y &= f \fatsemi y + g' \fatsemi y \\
f' \fatsemi y' + g \fatsemi y' &= f \fatsemi y' + g' \fatsemi y' \\
f \fatsemi x' + g' \fatsemi x' &= f' \fatsemi x' + g \fatsemi x' \\
f \fatsemi x + f \fatsemi y' &= f \fatsemi x' + f \fatsemi y \\
g \fatsemi x' + g \fatsemi y &= g \fatsemi x + g \fatsemi y' \\
g' \fatsemi x' + g' \fatsemi y &= g' \fatsemi x + g' \fatsemi y' \\
f' \fatsemi x + f' \fatsemi y' &= f' \fatsemi x' + f' \fatsemi y \\
\end{split}
\end{equation}

We note that, since $\catc$ is compact closed, $\forall A, B, C, D . \forall f : A \to B . \forall g, h : C \to D . f \otimes (g + h) = f \otimes g + f \otimes h$. In particular, $\forall f . f + f = 2 \cdot f$ where $2$ is the scalar $\id_I + \id_I$. Summing these equations and using cancellativity:
\begin{equation}
2 \cdot (f \fatsemi x + g \fatsemi y + f' \fatsemi y' + g' \fatsemi x') = 2 \cdot (f \fatsemi y + g \fatsemi x + f' \fatsemi x' + g' \fatsemi y')
\end{equation}

By \ref{apc:scalars}, $2$ is invertible. The resulting equation exactly gives us $(f, g) \fatsemi (x, y) \sim (f', g') \fatsemi (x', y')$.

Next, composition must be unital. The identity on $A$ is $(\id_A, 0_{A,A})$ since:
\begin{equation}
\begin{split}
(f, g) \fatsemi (\id_B, 0_{B,B}) &= (f \fatsemi \id_B + g \fatsemi 0_{B,B}, f \fatsemi 0_{B,B} + g \fatsemi \id_B) \\
&= (f + 0_{A,B}, 0_{A,B} + g) \\
&= (f, g) \\
&= (\id_A \fatsemi f + 0_{A,A} \fatsemi g, \id_A \fatsemi g + 0_{A,A} \fatsemi f) \\
&= (\id_A, 0_{A,A}) \fatsemi (f, g)
\end{split}
\end{equation}

We also need composition to be associative. Given $(f, g) : A \to B$, $(x, y) : B \to C$, and $(u, v) : C \to D$:
\begin{equation}
\begin{split}
((f, g) \fatsemi (x, y)) \fatsemi (u, v) &= (f \fatsemi x + g \fatsemi y, f \fatsemi y + g \fatsemi x) \fatsemi (u, v) \\
&= \left( \begin{array}{l} f \fatsemi x \fatsemi u + g \fatsemi y \fatsemi u + f \fatsemi y \fatsemi v + g \fatsemi x \fatsemi v, \\ f \fatsemi y \fatsemi u + g \fatsemi x \fatsemi u + f \fatsemi x \fatsemi v + g \fatsemi y \fatsemi v \end{array} \right) \\
&= (f, g) \fatsemi (x \fatsemi u + y \fatsemi v, x \fatsemi v + y \fatsemi u) \\
&= (f, g) \fatsemi ((x, y) \fatsemi (u, v))
\end{split}
\end{equation}

Finally, $\mathrm{Ab}$-enrichment comes from inheriting the enrichment in commutative monoids from $\catc$ via $(f, g) + (x, y) \triangleq (f + x, g + y)$ and we have inverses $-(f, g) \triangleq (g, f)$. We can similarly show that these are well-defined regardless of the choice of representatives by showing $\sim$ is preserved. Invertibility comes from $(f, g) + - (f, g) = (f, g) + (g, f) = (f + g, g + f) \sim (0_{A,B}, 0_{A,B})$.
\end{proof}

\begin{proposition}
There is a faithful embedding $\fsub{-}: \catc \to \sub(\catc)$ that is the identity on objects and maps morphisms as $\fsub{f} := (f, 0_{A,B})$.
\end{proposition}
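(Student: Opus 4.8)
The plan is to verify three things in sequence: that $\fsub{-}$ is a well-defined assignment on morphisms, that it is functorial (preserving identities and composition), and that it is faithful. Well-definedness is immediate: for each $f \in \catc(A,B)$ the pair $(f, 0_{A,B})$ is a concrete element of $\catc(A,B) \times \catc(A,B)$, so $\fsub{f} = [(f, 0_{A,B})]$ names a genuine equivalence class of $\sub(\catc)(A,B)$, with no choice of representative involved on the source side. Identity-preservation is nothing more than the computation already carried out when showing $(\id_A, 0_{A,A})$ is the identity of $\sub(\catc)$: $\fsub{\id_A} = (\id_A, 0_{A,A}) = \id_A$ in $\sub(\catc)$.

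For preservation of composition I would simply unfold the definition of composition in $\sub(\catc)$ on the chosen representatives. For $f \in \catc(A,B)$ and $g \in \catc(B,C)$,
\[
\fsub{f} \fatsemi \fsub{g} = (f, 0_{A,B}) \fatsemi (g, 0_{B,C}) = \bigl(f \fatsemi g + 0_{A,B} \fatsemi 0_{B,C},\ f \fatsemi 0_{B,C} + 0_{A,B} \fatsemi g\bigr) = (f \fatsemi g,\ 0_{A,C}) = \fsub{f \fatsemi g},
\]
using only that zero morphisms are absorbing under composition, which holds since $\catc$ is enriched in commutative monoids. Hence $\fsub{-}$ is a functor that is the identity on objects.

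Faithfulness is the claim that $\fsub{-}$ is injective on each hom-set. Suppose $\fsub{f} = \fsub{g}$ for $f, g \in \catc(A,B)$, i.e.\ $(f, 0_{A,B}) \sim (g, 0_{A,B})$. Unfolding the defining relation of $\sub(\catc)$ gives $f + 0_{A,B} = g + 0_{A,B}$, hence $f = g$; note that this step does not even require cancellativity, only that $0_{A,B}$ is the additive unit of $\catc(A,B)$. For completeness I would also record that $\fsub{-}$ is additive: $\fsub{f + g} = (f+g, 0_{A,B}) = (f, 0_{A,B}) + (g, 0_{A,B}) = \fsub{f} + \fsub{g}$ and $\fsub{0_{A,B}} = (0_{A,B}, 0_{A,B})$ is the zero of $\sub(\catc)(A,B)$, so $\fsub{-}$ is in fact a faithful $\mathrm{CMon}$-enriched functor compatible with the $\mathrm{Ab}$-enrichment of the target.

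Since every clause of the argument is a direct computation from the definitions, there is no genuine obstacle here; the only point demanding any attention is bookkeeping — keeping track of which indexed zero morphism sits in each slot when composing, and recalling that such zeros are absorbing for composition in an $\mathrm{Ab}$-enriched (indeed already commutative-monoid-enriched) category.
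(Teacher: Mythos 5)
Your proposal is correct and follows essentially the same route as the paper's proof: verify functoriality by unfolding the composition formula on representatives $(f,0)\fatsemi(g,0)\sim(f\fatsemi g,0)$, and obtain faithfulness from $(f,0)\sim(g,0)\Rightarrow f+0=g+0\Rightarrow f=g$. The extra remarks on well-definedness and additivity are fine but not needed beyond what the paper records.
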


\begin{proof}
For functoriality, $(f, 0_{A,B}) \fatsemi (g, 0_{B,C}) \sim (f \fatsemi g, 0_{A,C})$ and identities are $(\id_A, 0_{A,A})$. The identity on objects is injective, so this functor is an embedding. For faithfulness, if we have $f, g : A \to B$ such that $(f, 0_{A,B}) \sim (g, 0_{A,B})$, then $f = f + 0_{A,B} = g + 0_{A,B} = g$.
\end{proof}

For the remainder, we will show that this embedding preserves much of the core categorical structure of $\catc$.

\begin{proposition}
The embedding $\fsub{-} : \catc \to \sub(\catc)$ is strong monoidal, where the product in $\sub(\catc)$ acts on objects identically to $\catc$ and $(f, g) \otimes (x, y) \triangleq (f \otimes x + g \otimes y, f \otimes y + g \otimes x)$.
\end{proposition}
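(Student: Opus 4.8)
The plan is to verify in turn that the displayed formula for $\otimes$ on morphisms of $\sub(\catc)$ is well-defined on equivalence classes, that it assembles into a (symmetric) monoidal structure whose unit and coherence data are inherited from $\catc$, and finally that the structure morphisms of $\fsub{-}$ for this tensor are identities. The one fact about $\catc$ I will lean on repeatedly is that $\otimes$ is \emph{biadditive} there: the excerpt already records $f \otimes (g+h) = f\otimes g + f \otimes h$, and the mirror identity $(g+h)\otimes f = g\otimes f + h\otimes f$ follows by conjugating with the symmetry; additivity of $\fatsemi$ in each argument holds since $\catc$ is enriched in commutative monoids.

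For well-definedness, suppose $(f,g)\sim(f',g')$, i.e. $f+g' = f'+g =: s$. Then $(f,g)\otimes(x,y) \sim (f',g')\otimes(x,y)$ amounts to $(f\otimes x + g\otimes y) + (f'\otimes y + g'\otimes x) = (f'\otimes x + g'\otimes y) + (f\otimes y + g\otimes x)$, and biadditivity collapses both sides to $s\otimes x + s\otimes y$; the second argument is symmetric. (Notably, unlike the well-definedness of composition, this needs no cancellativity.) Bifunctoriality is the next step: expanding $\bigl((f,g)\otimes(x,y)\bigr)\fatsemi\bigl((f',g')\otimes(x',y')\bigr)$ and $\bigl((f,g)\fatsemi(f',g')\bigr)\otimes\bigl((x,y)\fatsemi(x',y')\bigr)$ using biadditivity of $\otimes$ and $\fatsemi$ together with the interchange law $(a\otimes b)\fatsemi(c\otimes d) = (a\fatsemi c)\otimes(b\fatsemi d)$ of $\catc$, both reduce to the same sum of eight terms $(p\fatsemi q)\otimes(r\fatsemi s)$; preservation of identities is immediate from the formula. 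At the same time the formula gives $\fsub{f}\otimes\fsub{g} = (f\otimes g,\,0) = \fsub{f\otimes g}$ directly.

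For the monoidal structure, take the unit of $\sub(\catc)$ to be $I$ and the associator, unitors, and symmetry to be the $\fsub{-}$-images of those of $\catc$; these are isomorphisms because $\fsub{-}$ is a functor and functors preserve isomorphisms. The pentagon, triangle, and hexagon are equations between composites of these images, hence hold in $\sub(\catc)$ because they hold in $\catc$ and $\fsub{-}$ preserves $\fatsemi$ and (by the previous paragraph) $\otimes$ of morphisms. Naturality of these transformations against a \emph{general} morphism $(f,g)$ of $\sub(\catc)$ is the only point needing care: writing $(f,g) = \fsub{f} + (-\fsub{g})$ in the $\mathrm{Ab}$-enrichment, both legs of each naturality square are additive in the morphism substituted (again by biadditivity of $\otimes$ and $\fatsemi$), the square commutes for every morphism in the image of $\fsub{-}$, so it commutes for all differences of such — equivalently, $\otimes$ on hom-sets is the unique biadditive extension of the $\catc$-level tensor. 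It then follows that $\fsub{-}$ is strict (hence in particular strong) monoidal: its coherence maps $\fsub{A}\otimes\fsub{B}\to\fsub{A\otimes B}$ and $I\to\fsub{I}$ are identities, and the three coherence diagrams of a monoidal functor commute trivially since every arrow in them is the $\fsub{-}$-image of the corresponding arrow of $\catc$.

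\textbf{Main obstacle.} Almost all of this is routine bookkeeping over the $\mathrm{Ab}$-enrichment; the only genuinely delicate point is the justification of naturality and coherence for morphisms of $\sub(\catc)$ that are not themselves in the image of $\fsub{-}$, which is resolved by the additivity argument above — or, packaged differently, by applying the universal property of the group completion hom-set by hom-set so that $\otimes$ is forced to be the unique biadditive extension and all coherence is transported along with it.
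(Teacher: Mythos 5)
Your proof is correct and follows the same overall skeleton as the paper's (well-definedness of the tensor on representatives, coherence data inherited from $\catc$ via $\fsub{-}$, and identity structure maps making the embedding strict, hence strong, monoidal), but two of your local arguments genuinely diverge. First, for well-definedness the paper simply says the computation is ``identical to the case for sequential composition,'' and that earlier computation sums eight equations, invokes cancellativity, and divides by the scalar $2$; your direct collapse of both sides to $s \otimes x + s \otimes y$ with $s = f + g' = f' + g$ is cleaner and, as you note, avoids cancellativity within that step (cancellativity still lurks in the transitivity of $\sim$ that lets you vary the two arguments one at a time, but that is a pre-established fact about the relation rather than a new use). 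Second, for naturality of the associator, unitors, and symmetry against general morphisms of $\sub(\catc)$, the paper grinds through explicit computations on representatives $(f,g)$, $(x,y)$, $(u,v)$, whereas you decompose $(f,g) = \fsub{f} - \fsub{g}$ in the $\mathrm{Ab}$-enrichment and transport commutativity of the squares from the image of $\fsub{-}$ by multi-additivity of both legs. Your packaging is more conceptual and scales better (it handles all the coherence isomorphisms uniformly, and you also make the interchange/bifunctoriality check explicit, which the paper glosses over); the paper's computations have the virtue of being entirely self-contained and of not requiring one to first verify that $\otimes$ on $\sub(\catc)$ is biadditive over the enrichment, though that verification is immediate from the defining formula.
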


\begin{proof}
Firstly, this monoidal product is well-defined, i.e. if $(f, g) \sim (f', g')$ and $(x, y) \sim (x', y')$ then $(f, g) \otimes (x, y) \sim (f', g') \otimes (x', y')$. This can be proved identically to the case for sequential composition.

We inherit the associators, unit, and unitors from $\catc$ as $(\alpha_{A,B,C}, 0_{(A \otimes B) \otimes C, A \otimes (B \otimes C)})$, $I$, $(\lambda_A, 0_{I \otimes A, A})$, $(\rho_A, 0_{A \otimes I, A})$. These will still be invertible by functoriality of the embedding. For naturality:
\begin{equation}
\begin{split}
& ((f,g) \otimes (x,y)) \otimes (u,v) \fatsemi (\alpha_{D,E,F}, 0) \\
={}& \left( \begin{array}{l} ((f \otimes x + g \otimes y) \otimes u + (f \otimes y + g \otimes x) \otimes v) \fatsemi \alpha_{D,E,F}, \\ ((f \otimes x + g \otimes y) \otimes v + (f \otimes y + g \otimes x) \otimes u) \fatsemi \alpha_{D,E,F} \end{array} \right) \\
={}& \left( \begin{array}{l} \alpha_{A,B,C} \fatsemi (f \otimes (x \otimes u + y \otimes v) + g \otimes (x \otimes v + y \otimes u)), \\ \alpha_{A,B,C} \fatsemi (f \otimes (x \otimes v + y \otimes u) + g \otimes (x \otimes u + y \otimes v)) \end{array} \right) \\
={}& (\alpha_{A,B,C}, 0) \fatsemi (f,g) \otimes ((x,y) \otimes (u,v))
\end{split}
\end{equation}

\begin{equation}
\begin{split}
& (f, g) \otimes (\id_I, 0_{I,I}) \fatsemi (\rho_B, 0_{B \otimes I,B}) \\
={}& (f \otimes \id_I + g \otimes 0_{I,I}, f \otimes 0_{I,I} + g \otimes \id_I) \fatsemi (\rho_B, 0_{B \otimes I,B}) \\
={}& (f \otimes \id_I, g \otimes \id_I) \fatsemi (\rho_B, 0_{B \otimes I,B}) \\
={}& ((f \otimes \id_I) \fatsemi \rho_B, (g \otimes \id_I) \fatsemi \rho_B) \\
={}& (\rho_A \fatsemi f, \rho_A \fatsemi g) \\
={}& (\rho_A, 0_{A \otimes I,A}) \fatsemi (f, g)
\end{split}
\end{equation}
and similarly for $(\lambda_A, 0_{I \otimes A, A})$. In the above, we have used compact closure of $\catc$ to infer that $f \otimes 0_{C,D} = 0_{A \otimes C, B \otimes D}$. We may also inherit the triangle and pentagon equations by functoriality of the embedding. We now have that $\sub(\catc)$ is a monoidal category.

For the embedding to be strong monoidal, we first note that the maps $\fsub{-}_0 : I^{\sub(\catc)} \xrightarrow{\simeq} \fsub{I^{\catc}}$ and $(\fsub{-}_2)_{A,B} : \fsub{A} \otimes \fsub{B} \xrightarrow{\simeq} \fsub{A \otimes B}$ are given by identities. Naturality of $\fsub{-}_2$ is just $(f, 0) \otimes (g, 0) = (f \otimes g + 0 \otimes 0, f \otimes 0 + 0 \otimes g) = (f \otimes g, 0)$. The compatibility equations reduce to $\alpha_{\fsub{A}, \fsub{B}, \fsub{C}} = \fsub{\alpha_{A,B,C}}$, $\lambda_{\fsub{A}} = \fsub{\lambda_A}$ and $\rho_{\fsub{A}} = \fsub{\rho_A}$.
\end{proof}

\begin{proposition}
$\sub(\catc)$ is compact closed, and the embedding from $\catc$ preserves symmetry, duals, cups, and caps.
\end{proposition}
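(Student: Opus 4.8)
The plan is to transport the compact-closed structure of $\catc$ along the embedding $\fsub{-}$: every piece of structure on $\sub(\catc)$ is taken to be the $\fsub{-}$-image of the corresponding structure on $\catc$, and one then checks the only two things that do not follow formally from functoriality, namely that the resulting symmetry is natural with respect to \emph{all} morphisms of $\sub(\catc)$ (not just those in the image of $\fsub{-}$) and that the coherence and yanking equations survive. Concretely, I keep the same objects, take the dual of $A$ to be $A^*$, and define the symmetry, cup and cap of $\sub(\catc)$ to be $\fsub{\sigma_{A,B}} = (\sigma_{A,B},0)$, $\fsub{\eta_A} = (\eta_A,0)$ and $\fsub{\epsilon_A} = (\epsilon_A,0)$. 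With these choices, the assertion that $\fsub{-}$ preserves symmetry, duals, cups and caps holds by construction, so the content of the proposition is exactly that $\sub(\catc)$ equipped with this data is compact closed.

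First I would verify naturality of the symmetry. For $(f,g) : A \to B$, expanding the definitions of $\otimes$ and $\fatsemi$ in $\sub(\catc)$ on representatives gives
\[
  \big((f,g)\otimes\id_C\big)\fatsemi\fsub{\sigma_{B,C}} = \big((f\otimes\id_C)\fatsemi\sigma_{B,C},\ (g\otimes\id_C)\fatsemi\sigma_{B,C}\big)
\]
and
\[
  \fsub{\sigma_{A,C}}\fatsemi\big(\id_C\otimes(f,g)\big) = \big(\sigma_{A,C}\fatsemi(\id_C\otimes f),\ \sigma_{A,C}\fatsemi(\id_C\otimes g)\big),
\]
and these two pairs coincide componentwise by naturality of $\sigma$ in $\catc$; note that, unlike the well-definedness arguments in the preceding propositions, this needs no cancellation of scalars, since the components line up directly. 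The symmetry hexagon, the equation $\fsub{\sigma_{A,B}}\fatsemi\fsub{\sigma_{B,A}} = \id$, and compatibility of $\fsub{\sigma}$ with the unitors then follow by applying the functor $\fsub{-}$ to the corresponding equations of $\catc$, using that $\fsub{-}$ is strong monoidal; invertibility is likewise immediate from $\fsub{\sigma_{A,B}}\fatsemi\fsub{\sigma_{B,A}} = \fsub{\sigma_{A,B}\fatsemi\sigma_{B,A}} = \fsub{\id}$. Thus $\sub(\catc)$ is symmetric monoidal and $\fsub{-}$ is a symmetric strong monoidal embedding.

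For compact closure, I would apply $\fsub{-}$ to the two yanking equations $(\id_A\otimes\eta_A)\fatsemi(\epsilon_A\otimes\id_A) = \id_A$ and $(\eta_A\otimes\id_{A^*})\fatsemi(\id_{A^*}\otimes\epsilon_A) = \id_{A^*}$ of $\catc$; since the structure isomorphisms of the strong monoidal functor $\fsub{-}$ are identities, the translation is clean and yields exactly the yanking equations for $\fsub{\eta_A}$ and $\fsub{\epsilon_A}$ in $\sub(\catc)$. Hence $\sub(\catc)$ is compact closed; the dual of a morphism is the one induced by this data, and it agrees with $\fsub{-}$ applied to the dual in $\catc$ on morphisms of the form $\fsub{h}$, so duals of morphisms are preserved as well.

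The only step with any friction is the naturality check for the symmetry, and even there the friction is merely that one must unfold the $\sub(\catc)$-level operations on representatives before invoking $\catc$-naturality; I do not anticipate a genuine obstacle, since everything else is obtained by pushing equations of $\catc$ through the functor $\fsub{-}$, whose functoriality and strong monoidality were already established in the preceding propositions.
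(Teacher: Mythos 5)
Your proposal is correct and follows essentially the same route as the paper: transport the symmetry, cups and caps along $\fsub{-}$, verify naturality of the symmetry explicitly on representatives (the one step not formal from functoriality), and obtain compact closure from the strong monoidal embedding with identity structure maps. The only cosmetic difference is that the paper invokes the standard fact that monoidal functors preserve duals, whereas you push the yanking equations through $\fsub{-}$ directly, which amounts to the same argument.
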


\begin{proof}
The symmetry in $\sub(\catc)$ is $(\sigma_{A,B}, 0_{A \otimes B,B \otimes A}) = \fsub{\sigma_{A,B}}$. The hexagon equation and invertibility follow by functoriality. For naturality:
\begin{equation}
\begin{split}
& (f,g) \otimes (x,y) \fatsemi (\sigma_{C,D}, 0) \\
={}& (f \otimes x + g \otimes y, f \otimes y + g \otimes x) \fatsemi (\sigma_{C,D}, 0) \\
={}& ((f \otimes x) \fatsemi \sigma_{C,D} + (g \otimes y) \fatsemi \sigma_{C,D}, (f \otimes y) \fatsemi \sigma_{C,D} + (g \otimes x) \fatsemi \sigma_{C,D}) \\
={}& (\sigma_{A,B} \fatsemi (x \otimes f) + \sigma_{A,B} \fatsemi (y \otimes g), \sigma_{A,B} \fatsemi (x \otimes g) + \sigma_{A,B} \fatsemi (y \otimes f)) \\
={}& (\sigma_{A,B}, 0) \fatsemi (x \otimes f + y \otimes g, x \otimes g + y \otimes f) \\
={}& (\sigma_{A,B}, 0) \fatsemi (x,y) \otimes (f,g)
\end{split}
\end{equation}

We appeal to the standard result that monoidal functors preserve duals \cite[Theorem 3.14]{Heunen2019}. The given cup and cap are $\fsub{\eta_A} = \fsub{-}_0 \fatsemi \fsub{\eta_A} \fatsemi (\fsub{-}_2)_{A^*,A}^{-1} : I \to \fsub{A^*} \otimes \fsub{A}$ and $\fsub{\epsilon_A} = (\fsub{-}_2)_{A^*,A} \fatsemi \fsub{\epsilon_A} \fatsemi \fsub{-}_0^{-1} : \fsub{A} \otimes \fsub{A^*} \to I$. Since the embedding is bijective on objects, all objects in $\sub(\catc)$ have a dual determined by their dual in $\catc$, hence it is compact closed.
\end{proof}

\begin{proposition}
$\sub(\catc)$ has biproducts, and the embedding from $\catc$ preserves the biproduct structure.
\end{proposition}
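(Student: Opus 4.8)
The plan is to reduce everything to two standard facts about categories enriched over commutative monoids: first, that such a category has all finite biproducts as soon as it has a zero object and binary biproducts; and second, that an object $C$ equipped with maps $\iota_A\colon A\to C$, $\iota_B\colon B\to C$, $p_A\colon C\to A$, $p_B\colon C\to B$ is a biproduct of $A$ and $B$ precisely when $\iota_A\fatsemi p_A=\id_A$, $\iota_B\fatsemi p_B=\id_B$, $\iota_A\fatsemi p_B=0$, $\iota_B\fatsemi p_A=0$ and $p_A\fatsemi\iota_A+p_B\fatsemi\iota_B=\id_C$. Both facts apply to $\sub(\catc)$ because the preceding propositions already establish that it is $\mathrm{Ab}$-enriched, hence in particular enriched over commutative monoids. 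The biproduct data of $\catc$ will then simply be pushed forward along $\fsub{-}$.

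First I would record that $\fsub{-}\colon\catc\to\sub(\catc)$ is additive: $\fsub{f+g}=(f+g,0_{A,B})=(f,0_{A,B})+(g,0_{A,B})=\fsub{f}+\fsub{g}$, and $\fsub{0_{A,B}}=(0_{A,B},0_{A,B})$ is the zero of the abelian group $\sub(\catc)(A,B)$; so $\fsub{-}$ sends zero morphisms to zero morphisms. Next I would show that the zero object $0$ of $\catc$ is still a zero object in $\sub(\catc)$: since $\catc(A,0)$ is a singleton by terminality, the only pair over it is $(0_{A,0},0_{A,0})$ and hence $\sub(\catc)(A,0)$ is a singleton, and dually $\sub(\catc)(0,A)$ is a singleton by initiality; thus $0$ is a zero object, preserved by $\fsub{-}$ since the functor is the identity on objects. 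Finally, for a pair $A,B$ I would take the biproduct $A\oplus B$ of $\catc$ with its injections and projections $\iota_A,\iota_B,p_A,p_B$, which satisfy the five equations above in $\catc$; applying the functor $\fsub{-}$ and using functoriality together with additivity, the maps $\fsub{\iota_A},\fsub{\iota_B},\fsub{p_A},\fsub{p_B}$ satisfy the very same equations in $\sub(\catc)$. By the equational characterisation this exhibits $A\oplus B$ as a biproduct of $A$ and $B$ in $\sub(\catc)$, with the canonical cone and cocone given by the embedded maps — so $\fsub{-}$ preserves binary biproducts, and together with the zero object this gives all finite biproducts (cf.\ \cite{Houston2008}).

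The argument is essentially bookkeeping, and the only delicate point — the nearest thing to an obstacle — is ensuring that the equational characterisation legitimately yields the \emph{universal} properties of product and coproduct in $\sub(\catc)$, not merely the stated equations. This is exactly where we invoke that $\sub(\catc)$ is $\mathrm{Ab}$-enriched with a compatible zero object (established in the earlier propositions), which is the precise hypothesis under which those equations force biproduct universality; one should also note in passing that the already-proven faithfulness and strong monoidality of $\fsub{-}$ make the identification of the embedded structure maps with genuine biproduct data unambiguous.
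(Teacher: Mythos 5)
Your proof is correct and follows essentially the same route as the paper: push the biproduct data of $\catc$ forward along $\fsub{-}$ and verify that the characteristic equations ($\iota_{A_i} \fatsemi p_{A_j} = \delta_{ij}$ and $p_A \fatsemi \iota_A + p_B \fatsemi \iota_B = \id$) are preserved by functoriality and additivity, then invoke the standard equational characterisation of biproducts in an additively enriched category. You spell out the additivity of the embedding and the preservation of the zero object more explicitly than the paper does, but the argument is the same.
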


\begin{proof}
We can inherit the biproduct $A \oplus B$ with injections $(\iota_A, 0_{A,A \oplus B}), (\iota_B, 0_{A,A \oplus B})$ and projections $(p_A, 0_{A \oplus B,A}), (p_B, 0_{A \oplus B,B})$. The characteristic equations $\iota_{A_i} \fatsemi p_{A_j} = \begin{cases}
\id_{A_i} & i = j \\
0_{A_i,A_j} & i \neq j
\end{cases}$ and $p_A \fatsemi \iota_A + p_B \fatsemi \iota_B = \id_{A \oplus B}$ are preserved by functoriality.
\end{proof}

\begin{proposition}\label{prop:sub_preserves_basis}
Any set of states $\{\rho_i\}_i \subseteq \catc(I,A)$ is a basis for $A$ in $\catc$ (in the sense of being a minimal set such that $(\forall i . \rho_i \fatsemi f = \rho_i \fatsemi g) \Rightarrow f = g$) iff $\{\fsub{\rho_i}\}_i$ is a basis for $A$ in $\sub(\catc)$.
\end{proposition}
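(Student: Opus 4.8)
The plan is to isolate the key fact as a biconditional about \emph{distinguishing sets} and then deduce the claim about bases by applying it to all relevant subsets. Call a set of states $\{\rho_i\}_i \subseteq \catc(I,A)$ \emph{distinguishing} in $\catc$ if for every $B$ and every $f,g \in \catc(A,B)$ we have $(\forall i.\ \rho_i \fatsemi f = \rho_i \fatsemi g) \Rightarrow f = g$, and likewise in $\sub(\catc)$. A basis is by definition a minimal distinguishing set, so it suffices to prove
\[
\{\rho_i\}_i \text{ is distinguishing in } \catc \iff \{\fsub{\rho_i}\}_i \text{ is distinguishing in } \sub(\catc);
\]
applying this equivalence to each subset $\{\rho_i\}_{i \neq k}$ then shows that a proper subset is distinguishing on one side iff the corresponding subset is on the other, so minimality transfers in both directions.

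For the ``$\Leftarrow$'' direction I would use functoriality and faithfulness of $\fsub{-}$ (both established earlier in this appendix): given $f, g \in \catc(A,B)$ with $\rho_i \fatsemi f = \rho_i \fatsemi g$ for all $i$, applying $\fsub{-}$ gives $\fsub{\rho_i} \fatsemi \fsub{f} = \fsub{\rho_i} \fatsemi \fsub{g}$ for all $i$; distinguishing-ness in $\sub(\catc)$ yields $\fsub{f} = \fsub{g}$, and faithfulness yields $f = g$.

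For the ``$\Rightarrow$'' direction, the point is that probing a formal difference by an honest state collapses nicely. Take $F, G \in \sub(\catc)(A,B)$ with $\fsub{\rho_i} \fatsemi F = \fsub{\rho_i} \fatsemi G$ for all $i$, and pick representatives $F = (f_1, f_2)$, $G = (g_1, g_2)$ with $f_j, g_j \in \catc(A,B)$. Since $\fsub{\rho_i} = (\rho_i, 0_{I,A})$ and $0_{I,A} \fatsemi h = 0_{I,B}$, the composition formula of $\sub(\catc)$ gives $\fsub{\rho_i} \fatsemi F = (\rho_i \fatsemi f_1,\ \rho_i \fatsemi f_2)$ and similarly for $G$. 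Unfolding the defining relation of $\sub(\catc)$, the hypothesis becomes $\rho_i \fatsemi (f_1 + g_2) = \rho_i \fatsemi (g_1 + f_2)$ for all $i$, where both sides are genuine morphisms of $\catc$; distinguishing-ness in $\catc$ then forces $f_1 + g_2 = g_1 + f_2$, i.e.\ $F = G$.

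The main obstacle is exactly this last implication: one has to notice that the extra morphisms of $\sub(\catc)$ do not make it harder to separate maps, because composing a formal difference $(f_1, f_2)$ with a state $\fsub{\rho_i}$ never mixes the positive and negative parts --- it just produces $(\rho_i \fatsemi f_1,\ \rho_i \fatsemi f_2)$ --- so the test reduces to a comparison of sums of morphisms already living in $\catc$, where the hypothesis applies directly. Everything else is routine bookkeeping with the embedding and with the equivalence relation defining $\sub(\catc)$.
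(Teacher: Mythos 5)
Your proposal is correct and follows essentially the same route as the paper: both reduce the statement to the equivalence of the distinguishing property in $\catc$ and $\sub(\catc)$ (with minimality transferring subset-by-subset), prove the backward direction via functoriality and faithfulness of $\fsub{-}$, and prove the forward direction by unfolding $\fsub{\rho_i} \fatsemi (f_1,f_2) = (\rho_i \fatsemi f_1, \rho_i \fatsemi f_2)$ and the relation $\sim$ into the single $\catc$-equation $\rho_i \fatsemi (f_1 + g_2) = \rho_i \fatsemi (g_1 + f_2)$. No gaps.
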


\begin{proof}
It is sufficient to just show equivalence between the ability to distinguish morphisms in each category, since preservation of minimality follows as a basic consequence of this (any subset of the $\rho_i$ would distinguish all morphisms in $\catc$ iff it does so in $\sub(\catc)$).

$\Rightarrow$: Suppose $\forall B . \forall f, g \in \catc(A,B) . (\forall i . \rho_i \fatsemi f = \rho_i \fatsemi g) \Rightarrow f = g$. Consider an arbitrary $B$ and some $(f^+, f^-), (g^+, g^-) \in \sub(\catc)(A,B)$. Suppose that $\forall i . \fsub{\rho_i} \fatsemi (f^+, f^-) \sim \fsub{\rho_i} \fatsemi (g^+, g^-)$. Unpacking this, we have $\forall i . \rho_i \fatsemi (f^+ + g^-) = \rho_i \fatsemi (g^+ + f^-)$. Since the $\rho_i$ form a basis in $\catc$, we have $f^+ + g^- = g^+ + f^-$, i.e. $(f^+, f^-) \sim (g^+, g^-)$.

$\Leftarrow$: Suppose $\forall B . \forall (f^+, f^-), (g^+, g^-) \in \sub(\catc)(A,B) . (\forall i . \fsub{\rho_i} \fatsemi (f^+, f^-) \sim \fsub{\rho_i} \fatsemi (g^+, g^-)) \Rightarrow (f^+, f^-) \sim (g^+, g^-)$. Consider an arbitrary $B$ and some $f, g \in \catc(A,B)$ such that $\forall i . \rho_i \fatsemi f = \rho_i \fatsemi g$. By adding $\rho_i \fatsemi 0_{A,B}$ on both sides we get $\forall i . \fsub{\rho_i} \fatsemi \fsub{f} \sim \fsub{\rho_i} \fatsemi \fsub{g}$, and then we can use the basis property to derive that $\fsub{f} \sim \fsub{g}$, i.e. $f = f + 0_{A,B} = g + 0_{A,B} = g$.
\end{proof}

\section{Proofs for Section \ref{sec:apc}}

\begin{proposition}[Restatement of Proposition \ref{prop:apc-cancellative}]
  In an additive precausal category:
  \begin{equation}
    \tag{APC5a}
    \hfill
    \forall f, g, h \in \catc(A,B).\ 
    f + h = g + h \Rightarrow f = g
    \hfill
  \end{equation}
\end{proposition}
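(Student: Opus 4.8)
The plan is to reduce the cancellation law for morphisms to the cancellation law for scalars (\ref{apc:scalars}), using the compact-closed transpose to move everything into the scalar monoid $\catc(I,I)$ and a finite causal basis (\ref{apc:basis}) to certify equality there.

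First I would record that, for any objects $A,B$, the transpose isomorphisms coming from compact closure give a bijection
\[
  \Theta : \catc(A,B) \xrightarrow{\ \sim\ } \catc(I, A^*\otimes B) \xrightarrow{\ \sim\ } \catc\bigl((A^*\otimes B)^*,\, I\bigr),
\]
and that $\Theta$ is \emph{additive}, $\Theta(x+y) = \Theta(x)+\Theta(y)$: it is assembled from composition with fixed morphisms (caps, cups, and structural isomorphisms) and from tensoring the variable morphism with an identity, and both composition and $\otimes$ distribute over $+$ — composition because $\catc$ is enriched over commutative monoids via its biproducts, and $\otimes$ because $\catc$ is compact closed (the distributivity already used in Appendix~\ref{sec:subtraction}). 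Writing $D := (A^*\otimes B)^*$, the hypothesis $f+h=g+h$ transports to $\Theta(f)+\Theta(h) = \Theta(g)+\Theta(h)$ in $\catc(D,I)$, so it suffices to prove the cancellation law for effects out of $D$.

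Then I would fix a finite causal basis $\{\mu_k\}_k$ for $D$ by \ref{apc:basis}. For each $k$, precomposition with $\mu_k$ is additive, so $\Theta(f)+\Theta(h)=\Theta(g)+\Theta(h)$ yields $\mu_k\fatsemi\Theta(f) + \mu_k\fatsemi\Theta(h) = \mu_k\fatsemi\Theta(g) + \mu_k\fatsemi\Theta(h)$ as an equation of scalars; cancellativity of scalar addition (\ref{apc:scalars}) gives $\mu_k\fatsemi\Theta(f) = \mu_k\fatsemi\Theta(g)$ for every $k$, and the defining property of a causal basis (\ref{apc:basis}, with codomain $I$) then forces $\Theta(f)=\Theta(g)$, whence $f=g$ since $\Theta$ is injective. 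The only points requiring care are the bookkeeping verifications that $\Theta$ is additive and that a causal basis of $D$ separates effects $D\to I$; these are the main obstacle, but a mild one, since the conceptual content is simply the passage from morphisms through states and effects down to scalars, where \ref{apc:scalars} does the work.
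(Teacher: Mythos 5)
Your proof is correct and follows essentially the same strategy as the paper's: push the equation down to scalars by composing with causal basis elements (APC3), cancel there using APC4, and recover equality of morphisms from the separating property of the basis. The only cosmetic difference is that you transpose $f,g,h$ into effects on $(A^*\otimes B)^*$ and use one basis for that object, whereas the paper sandwiches the morphisms between a basis for $A$ and the transposed basis for $B^*$, applying APC3 twice.
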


\begin{proof}
  By \ref{apc:basis}, we can fix bases of causal states $\{\rho_i\}_i$, $\{\mu_j\}_j$ for the systems $A$ and $B^*$ respectively. By transposing the $\mu_j$, we can regard them as effects $\mu_j^* : B \to I$ defined as $\mu_j^* := (\mu_j \otimes \id_B) \fatsemi \epsilon_{B^*}$. Now, suppose $f + h = g + h$. Then, for all $i,j$:
  \[
    \hfill
    \rho_i \fatsemi (f + h) \fatsemi \mu_j^* = 
    \rho_i \fatsemi (g + h) \fatsemi \mu_j^*
    \ \ \implies\ \ 
    \rho_i \fatsemi f \fatsemi \mu_j^* +
    \rho_i \fatsemi h \fatsemi \mu_j^* =
    \rho_i \fatsemi g \fatsemi \mu_j^* +
    \rho_i \fatsemi h \fatsemi \mu_j^*
    \hfill
  \]
  Applying cancellation for scalars \ref{apc:scalars}, we conclude that $\rho_i \fatsemi f \fatsemi \mu_j^* = \rho_i \fatsemi g \fatsemi \mu_j^*$. Applying \ref{apc:basis} to the basis $\{\mu_j\}_j$ gives $\rho_i \fatsemi f = \rho_i \fatsemi g$. Then, a second application of \ref{apc:basis} to $\{\rho_i\}_i$ gives $f = g$.
\end{proof}

\begin{proposition}[Restatment of Proposition \ref{prop:subc-field}]
  For a precausal category $\catc$, the scalars $K := \sub(\catc)(I,I)$ are a field, and hence $\sub(\catc)$ is enriched over $K$-vector spaces.
\end{proposition}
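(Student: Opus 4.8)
The plan is to upgrade the $\mathrm{Ab}$-enrichment of $\sub(\catc)$ (established earlier in this appendix) to a field structure on $K=\sub(\catc)(I,I)$ by checking the remaining ring axioms and, crucially, producing multiplicative inverses. First, $K$ is already an abelian group under $+$. Its multiplication is composition $\fatsemi$, which is associative, unital with $\id_I$, distributes over $+$ because composition is bilinear in the $\mathrm{Ab}$-enrichment, and is commutative by the usual Eckmann--Hilton argument for scalars in a symmetric monoidal category ($\lambda \fatsemi \mu = \lambda\otimes\mu = \mu\fatsemi\lambda$). Thus $K$ is a commutative ring. We also note $\id_I \neq 0_{I,I}$: otherwise every scalar of $\catc$ would be $0$ and $d_I = \maxmix_I \fatsemi \discard_I = 0$ would fail to be invertible (or $\catc$ is the trivial category, which we exclude), and faithfulness of $\fsub{-}$ transports this to $\sub(\catc)$.

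The key step is invertibility of nonzero elements, and this is exactly where the total pre-order clause of \ref{apc:scalars} is used. Represent a nonzero element of $K$ by a pair $(x,y)$ with $x\neq y$ in $\catc(I,I)$. By the total pre-order there is a scalar $z\in\catc(I,I)$ with $x+z=y$ or $x=y+z$, and $z\neq 0$ by cancellativity of scalar addition (else $x=y$). In the first case $(x,y)\sim(0,z)$ and in the second $(x,y)\sim(z,0)$. Since $z$ is a nonzero scalar of $\catc$ it has an inverse $z^{-1}$ with $z\fatsemi z^{-1}=\id_I$ by \ref{apc:scalars}; expanding the definition of composition in $\sub(\catc)$ then shows $(0,z)\fatsemi(0,z^{-1}) = (z\fatsemi z^{-1},0)$ and $(z,0)\fatsemi(z^{-1},0) = (z\fatsemi z^{-1},0)$, both equal to the class of $(\id_I,0_{I,I})$. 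Hence every nonzero element of $K$ is invertible and $K$ is a field.

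Finally, enrichment over $K$-vector spaces is a formal consequence: in any $\mathrm{Ab}$-enriched symmetric monoidal category every homset $\sub(\catc)(A,B)$ carries an action of the commutative ring $\mathrm{End}(I)=K$ — namely $\lambda$ acts on $f\colon A\to B$ by tensoring with $\lambda$ and re-associating through the unitors — and both composition and $\otimes$ are $K$-bilinear; since $K$ is now a field these $K$-modules are $K$-vector spaces. I expect the only genuinely delicate point to be the reduction in the second paragraph, including the bookkeeping that the "scalars" referred to in \ref{apc:scalars} live in $\catc(I,I)$, whereas elements of $K$ are equivalence classes of pairs over $\catc(I,I)$; everything else is routine ring-theoretic verification or an appeal to results already in this appendix together with standard facts about monoidal categories.
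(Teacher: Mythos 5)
Your proof is correct and follows essentially the same route as the paper's: use the total pre-order clause of \ref{apc:scalars} to rewrite a nonzero class $(x,y)$ as $(z,0)$ or $(0,z)$ with $z \neq 0$, invert $z$ in $\catc(I,I)$, and verify via the composition formula that this gives an inverse in $K$. You additionally spell out routine details the paper takes for granted (the commutative ring axioms, the $\id_I \neq 0_{I,I}$ check, and the observation that ``nonzero'' means $x \neq y$, whence $z \neq 0$ by cancellativity --- your justification of this last point is in fact slightly more careful than the paper's).
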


\begin{proof}
  As the subtractive closure of the semiring $\catc(I,I)$, we already know that $K$ is a ring, so it suffices to show that $K$ has multiplicative inverses. Take a non-zero element $k \in K$, represented by a pair $(x,y)$ of elements $x,y \in \catc(I,I)$. Then, since $\catc(I,I)$ is totally pre-ordered (\ref{apc:scalars}), there exists $z$ such that $x = y + z$ or $x + z = y$. In the first case, $(x,y) \sim (z,0)$ and in the second, $(x,y) \sim (0,z)$. Since at least one of $x$ and $y$ is non-zero, $z$ must also be non-zero, so by \ref{apc:scalars} it has an inverse $z^{-1}$. Hence we can take $k^{-1}$ to be either $(z^{-1},0)$ or $(0,z^{-1})$. Enrichment in $K$-vector spaces then follows immediately.
\end{proof}

\begin{proposition}[Restatment of Proposition \ref{prop:bintests}]
  For any $f: A \to B$ in an additive precausal category, there exists $f' : A \to B$ and a scalar $\lambda$ such that:
  \begin{equation}\label{apc:bintests}
    \tag{APC5a}
    \hfill
    f + f' = \lambda \cdot \discard_A \fatsemi \maxmix_B
    \hfill
  \end{equation}
\end{proposition}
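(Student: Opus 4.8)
The plan is to reduce the claim for an arbitrary process $f : A \to B$ to the already-established claim for effects, namely \ref{apc:effects}, by using compact closure to bend the output wire of $f$ round into an input. First I would form the coname of $f$, the effect $\lceil f \rceil := (f \otimes \id_{B^*}) \fatsemi \epsilon_B : A \otimes B^* \to I$. By the yanking equations, the assignment $\pi \mapsto \widetilde{\pi} := (\id_A \otimes \eta_B) \fatsemi (\pi \otimes \id_B) : A \to B$ is a two-sided inverse of $f \mapsto \lceil f \rceil$, so transposition is a bijection $\catc(A \otimes B^*, I) \cong \catc(A, B)$. The key structural observation is that this bijection is \emph{additive} (and commutes with scalar multiplication): in any compact closed category with additive enrichment both $\otimes$ and $\fatsemi$ distribute over $+$, a fact already used in the appendix, so transposition respects sums of morphisms.

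Next I would apply \ref{apc:effects} to the effect $\lceil f \rceil$ on the object $A \otimes B^*$, obtaining an effect $\pi' : A \otimes B^* \to I$ and a scalar $\lambda$ with $\lceil f \rceil + \pi' = \lambda \cdot \discard_{A \otimes B^*}$, which by \ref{apc:discard} equals $\lambda \cdot (\discard_A \otimes \discard_{B^*})$. Setting $f' := \widetilde{\pi'}$ and transposing this whole equation, additivity of transposition gives $f + f' = \lambda \cdot \widetilde{\,\discard_A \otimes \discard_{B^*}\,}$. It then remains only to identify the transpose of $\discard_A \otimes \discard_{B^*}$ with the uniform noise process $\discard_A \fatsemi \maxmix_B$: unwinding the transpose formula, the computation factors as $\discard_A$ acting on the $A$-wire followed by $\eta_B \fatsemi (\discard_{B^*} \otimes \id_B) : I \to B$ on the remaining wire, and $\eta_B \fatsemi (\discard_{B^*} \otimes \id_B)$ is exactly the transpose of $\discard_{B^*}$, i.e. $\maxmix_B$ by the standard definition of the maximally mixed state as discarding-on-the-dual bent into a state (cf.~\cite{Kissinger2019a}). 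This yields $f + f' = \lambda \cdot (\discard_A \fatsemi \maxmix_B)$, as required.

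I expect the only mildly delicate step to be this last identification — keeping the coname/transpose conventions, associators and unitors consistent, and invoking the precise relationship between $\maxmix_B$ and $\discard_{B^*}$ — rather than anything conceptually difficult. One can cross-check the whole argument in \MatRp: there $\lceil f \rceil$ is the ``vectorised'' matrix of $f$, \ref{apc:effects} chooses $\lambda$ larger than every entry, the complementary effect has entries $\lambda$ minus those of $f$, and transposing back gives $f + f' = \lambda \cdot (\text{all-ones matrix}) = \lambda \cdot (\discard_A \fatsemi \maxmix_B)$, confirming both the statement and the identification of the right-hand side. Everything else is a direct appeal to \ref{apc:effects}, \ref{apc:discard}, and additive enrichment.
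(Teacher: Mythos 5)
Your proposal is correct and follows essentially the same route as the paper: bend $f$ into an effect via the compact structure, apply \ref{apc:effects} to obtain a complementary effect and scalar $\lambda$, and bend back, using \ref{apc:discard} and the identification of the transpose of $\discard_{B^*}$ with $\maxmix_B$ to recognise the right-hand side as the uniform noise process. The only difference is cosmetic (which side the output wire is bent to), and your additional remarks on the additivity of transposition make explicit a step the paper leaves implicit.
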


\begin{proof}
  We can first make $f$ into an effect $\pi_f : A^* \otimes B \to I$ using the compact structure: $\pi_f := (\id_{B^*} \otimes f) \fatsemi \epsilon_B$. Then, applying \ref{apc:effects}, we get a complement $\pi'$ satisfying $\pi_f + \pi' = \lambda \cdot (\discard_{B^*} \otimes \discard_A)$. Then, taking $f' := (\eta_B \otimes \id_A) \fatsemi (\id_B \otimes \pi')$ satisfies the required property \ref{apc:bintests}.
\end{proof}

\begin{lemma}[Restatement of Lemma \ref{lemma:extend_basis}]\label{proof:lemma:extend_basis}
Given any set of morphisms in $\catc(A,B)$ that are linearly independent in $\sub(\catc)$, they can be exteded to a basis in $\catc$ with a dual basis in $\sub(\catc)$.
\end{lemma}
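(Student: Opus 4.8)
The plan is to transfer the whole problem into the $K$-vector space $V := \sub(\catc)(A,B)$, where $K := \sub(\catc)(I,I)$ is the field from Proposition~\ref{prop:subc-field}, and reduce the statement to elementary finite-dimensional linear algebra. Here a \emph{basis in $\catc$} will mean a family $\{f_i\}$ of morphisms of $\catc$ whose images $\{\fsub{f_i}\}$ form a $K$-basis of $V$, and the \emph{dual basis in $\sub(\catc)$} will be a family of morphisms realising the corresponding dual basis under a compact-closed pairing. Two facts about $V$ are needed: that it is finite-dimensional, and that it is $K$-spanned by the image $W := \{\fsub{f} \mid f \in \catc(A,B)\}$ of $\catc$. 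The latter is immediate: every morphism of $\sub(\catc)(A,B)$ is of the form $\fsub{f} - \fsub{g}$ with $f,g \in \catc(A,B)$, so $W$ spans $V$ over $K$ since $-1 \in K$.

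For finite-dimensionality I would use \ref{apc:basis} to fix finite causal bases $\{\rho_i\}_{i=1}^{n} \subseteq \catc(I,A)$ and $\{\mu_j\}_{j=1}^{m} \subseteq \catc(I,B^*)$, transposing the latter to effects $\mu_j^* : B \to I$ via compact closure; by Proposition~\ref{prop:sub_preserves_basis} both remain bases in $\sub(\catc)$. Then the $K$-linear map $V \to K^{n\times m}$, $F \mapsto (\fsub{\rho_i} \fatsemi F \fatsemi \fsub{\mu_j^*})_{i,j}$, is injective: if all entries vanish then for each $i$ the state $\fsub{\rho_i}\fatsemi F$ is annihilated by every $\mu_j^*$ and hence is $0$, whereupon the basis property of $\{\rho_i\}$ forces $F = 0$. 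So $\dim_K V \le nm$. The same argument applied with $A^*\otimes B$ in place of $A$ and $I$ in place of $B$ shows $\sub(\catc)(A^*\otimes B, I)$ is finite-dimensional as well.

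Now the linear algebra. Since $W$ spans the finite-dimensional $V$ and the given morphisms lie in $W$ and are $K$-linearly independent, the standard extension argument enlarges them to a $K$-basis $\{\fsub{f_i}\}_{i\in S}$ of $V$ with every $f_i \in \catc(A,B)$; this is the desired basis in $\catc$. For the dual basis, transport along compact closure to view $V \cong \sub(\catc)(I, A^*\otimes B)$ and pair it with $\sub(\catc)(A^*\otimes B, I)$ via $\langle \sigma,\pi\rangle := \sigma \fatsemi \pi \in K$. This pairing is non-degenerate on both sides: a state killed by all effects is in particular killed by the transposed effects of a causal basis of $(A^*\otimes B)^*$, hence is $0$, and an effect killed by all states is killed by a causal basis of $A^*\otimes B$, hence is $0$; combined with finite-dimensionality of both spaces this makes the pairing perfect. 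Hence $\{\fsub{f_i}\}_{i\in S}$ admits a unique dual basis of effects on $A^*\otimes B$, equivalently morphisms $g_i : B \to A$ in $\sub(\catc)$, satisfying $\langle f_i, g_j\rangle = \delta_{ij}\cdot\id_I$.

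The main obstacle is the finite-dimensionality of $V$: this is the one place the deeper axiom \ref{apc:basis} enters, through its transfer to $\sub(\catc)$ (Proposition~\ref{prop:sub_preserves_basis}), and it is also what makes the extension argument terminate. The second subtlety is that the new basis vectors must be chosen inside the image of $\catc$ rather than in $\sub(\catc)$ at large, which is handled by the spanning observation that $W$ generates $V$ over $K$. Once these are in hand, the remainder — extending a linearly independent subset of a spanning set to a basis and reading off the dual basis from a perfect pairing — is entirely routine.
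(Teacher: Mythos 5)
Your proof is correct and rests on the same three pillars as the paper's: the finite causal bases from \ref{apc:basis}, their preservation under $\fsub{-}$ (Proposition \ref{prop:sub_preserves_basis}), and the fact that $K = \sub(\catc)(I,I)$ is a field (Proposition \ref{prop:subc-field}), so that everything reduces to finite-dimensional linear algebra over $K$. The execution differs in one respect: where you obtain the dual basis abstractly, by checking that the composition pairing $\sub(\catc)(I,A^*\otimes B) \times \sub(\catc)(A^*\otimes B, I) \to K$ is non-degenerate on both sides and hence perfect in finite dimensions, the paper constructs it explicitly by Gaussian elimination (column operations) on the matrix $m_{i,j} = \fsub{\rho_i \fatsemi \pi_j}$ of pairings between a minimal causal basis and the transposed basis of the dual object, and then repeats the diagonalisation after adjoining the given independent set. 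Your version has the advantage of making explicit two facts the paper leaves implicit --- that $\sub(\catc)(A,B)$ is finite-dimensional over $K$ and that it is $K$-spanned by the image of $\catc(A,B)$ --- which is precisely what licenses both the Steinitz extension step and the identification of $\sub(\catc)(A^*\otimes B,I)$ with the full linear dual; the paper's version has the advantage of being constructive. The only point you pass over silently is that a $K$-basis of the hom-space contained in the image of $\catc$ is a ``basis in $\catc$'' in the paper's separating-set sense; this is immediate (spanning gives separation, and the dual basis witnesses minimality), so there is no gap.
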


\begin{proof}
By compact closure it is sufficient to consider just states $\catc(I,A)$.

Let $\{\rho_i\}_i \subseteq \catc(I,A)$ be a minimal finite basis which must exist wlog from \ref{apc:basis}. We will start by showing that this has a dual basis, i.e. a set of effects $\{e_i\}_i \subseteq \sub(\catc)(A,I)$ such that $\fsub{\rho_i} \fatsemi e_j \sim \delta_{i,j} = \begin{cases}
\id_I & i=j \\
0_{I,I} & i\neq j
\end{cases}$. The vector of scalars $\rho_i \fatsemi \pi$ must uniquely identify any effect $\pi \in \catc(A,I)$, giving us a coordinate system for effects. Similarly, the set of effects $\{\pi_j\}_j \subseteq \catc(A,I)$ formed as the transpose of a minimal basis for $A^*$ from \ref{apc:basis} can yield coordinates $\rho \fatsemi \pi_j$ that uniquely describe any state $\rho \in \catc(I,A)$.

We can build a matrix of the scalars (in $\sub(\catc)$) formed by the inner products $m_{i,j} = \fsub{\rho_i \fatsemi \pi_j}$, so the rows describe coordinates of the states $\{\rho_i\}_i$ and columns for the effects $\{\pi_j\}_j$. Performing column operations such as rescaling by a non-zero scalar $\alpha$ or summing columns $j$ and $k$ generates the coordinates of the effect $\alpha \cdot \pi_j$ or $\pi_j + \pi_k$. By cancellative addition and invertibility of non-zero scalars from \ref{apc:scalars}, we can still represent each effect in $\{\pi_j\}_j$ as a linear combination of the new column effects. Since the scalars of $\sub(\catc)$ form field, we can apply Gaussian elimination with column operations to yield column effects $\{e_j\}_j \subseteq \sub(\catc)(A,I)$ such that $\fsub{\rho_i} \fatsemi e_j = \delta_{i,j}$ (Gaussian elimination completes since any zero columns would indicate $\{\pi_j\}_j$ were linearly dependent and not a minimal basis and extra unsolved rows would similarly contradict minimality of $\{\rho_i\}_i$).




The new column effects $\{e_j\}_j$ still represent a basis for $A^*$ in $\sub(\catc)$ under transposition. The $\{\fsub{\pi_j^*}\}_j$ form a basis for $A^*$ in $\sub(\catc)$ by \ref{prop:sub_preserves_basis}, and one output of Gaussian elimination (with back substitution) is a matrix of constants $\{\alpha_{j,k}\}_{j,k} \subseteq \sub(\catc)(I,I)$ such that $\fsub{\pi_j} \sim \sum_k \alpha_{j,k} \cdot e_j$. Hence if any two morphisms $f, g \in \sub(\catc)(A^*,B)$ agree on $\{e_j^*\}_j$, they must agree on any linear combination, meaning they agree on each of $\{\fsub{\pi_j^*}\}_j$ and so they are equal. Minimality comes from the need to distinguish the $\rho_i$ under transposition (if some $e_i^*$ were excluded, the remainder whould not be able to determine $\fsub{\rho_i^*} \not\sim \fsub{0_{A^*,I}} \in \sub(\catc)(A^*,I)$ since all inner products will be zero in both cases). We now have that $\{e_j\}_j$ is a dual basis to $\{\rho_i\}_i$.

Given any set of states $\{s_k\}_k \subseteq \catc(I,A)$ that are linearly independent in $\sub(\catc)$, we can extend the set to $\{s_i'\}_i$ by adding any terms from $\{\rho_i\}_i$ that preserve linear independence. We can then similarly diagonalise the inner products with $\{e_j\}_j$ to represent each $\rho_i$ as a linear combination of $\{s_i'\}_i$ and prove that they form basis and construct a dual basis for it.
\end{proof}

\begin{theorem}[Restatement of Theorem \ref{thrm:affine}]\label{proof:thrm:affine}
Given any flat set $c \subseteq \catc(I,A)$ for a non-zero $A$, $c^{**} = \affp(c)$.
\end{theorem}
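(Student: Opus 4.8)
The plan is to prove the two inclusions $\affp(c) \subseteq c^{**}$ and $c^{**} \subseteq \affp(c)$ separately, leaning on the fact that $\affp(c)$ is defined via affine combinations in $\sub(\catc)$ which, by Proposition~\ref{prop:subc-field}, is enriched over a field $K$, so we may use ordinary linear algebra there.

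For the easy inclusion $\affp(c) \subseteq c^{**}$, I would take $\rho \in \affp(c)$, so $\fsub{\rho} = \sum_i \lambda_i \cdot \fsub{\rho_i}$ with $\rho_i \in c$ and $\sum_i \lambda_i = \id_I$. Then for any $\pi \in c^*$ we have $\rho_i \fatsemi \pi = \id_I$ for all $i$, so $\fsub{\rho} \fatsemi \fsub{\pi} = \sum_i \lambda_i \cdot (\fsub{\rho_i} \fatsemi \fsub{\pi}) = (\sum_i \lambda_i) \cdot \id_I = \id_I$. Since $\fsub{-}$ is faithful, $\rho \fatsemi \pi = \id_I$, hence $\rho \in c^{**}$.

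The hard part is the reverse inclusion $c^{**} \subseteq \affp(c)$. Here I would work inside the vector space $V := \sub(\catc)(I,A)$ over $K$. Consider the image $\fsub{c} \subseteq V$ and its affine span $\aff(c)$, which is an affine subspace of $V$; let $W$ be the associated linear subspace (the span of differences of elements of $\fsub{c}$). The strategy is: any linear functional $\phi : V \to K$ that is constant (equal to $\id_I$) on $\aff(c)$ must, when restricted to $\catc$, come from an element of $c^*$ — or at least, enough such functionals must, so that if $\fsub{\rho}$ fails to lie in $\aff(c)$ there is a $\pi \in c^*$ with $\rho \fatsemi \pi \neq \id_I$. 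Concretely, pick a basis: by flatness $\mu \cdot \maxmix_A \in c$ for some invertible $\mu$, and $\nu \cdot \discard_A \in c^*$; using Lemma~\ref{lemma:extend_basis} extend a maximal independent subset of $\fsub{c}$ (or of $c$ containing $\mu \cdot \maxmix_A$) to a basis of $V$ with a dual basis of effects in $\sub(\catc)$. If $\fsub{\rho} \notin \aff(c)$, the affine coordinates of $\fsub{\rho}$ with respect to this basis fail the affine constraint $\sum(\text{coords}) = \text{const}$, equivalently there is a linear functional $\psi$ on $V$, expressible via the dual basis, that is constant on $\aff(c)$ but takes a different value on $\fsub{\rho}$. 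The genuine obstacle is to arrange that this separating functional can be taken to lie in $\catc(A,I)$ (not merely $\sub(\catc)(A,I)$) and to be normalised so it belongs to $c^*$: this is where flatness must be used, adding a large multiple of $\nu \cdot \discard_A$ (which is constant $=\nu$ on all of $c$, since every element of $c$ is normalised against $\discard_A$ up to the scalar, using $\mu \cdot \maxmix_A \in c$ to pin down the normalisation) to push the functional into the positive cone $\catc(A,I)$ while preserving the fact that it separates. Since $\discard_A$ evaluates to the same scalar on every element of $c$ (by flatness and $c = c^{**}$), adding multiples of it shifts the value on $\aff(c)$ uniformly and on $\fsub{\rho}$ by the same amount only if $\fsub{\rho}$ has the "correct" $\discard_A$-value — and if it does not, then $\discard_A$ itself (suitably scaled, an element of $c^*$) already witnesses $\rho \notin c^{**}$. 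So the argument splits on whether $\fsub{\rho}$ lies on the correct affine hyperplane cut out by $\discard_A$.

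Finally, having shown $\fsub{\rho} \in \aff(c)$ for $\rho \in c^{**} \subseteq \catc(I,A)$, the definition of $\affp(c)$ gives $\rho \in \affp(c)$ immediately, since $\affp(c) = \{\rho \in \catc(I,A) \mid \fsub{\rho} \in \aff(c)\}$. I expect the main obstacle to be the careful bookkeeping in the separating-hyperplane step: ensuring the functional is honestly an effect in $\catc$ and is correctly normalised to lie in $c^*$, which is precisely the role of the flatness hypothesis, and handling the degenerate direction via $\discard_A$ separately. The non-zero hypothesis on $A$ is needed so that flatness provides the invertible scalars and the basis machinery applies.
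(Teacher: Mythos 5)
Your proposal is correct and follows essentially the same route as the paper's proof: the easy inclusion is identical (cancellativity of scalars), and for the hard inclusion both arguments rest on the same ingredients (a basis with dual basis from Lemma~\ref{lemma:extend_basis}, flatness supplying $\nu \cdot \discard_A \in c^*$, and complementation of effects to turn dual-basis functionals in $\sub(\catc)$ into genuine members of $c^*$), with your contrapositive ``separating hyperplane'' framing being just a repackaging of the paper's direct computation that the coordinates of $s \in c^{**}$ along basis vectors outside the span of $c$ vanish. One small correction: the step of pushing the separating functional into the positive cone $\catc(A,I)$ by adding a multiple of $\discard_A$ is licensed by \ref{apc:effects} (every effect has a complement with respect to discarding), not by flatness, which only provides the normalising effect used for the bookkeeping of constants.
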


\begin{proof}
$\supseteq$: Suppose $s \in \catc(I,A)$ can be expressed as an affine combination of elements of $c$, i.e. $s + \sum_i \alpha_i \cdot \rho_i^- = \sum_j \beta_j \cdot \rho_j^+$ for $\rho_i^-, \rho_j^+ \in c$ and $1 + \sum_i \alpha_i = \sum_j \beta_j$. For any effect $\pi \in c^*$ we have:
\begin{equation}
\begin{split}
s \fatsemi \pi + \sum_i \alpha_i &= \left(s + \sum_i \alpha_i \cdot \rho_i^-\right) \fatsemi \pi \\
&= \left(\sum_j \beta_j \cdot \rho_j^+\right) \fatsemi \pi \\
&= \sum_j \beta_j \\
&= 1 + \sum_i \alpha_i
\end{split}
\end{equation}
hence by cancellativity $s \fatsemi \pi = 1$, so $s \in c^{**}$.

$\subseteq$: We start by fixing any maximal linearly independent subset of $c$ and extend it to a basis $\{\rho_i\}_i \subseteq \catc(I,A)$ with a dual basis $\{e_i\}_i \subseteq \sub(\catc)(A,I)$ using Lemma \ref{lemma:extend_basis}. Let $\mathcal{I} = \{i | \rho_i \in c\}$ index the subset of basis elements in $c$. By resolving the identity, we can always express any $s \in \catc(I,A)$ as $\fsub{s} \sim \sum_i (\fsub{s} \fatsemi e_i) \cdot \fsub{\rho_i}$.

Since our construction of the basis used a maximal linearly independent subset of $c$, each term in $c$ must be expressible as a linear combination $\sum_{i \in \mathcal{I}} \alpha_i \cdot \rho_i$ and therefore $\forall i \notin \mathcal{I} . \forall t \in c . \fsub{t} \fatsemi e_i \sim 0$. If we choose some representative decomposition $e_i \sim \fsub{e_i^+} - \fsub{e_i^-}$ for $e_i^+, e_i^- \in \catc(A,I)$, we have $\forall i \notin \mathcal{I} . \forall t \in c . t \fatsemi e_i^+ = t \fatsemi e_i^-$.

Since $c$ is flat, there exists some invertible $\mu \in \catc(I,I)$ such that $\mu \cdot \discard_A \in c^*$. By \ref{apc:effects}, there is also some invertible $\lambda \in \catc(I,I)$ and $e' \in \catc(A,I)$ such that $e' + e_i^- = \lambda \cdot \discard_A$. We can show that $\mu \cdot \lambda^{-1} \cdot (e' + e_i^+) \in c^*$ for any $i \notin \mathcal{I}$ since for any $t \in c$:
\begin{equation}
\begin{split}
t \fatsemi \mu \cdot \lambda^{-1} (e' + e_i^+) &= \mu \cdot \lambda^{-1} \cdot \left(t \fatsemi e' + t \fatsemi e_i^+\right) \\
&= \mu \cdot \lambda^{-1} \cdot \left(t \fatsemi e' + t \fatsemi e_i^-\right) \\
&= \mu \cdot \lambda^{-1} \cdot \left(t \fatsemi \lambda \cdot \discard_A\right) \\
&= t \fatsemi \left(\mu \cdot \discard_A\right) \\
&= 1
\end{split}
\end{equation}


By considering $\mu \cdot \discard_A, \mu \cdot \lambda^{-1} \cdot (e' + e_i^+) \in c^*$ and an arbitrary $s \in c^{**}$,
\begin{equation}
\begin{split}
\mu \cdot \lambda^{-1} \cdot s \fatsemi e_i^- + 1 &= \mu \cdot \lambda^{-1} \cdot s \fatsemi \left(e' + e_i^+ + e_i^-\right) \\
&= \mu \cdot \lambda^{-1} \cdot s \fatsemi \left(\lambda \cdot \discard_A + e_i^+\right) \\
&= \mu \cdot \lambda^{-1} \cdot s \fatsemi e_i^+ + s \fatsemi \left(\mu \cdot \discard_A\right) \\
&= \mu \cdot \lambda^{-1} \cdot s \fatsemi e_i^+ + 1
\end{split}
\end{equation}

By cancellativity and invertibility of $\mu$ and $\lambda$, we have $s \fatsemi e_i^- = s \fatsemi e_i^+$.

Expanding $s$, we now have $\fsub{s} \sim \sum_i (\fsub{s} \fatsemi e_i) \cdot \fsub{\rho_i} \sim \sum_{i \in \mathcal{I}} (\fsub{s} \fatsemi e_i) \cdot \fsub{\rho_i}$, i.e. $s$ is a linear combination of terms in $c$. This combination is affine since:
\begin{equation}
\begin{split}
\sum_{i \in \mathcal{I}} \fsub{s} \fatsemi e_i &\sim \sum_{i \in \mathcal{I}} \left(\fsub{s} \fatsemi e_i\right) \cdot \left(\fsub{\rho_i} \fatsemi \mu \cdot \fsub{\discard_A}\right) \\
&\sim \left(\sum_{i \in \mathcal{I}} (\fsub{s} \fatsemi e_i) \cdot \fsub{\rho_i} \right) \fatsemi \mu \cdot \fsub{\discard_A} \\
&\sim \fsub{s} \fatsemi \mu \cdot \fsub{\discard_A} \\
&\sim 1
\end{split}
\end{equation}
\end{proof}

\begin{theorem}[Restatement of Theorem \ref{thrm:no_interaction_with_trivial}]\label{proof:thrm:no_interaction_with_trivial}
If $\mathbf{A}$ is $c_\mathbf{A} = \{\mu \cdot \maxmix_A\}$ for any non-zero $A$, then every $h \in c_{\mathbf{A \otimes B}}$ is a product morphism of the form $\mu \cdot \maxmix_A \otimes g$ for some $g \in c_\mathbf{B}$.
\end{theorem}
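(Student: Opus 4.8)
The plan is to unfold the definition of $c_{\mathbf{A \otimes B}}$, use Theorem~\ref{thrm:affine} to trade double-dualisation for affine closure, exploit the bilinearity of $\otimes$ in $\sub(\catc)$ to pull the fixed factor $\maxmix_A$ out of any affine combination of product states, and finally contract the $A$-leg with a rescaled discarding effect to recover the $\mathbf{B}$-component as a genuine morphism of $\catc$.

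\textbf{Reduction to an affine closure.} I would first dispose of the degenerate case where $B$ is a zero system: then $A \otimes B$ is also zero and the claim is checked by inspection (it is either vacuous, as $c_{\mathbf{A \otimes B}} = \emptyset$, or trivial, as the unique state $0_{I,0}$ equals $\mu \cdot \maxmix_A \otimes 0_{I,0}$). So assume $B$ non-zero. Since $c_\mathbf{A} = \{\mu \cdot \maxmix_A\}$ (where $\mu$ is invertible, as needed for $\mathbf{A}$ to be flat), the definition gives $c_{\mathbf{A \otimes B}} = S^{**}$ with $S := \{\mu \cdot \maxmix_A \otimes \rho_B \mid \rho_B \in c_\mathbf{B}\}$. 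The set $S$ is flat: taking $\rho_B = \nu \cdot \maxmix_B \in c_\mathbf{B}$ (available by flatness of $c_\mathbf{B}$) and using $\maxmix_{A \otimes B} = \maxmix_A \otimes \maxmix_B$ (from \ref{apc:discard} by compact closure) shows $\mu\nu \cdot \maxmix_{A \otimes B} \in S$, while for $\xi \cdot \discard_B \in c_\mathbf{B}^*$ the effect $\mu^{-1} d_A^{-1} \xi \cdot \discard_{A \otimes B}$ lies in $S^*$, where $d_A$ is invertible by \ref{apc:dimension}. Hence Theorem~\ref{thrm:affine} applies and $c_{\mathbf{A \otimes B}} = S^{**} = \affp(S)$.

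\textbf{Factoring out $\maxmix_A$ and recovering $g$.} Let $h \in c_{\mathbf{A \otimes B}} = \affp(S)$, so $\fsub{h} \sim \sum_i \lambda_i \cdot \fsub{\mu \cdot \maxmix_A \otimes \rho_{B,i}}$ for some $\rho_{B,i} \in c_\mathbf{B}$ and scalars $\lambda_i \in K := \sub(\catc)(I,I)$ with $\sum_i \lambda_i = \id_I$. Since $\fsub{-}$ is strong monoidal and $\otimes$ distributes over $+$ in $\sub(\catc)$, this rearranges to $\fsub{h} \sim \mu \cdot \fsub{\maxmix_A} \otimes w$ with $w := \sum_i \lambda_i \cdot \fsub{\rho_{B,i}} \in \aff(c_\mathbf{B})$. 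Now set $g := h \fatsemi (d_A^{-1}\mu^{-1} \cdot \discard_A \otimes \id_B) \in \catc(I,B)$; composing the previous identity on the $A$-leg with $d_A^{-1}\mu^{-1} \cdot \fsub{\discard_A} \otimes \id_B$ and using $\maxmix_A \fatsemi \discard_A = d_A$ yields $\fsub{g} \sim w$. Thus $\fsub{g} \in \aff(c_\mathbf{B})$, so $g \in \affp(c_\mathbf{B})$, and since $c_\mathbf{B}$ is closed and flat a further application of Theorem~\ref{thrm:affine} gives $g \in c_\mathbf{B}^{**} = c_\mathbf{B}$. Finally $\fsub{h} \sim \mu \cdot \fsub{\maxmix_A} \otimes \fsub{g} = \fsub{\mu \cdot \maxmix_A \otimes g}$, and faithfulness of $\fsub{-}$ forces $h = \mu \cdot \maxmix_A \otimes g$ in $\catc$, as required.

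\textbf{Expected main obstacle.} The only non-formal step is the last one: passing from an affine decomposition of $\fsub{h}$, which a priori lives only in $\sub(\catc)$, back to a genuine morphism of $\catc$. The trick is to contract the $A$-leg with the rescaled discarding effect $d_A^{-1}\mu^{-1} \cdot \discard_A$ — well-typed precisely because \ref{apc:dimension} makes $d_A$ invertible — so that the contracted state $g$ is manifestly in $\catc$ and its $\sub(\catc)$-image is exactly the $\mathbf{B}$-component $w$; membership $g \in c_\mathbf{B}$ then drops out of Theorem~\ref{thrm:affine}. Everything else (the flatness check for $S$ and the monoidal/bilinear bookkeeping in $\sub(\catc)$) is routine.
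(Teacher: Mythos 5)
Your proposal is correct and follows essentially the same route as the paper: apply Theorem~\ref{thrm:affine} to write $h$ as an affine combination of product states, pull $\mu \cdot \maxmix_A$ out by bilinearity, and contract the $A$-leg with an effect in $c_\mathbf{A}^*$ to identify the $\mathbf{B}$-component. The only (immaterial) differences are that the paper uses an arbitrary $\pi \in c_\mathbf{A}^*$ and argues via cancellativity in $\catc$ rather than faithfulness of $\fsub{-}$, whereas you use the specific effect $d_A^{-1}\mu^{-1}\cdot\discard_A$ and also supply the flatness check for the product set that the paper leaves implicit.
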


\begin{proof}
By Theorem \ref{thrm:affine}, we can represent any such $h$ as an affine combination of product terms:
\begin{equation}
\begin{split}
h + \sum_{i \in \mathcal{I}^-} \alpha_i^- \cdot \left(\mu \cdot \maxmix_A\right) \otimes g_i^- &= \sum_{i \in \mathcal{I}^+} \alpha_i^+ \cdot \left(\mu \cdot \maxmix_A\right) \otimes g_i^+ \\
h + \left(\mu \cdot \maxmix_A\right) \otimes \left(\sum_{i \in \mathcal{I}^-} \alpha_i^- \cdot g_i^-\right) &= \left(\mu \cdot \maxmix_A\right) \otimes \left(\sum_{i \in \mathcal{I}^+} \alpha_i^+ \cdot g_i^+\right)
\end{split}
\end{equation}
where $1 + \sum_{i \in \mathcal{I}^-} \alpha_i^- = \sum_{i \in \mathcal{I}^+} \alpha_i^+$ and $\{g_i^\pm\}_{i \in \mathcal{I}^\pm} \subseteq c_\mathbf{B}$. Since $A$ is non-zero, $c_\mathbf{A}^*$ is non-empty, so we can consider applying some effect $\pi \in c_\mathbf{A}^*$ to reduce the above equation to:
\begin{equation}
h \fatsemi \left(\pi \otimes \id_B\right) + \sum_{i \in \mathcal{I}^-} \alpha_i^- \cdot g_i^- = \sum_{i \in \mathcal{I}^+} \alpha_i^+ \cdot g_i^+
\end{equation}
with $h \fatsemi (\pi \otimes \id_B) \in c_\mathbf{B}$. Combining the decompositions so far gives:
\begin{equation}
\begin{split}
h + \left(\mu \cdot \maxmix_A\right) \otimes \left(\sum_{i \in \mathcal{I}^-} \alpha_i^- \cdot g_i^-\right) &= \left(\mu \cdot \maxmix_A\right) \otimes \left(\sum_{i \in \mathcal{I}^+} \alpha_i^+ \cdot g_i^+\right) \\
&= \left(\mu \cdot \maxmix_A\right) \otimes \left(h \fatsemi \left(\pi \otimes \id_B\right)\right) + \left(\mu \cdot \maxmix_A\right) \otimes \left(\sum_{i \in \mathcal{I}^-} \alpha_i^- \cdot g_i^-\right)
\end{split}
\end{equation}

So by cancellativity of addition (Proposition \ref{prop:apc-cancellative}) we have $h = \left(\mu \cdot \maxmix_A\right) \otimes (h \fatsemi (\pi \otimes \id_B))$.
\end{proof}

\section{Proofs for Section \ref{sec:additive}}

\begin{lemma}[Restatement of Lemma \ref{lemma:additive_defs_equivalent}]\label{proof:lemma:additive_defs_equivalent}
The alternative definitions of $c_{\mathbf{A \times B}}$ and $c_{\mathbf{A \oplus B}}$ are equivalent; that is, Equations \ref{eq:def_product} and \ref{eq:def_coproduct} hold.
\end{lemma}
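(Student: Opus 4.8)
The plan is to unfold both pairs of definitions directly, using only the biproduct structure of $A \oplus B$ together with the fact that $c_\mathbf{A}$ and $c_\mathbf{B}$ are closed (being objects of $\caus[\catc]$), i.e. $c_\mathbf{A} = c_\mathbf{A}^{**}$ and $c_\mathbf{B} = c_\mathbf{B}^{**}$. Recall that, since $\oplus$ is simultaneously a product and a coproduct, any state $s : I \to A \oplus B$ is a pair $(\rho_A, \rho_B)$ with $\rho_A = s \fatsemi p_A$ and $\rho_B = s \fatsemi p_B$, and any effect $e : A \oplus B \to I$ is a copair $\langle \pi_A, \pi_B \rangle$ with $\pi_A = \iota_A \fatsemi e$ and $\pi_B = \iota_B \fatsemi e$; in particular $s \fatsemi (p_A \fatsemi \pi) = \rho_A \fatsemi \pi$ and $(\rho \fatsemi \iota_A) \fatsemi e = \rho \fatsemi \pi_A$, and symmetrically on the $B$ component.

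For the product, set $S := \{ p_A \fatsemi \pi_A \mid \pi_A \in c_\mathbf{A}^* \} \cup \{ p_B \fatsemi \pi_B \mid \pi_B \in c_\mathbf{B}^* \}$, so that $c_{\mathbf{A \times B}} = S^*$ by definition. Writing a state $s = (\rho_A, \rho_B)$ as above, membership $s \in S^*$ says exactly that $\rho_A \fatsemi \pi_A = \id_I$ for every $\pi_A \in c_\mathbf{A}^*$ and $\rho_B \fatsemi \pi_B = \id_I$ for every $\pi_B \in c_\mathbf{B}^*$, i.e. $\rho_A \in c_\mathbf{A}^{**}$ and $\rho_B \in c_\mathbf{B}^{**}$. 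By closedness this is precisely $\rho_A \in c_\mathbf{A}$ and $\rho_B \in c_\mathbf{B}$, which establishes the second line of Equation~\eqref{eq:def_product}.

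For the coproduct, set $X := \{ \rho_A \fatsemi \iota_A \mid \rho_A \in c_\mathbf{A} \} \cup \{ \rho_B \fatsemi \iota_B \mid \rho_B \in c_\mathbf{B} \}$ and $Y := \{ \langle \pi_A, \pi_B \rangle \mid \pi_A \in c_\mathbf{A}^*, \pi_B \in c_\mathbf{B}^* \}$, so that the two expressions for $c_{\mathbf{A \oplus B}}$ in Equation~\eqref{eq:def_coproduct} are $X^{**}$ and $Y^*$. I claim $X^* = Y$: writing an effect $e = \langle \pi_A, \pi_B \rangle$ as above, $e \in X^*$ says exactly that $\rho_A \fatsemi \pi_A = \id_I$ for all $\rho_A \in c_\mathbf{A}$ and $\rho_B \fatsemi \pi_B = \id_I$ for all $\rho_B \in c_\mathbf{B}$, i.e. $\pi_A \in c_\mathbf{A}^*$ and $\pi_B \in c_\mathbf{B}^*$, which is membership in $Y$. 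Dualising $X^* = Y$ then gives $X^{**} = Y^*$, which is the desired equality.

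There is no substantial obstacle: the argument is a routine computation. The only points requiring care are the bookkeeping when passing between $\catc(I, A \oplus B)$ and pairs in $\catc(I,A) \times \catc(I,B)$ (and dually for effects) via the projections and injections of the biproduct, and the fact — legitimate thanks to the compact-closed identification noted after Definition~\ref{def:dual} — that $(-)^*$ may be applied uniformly to sets of states or sets of effects, together with its being monotone and idempotent so that $X^* = Y$ forces $X^{**} = Y^*$. Note that closedness of $c_\mathbf{A}$ and $c_\mathbf{B}$ is needed only in the product case.
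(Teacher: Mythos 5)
Your proof is correct and follows essentially the same route as the paper's: unfold the product case directly via the $\eta$-rule for (co)products and closedness of $c_\mathbf{A}$, $c_\mathbf{B}$, then obtain the coproduct case by establishing the dual equality of the two generating sets and applying $(-)^*$ to both sides. The observation that closedness is only needed in the product case is a nice touch the paper leaves implicit.
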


\begin{proof}
\ref{eq:def_product} $\subseteq$: For any $\rho \in \left( \{p_A \fatsemi \pi_A | \pi_A \in c_\mathbf{A}^* \} \cup \{p_B \fatsemi \pi_B | \pi_B \in c_\mathbf{B}^* \} \right)^*$, the $\eta$-rule for products allows us to expand it as $\rho = (\rho \fatsemi p_A, \rho \fatsemi p_B)$. Then $\rho \fatsemi p_A \in c_\mathbf{A}$ since $\rho$ satisfies $\rho \fatsemi p_A \fatsemi \pi_A = 1$ for all $\pi_A \in c_\mathbf{A}^*$, and similarly $\rho \fatsemi p_B \in c_\mathbf{B}$.

\ref{eq:def_product} $\supseteq$: For any states $\rho_A \in c_\mathbf{A}, \rho_B \in c_\mathbf{B}$ and effects $\pi_A \in c_\mathbf{A}^*, \pi_B \in c_\mathbf{B}^*$, we have $(\rho_A, \rho_B) \fatsemi p_A \fatsemi \pi_A = \rho_A \fatsemi \pi_A = 1$ and $(\rho_A, \rho_B) \fatsemi p_B \fatsemi \pi_B = \rho_B \fatsemi \pi_B = 1$.

By swapping the roles of states/products/projections with effects/coproducts/injections, we get $\left( \{ \rho_A \fatsemi \iota_A | \rho_A \in c_\mathbf{A} \} \cup \{ \rho_B \fatsemi \iota_B | \rho_B \in c_\mathbf{B} \} \right)^* = \left\{ \langle \pi_A, \pi_B \rangle | \pi_A \in c_\mathbf{A}^*, \pi_B \in c_\mathbf{B}^* \right\}$, from which we can take the dual of both sides to obtain Equation \ref{eq:def_coproduct}.
\end{proof}

\begin{corollary}[Restatement of Corollary \ref{corollary:additive_demorgan}]\label{proof:corollary:additive_demorgan}
The operators $\times$ and $\oplus$ are De Morgan duals under $(-)^*$.
\end{corollary}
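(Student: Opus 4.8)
The plan is to establish the single De Morgan identity $\mathbf{(A \times B)^*} = \mathbf{A^* \oplus B^*}$ directly from the \emph{first} forms of the definitions in Equations~\eqref{eq:def_product} and~\eqref{eq:def_coproduct}, and then derive the companion identity $\mathbf{(A \oplus B)^*} = \mathbf{A^* \times B^*}$ formally by applying $(-)^*$ and using that all causal sets appearing as objects of $\caus[\catc]$ are closed ($c = c^{**}$), together with $\mathbf{A^{**}} = \mathbf{A}$. So it suffices to match underlying objects and match causal sets for the first identity.

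For the underlying objects, recall that in a compact closed category with biproducts there is a canonical isomorphism $(A \oplus B)^* \cong A^* \oplus B^*$, under which the transpose of the projection $p_A : A \oplus B \to A$ is the injection $\iota_{A^*} : A^* \to A^* \oplus B^*$ (and likewise for $B$). Under this identification the underlying object of $\mathbf{(A \times B)^*}$, namely $(A \oplus B)^*$, coincides with that of $\mathbf{A^* \oplus B^*}$. Now I would compute the causal sets. By the first form of \eqref{eq:def_product}, $c_{\mathbf{A \times B}} = T^*$ for $T := \{p_A \fatsemi \pi_A \mid \pi_A \in c_\mathbf{A}^*\} \cup \{p_B \fatsemi \pi_B \mid \pi_B \in c_\mathbf{B}^*\}$, hence $c_{\mathbf{A \times B}}^* = T^{**}$. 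On the other side, applying the first form of \eqref{eq:def_coproduct} to $\mathbf{A^*}$ and $\mathbf{B^*}$ and using $c_{\mathbf{A^*}} = c_\mathbf{A}^*$, $c_{\mathbf{B^*}} = c_\mathbf{B}^*$, we get $c_{\mathbf{A^* \oplus B^*}} = (T')^{**}$ for $T' := \{\pi_A \fatsemi \iota_{A^*} \mid \pi_A \in c_\mathbf{A}^*\} \cup \{\pi_B \fatsemi \iota_{B^*} \mid \pi_B \in c_\mathbf{B}^*\}$, where here $\pi_A$ is viewed as a state $I \to A^*$ via transpose.

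The remaining step is to see that $T$ and $T'$ are literally the same set once we pass through the transpose isomorphism that the paper uses to identify effects of $A \oplus B$ with states of $(A \oplus B)^* \cong A^* \oplus B^*$. Since transposition is contravariant on composites, $(p_A \fatsemi \pi_A)^* = \pi_A^* \fatsemi p_A^* = \pi_A^* \fatsemi \iota_{A^*}$, with $\pi_A^*$ the transpose state of $\pi_A$; and similarly for $B$. Thus $T = T'$ as subsets of $\catc(I, A^* \oplus B^*)$, so $c_{\mathbf{A \times B}}^* = T^{**} = (T')^{**} = c_{\mathbf{A^* \oplus B^*}}$, giving $\mathbf{(A \times B)^*} = \mathbf{A^* \oplus B^*}$. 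Applying $(-)^*$ to both sides with $\mathbf{A} \mapsto \mathbf{A^*}$, $\mathbf{B} \mapsto \mathbf{B^*}$ and using closedness yields $\mathbf{(A \oplus B)^*} = \mathbf{A^* \times B^*}$, completing the proof.

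I expect the only real care point to be the bookkeeping around the canonical iso $(A \oplus B)^* \cong A^* \oplus B^*$ and the identity $p_A^* = \iota_{A^*}$ under it — this is standard for compact closed categories with biproducts, but one must be explicit about it since both $c_{\mathbf{A \times B}}^*$ and $c_{\mathbf{A^* \oplus B^*}}$ are only ``the same set'' after that identification. Everything else reduces to unwinding Equations~\eqref{eq:def_product} and~\eqref{eq:def_coproduct} (already available via Lemma~\ref{lemma:additive_defs_equivalent}) and the contravariance of transpose, which is routine.
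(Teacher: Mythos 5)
Your proof is correct and hinges on exactly the same key fact as the paper's: under the canonical compact structure on $A \oplus B$ built from those of $A$ and $B$, projections transpose to injections ($p_A^* = \iota_{A^*}$), so the generating effects of the product coincide, under the transpose identification, with the generating states of the coproduct of duals. The only cosmetic differences are that you prove $\mathbf{(A \times B)^*} = \mathbf{A^* \oplus B^*}$ rather than the paper's $c_{\mathbf{A \oplus B}}^* = c_{\mathbf{A^* \times B^*}}$ (each follows from the other by closedness), and that you work with the first, generating-set forms of Equations \ref{eq:def_product} and \ref{eq:def_coproduct} and compare double duals, whereas the paper unwinds the explicit second forms — your route has the minor advantage of not needing Lemma \ref{lemma:additive_defs_equivalent} at all.
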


\begin{proof}
This is immediate from the symmetric definitions of $c_{\mathbf{A \times B}}$ and $c_{\mathbf{A \oplus B}}$. For example,
\begin{equation}
\begin{split}
c_{\mathbf{A \oplus B}}^* &= \{\langle \pi_A, \pi_B \rangle | \pi_A \in c_\mathbf{A}^* \subseteq \catc(A,I), \pi_B \in c_\mathbf{B}^* \subseteq \catc(B,I)\} \\
&= \{\langle \rho_A^*, \rho_B^* \rangle | \rho_A \in c_{\mathbf{A^*}} \subseteq \catc(I,A^*), \rho_B \in c_{\mathbf{B^*}} \subseteq \catc(I,B^*)\} \\
&= \{ p_A \fatsemi \rho_A^* + p_B \fatsemi \rho_B^* | \rho_A \in c_{\mathbf{A^*}}, \rho_B \in c_{\mathbf{B^*}}\} \\
&= \{ (\rho_A \fatsemi \iota_{A^*} + \rho_B \fatsemi \iota_{B^*})^* | \rho_A \in c_{\mathbf{A^*}}, \rho_B \in c_{\mathbf{B^*}}\} \\
&= \{ (\rho_A, \rho_B)^* | \rho_A \in c_{\mathbf{A^*}}, \rho_B \in c_{\mathbf{B^*}}\} \\
&= c_{\mathbf{A^* \times B^*}} \subseteq \catc(A \oplus B,I)
\end{split}
\end{equation}

Here we have used that injections and projections may be treated as transposes of one another when we suppose that the compact structure for the duality $A \oplus B \dashv A^* \oplus B^*$ in $\catc$ is built from those of $A \dashv A^*$ and $B \dashv B^*$ in the canonical way \cite{Heunen2019}.



\end{proof}

\begin{proposition}[Restatement of Proposition \ref{prop:product}]\label{proof:prop:product}
$\mathbf{A \times B}$ is a categorical product in $\caus[\catc]$.
\end{proposition}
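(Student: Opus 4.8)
The plan is to reduce everything to the biproduct structure already present in $\catc$, using the concrete description of $c_{\mathbf{A \times B}}$ supplied by Lemma~\ref{lemma:additive_defs_equivalent}. Writing $(\sigma, \tau)$ for the state $\sigma \fatsemi \iota_A + \tau \fatsemi \iota_B$ of $A \oplus B$ with components $\sigma, \tau$ (so that $(\sigma,\tau) \fatsemi p_A = \sigma$ and $(\sigma,\tau) \fatsemi p_B = \tau$), that lemma gives $c_{\mathbf{A \times B}} = \{(\rho_A, \rho_B) \mid \rho_A \in c_\mathbf{A},\ \rho_B \in c_\mathbf{B}\}$. As a preliminary, I would recall that $c_{\mathbf{A \times B}}$ is closed (being a dual set) and flat, so that $\mathbf{A \times B}$ is a legitimate object of $\caus[\catc]$. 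I would then take the candidate projections to be the biproduct projections $p_A : A \oplus B \to A$ and $p_B : A \oplus B \to B$ of $\catc$; these are morphisms of $\caus[\catc]$ because for any $(\rho_A,\rho_B) \in c_{\mathbf{A \times B}}$ we have $(\rho_A,\rho_B)\fatsemi p_A = \rho_A \in c_\mathbf{A}$, and symmetrically for $p_B$.

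Next I would establish the universal property. Given an object $\mathbf{C}$ and morphisms $f : \mathbf{C} \to \mathbf{A}$, $g : \mathbf{C} \to \mathbf{B}$ in $\caus[\catc]$, let $\langle f, g \rangle : C \to A \oplus B$ be the pairing coming from the biproduct structure of $\catc$, characterised by $\langle f,g\rangle \fatsemi p_A = f$ and $\langle f,g\rangle \fatsemi p_B = g$. The only thing to check is that $\langle f,g\rangle$ preserves causal sets: for $\rho \in c_\mathbf{C}$ the state $\rho \fatsemi \langle f,g\rangle$ of $A \oplus B$ has first component $\rho \fatsemi \langle f,g\rangle \fatsemi p_A = \rho \fatsemi f$ and second component $\rho \fatsemi \langle f,g\rangle \fatsemi p_B = \rho \fatsemi g$, i.e. $\rho \fatsemi \langle f,g\rangle = (\rho \fatsemi f,\ \rho \fatsemi g)$; since $f$ and $g$ are $\caus[\catc]$-morphisms, $\rho\fatsemi f \in c_\mathbf{A}$ and $\rho\fatsemi g \in c_\mathbf{B}$, so this state lies in $c_{\mathbf{A \times B}}$. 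Hence $\langle f, g\rangle$ is a morphism $\mathbf{C} \to \mathbf{A \times B}$ in $\caus[\catc]$, and the triangle equations $\langle f,g\rangle\fatsemi p_A = f$, $\langle f,g\rangle\fatsemi p_B = g$ already hold in $\catc$, hence in $\caus[\catc]$, whose composition is inherited.

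For uniqueness, suppose $h : \mathbf{C} \to \mathbf{A \times B}$ in $\caus[\catc]$ satisfies $h\fatsemi p_A = f$ and $h\fatsemi p_B = g$; regarding $h$ as a morphism $C \to A \oplus B$ in $\catc$, those same two equations force $h = \langle f, g\rangle$ by the biproduct universal property in $\catc$, and since morphisms of $\caus[\catc]$ are particular morphisms of $\catc$ with the same composition, this is equality in $\caus[\catc]$. I do not expect a genuine obstacle here: all of the substance sits in Lemma~\ref{lemma:additive_defs_equivalent}, which converts the abstract dual-set definition of $c_{\mathbf{A \times B}}$ into the componentwise form; once that form is in hand, both the well-typedness of the projections and pairing and the universal property are direct transfers of the corresponding facts about biproducts in $\catc$, requiring neither well-pointedness nor the affine-closure results of Section~\ref{sec:apc}.
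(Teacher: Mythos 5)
Your proposal is correct and follows essentially the same route as the paper: existence and uniqueness of the pairing are inherited from the biproduct in $\catc$, and the only substantive checks are that $p_A$, $p_B$, and $\langle f,g\rangle$ preserve the causal sets, which follow immediately from the componentwise description of $c_{\mathbf{A \times B}}$ given by Lemma~\ref{lemma:additive_defs_equivalent}. The extra preliminaries you include (closure/flatness of $c_{\mathbf{A\times B}}$ and the explicit uniqueness argument) are left implicit in the paper but are consistent with its argument.
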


\begin{proof}
Suppose we are given some $f : \mathbf{C \to A}$ and $g : \mathbf{C \to B}$. The existence and uniqueness of $(f, g)$ can be inherited from the fact that $A \oplus B$ is a (bi)product in $\catc$, so we just need to show that $(f, g)$, $p_A$, and $p_B$ are all causal.

Given any state $\rho \in c_\mathbf{C}$, since $f$ and $g$ are causal we have $\rho \fatsemi f \in c_\mathbf{A}$ and $\rho \fatsemi g \in c_\mathbf{B}$. The product definition of $c_{\mathbf{A \times B}}$ now gives that $\rho \fatsemi (f, g) = (\rho \fatsemi f, \rho \fatsemi g) \in c_{\mathbf{A \times B}}$, so $(f, g) : \mathbf{C \to A \times B}$.

For the projectors, we know that all states of $\mathbf{A \times B}$ are of the form $(\rho_A, \rho_B)$ for some $\rho_A \in c_\mathbf{A}$ and $\rho_B \in c_\mathbf{B}$. Since $(\rho_A, \rho_B) \fatsemi p_A = \rho_A \in c_\mathbf{A}$, we have $p_A : \mathbf{A \times B \to A}$ and similarly $p_B : \mathbf{A \times B \to B}$.
\end{proof}

\begin{proposition}[Restatement of Proposition \ref{prop:coproduct}]\label{proof:prop:coproduct}
$\mathbf{A \oplus B}$ is a categorical coproduct in $\caus[\catc]$.
\end{proposition}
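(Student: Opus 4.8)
The plan is to mirror the proof of Proposition~\ref{prop:product}, systematically dualising states/products/projections to effects/coproducts/injections. Given causal maps $f : \mathbf{A \to C}$ and $g : \mathbf{B \to C}$, the mediating copairing $\langle f, g \rangle : A \oplus B \to C$ together with the injections $\iota_A, \iota_B$ all already exist in $\catc$ because $A \oplus B$ is a biproduct there, and uniqueness of $\langle f, g\rangle$ is inherited for the same reason. So the only thing left to verify is that these three morphisms are causal in $\caus[\catc]$.

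For the injections this is immediate from the generating-set definition $c_{\mathbf{A \oplus B}} = (\{\rho_A \fatsemi \iota_A \mid \rho_A \in c_\mathbf{A}\} \cup \{\rho_B \fatsemi \iota_B \mid \rho_B \in c_\mathbf{B}\})^{**}$: any $\rho_A \fatsemi \iota_A$ with $\rho_A \in c_\mathbf{A}$ lies in the generating set, hence in its bidual $c_{\mathbf{A \oplus B}}$, so $\iota_A : \mathbf{A \to A \oplus B}$ is causal, and symmetrically for $\iota_B$. For the copairing I would use the second characterisation from Lemma~\ref{lemma:additive_defs_equivalent}, whose proof in fact establishes $c_{\mathbf{A \oplus B}}^* = \{\langle \pi_A, \pi_B \rangle \mid \pi_A \in c_\mathbf{A}^*, \pi_B \in c_\mathbf{B}^*\}$. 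Fix $h \in c_{\mathbf{A \oplus B}} = c_{\mathbf{A \oplus B}}^{**}$ and an arbitrary $\pi \in c_\mathbf{C}^*$; the goal is $h \fatsemi \langle f, g \rangle \fatsemi \pi = \id_I$. Since $f$ is causal, $\rho_A \fatsemi f \fatsemi \pi = \id_I$ for every $\rho_A \in c_\mathbf{A}$, i.e.\ $f \fatsemi \pi \in c_\mathbf{A}^*$, and likewise $g \fatsemi \pi \in c_\mathbf{B}^*$. By the universal property of the (bi)product in $\catc$ we have $\langle f, g \rangle \fatsemi \pi = \langle f \fatsemi \pi, g \fatsemi \pi \rangle$, which therefore lies in $c_{\mathbf{A \oplus B}}^*$; hence $h \fatsemi \langle f, g \rangle \fatsemi \pi = h \fatsemi \langle f \fatsemi \pi, g \fatsemi \pi \rangle = \id_I$ because $h \in c_{\mathbf{A \oplus B}}^{**}$. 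As $\pi$ was arbitrary, $h \fatsemi \langle f, g \rangle \in c_\mathbf{C}^{**} = c_\mathbf{C}$, so $\langle f, g \rangle : \mathbf{A \oplus B \to C}$ is causal.

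I would also note the slicker alternative route: granting $*$-autonomy of $\caus[\catc]$ and the De Morgan duality $(\mathbf{X^*} \times \mathbf{Y^*})^* = \mathbf{X \oplus Y}$ of Corollary~\ref{corollary:additive_demorgan}, the statement follows formally from Proposition~\ref{prop:product}, since a contravariant self-equivalence sends products to coproducts, so $(\mathbf{A^*} \times \mathbf{B^*})^* = \mathbf{A \oplus B}$ is a coproduct of $\mathbf{A^{**}} = \mathbf{A}$ and $\mathbf{B^{**}} = \mathbf{B}$. Either way, I do not anticipate a genuine obstacle: the one point needing care is the identification of $c_{\mathbf{A \oplus B}}^*$ with the set of effect-copairings and the compatibility $\langle f, g \rangle \fatsemi \pi = \langle f \fatsemi \pi, g \fatsemi \pi \rangle$, and both are essentially already contained in the proof of Lemma~\ref{lemma:additive_defs_equivalent} and the standard biproduct calculus of $\catc$.
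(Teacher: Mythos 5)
Your proposal is correct and follows essentially the same route as the paper, which simply says the result is obtained by dualising the proof of Proposition~\ref{prop:product} to show that the injections and the copairing $\langle f, g \rangle$ are causal; you have merely spelled out that dualisation in detail (causality of $\iota_A, \iota_B$ from the generating-set description of $c_{\mathbf{A \oplus B}}$, and causality of $\langle f, g \rangle$ from the effect characterisation $c_{\mathbf{A \oplus B}}^* = \{\langle \pi_A, \pi_B \rangle \mid \pi_A \in c_\mathbf{A}^*,\ \pi_B \in c_\mathbf{B}^*\}$ together with $\langle f, g \rangle \fatsemi \pi = \langle f \fatsemi \pi, g \fatsemi \pi \rangle$). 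The alternative argument via $*$-autonomy and the De Morgan duality of Corollary~\ref{corollary:additive_demorgan} is also valid but not the route the paper takes.
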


\begin{proof}
Again, this can be obtained by dualising the proof from Proposition \ref{prop:product} to show that the injections and any $\langle f, g \rangle : A \oplus B \to C$ for $f : \mathbf{A \to C}$ and $g : \mathbf{B \to C}$ are all causal.
\end{proof}

\begin{proposition}\label{proof:prop:initial_terminal}[Restatement of Proposition \ref{prop:initial_terminal}]
The initial object in $\caus[\catc]$ is $\mathbf{0} := (0, \emptyset)$ and the terminal object is $\mathbf{1} := (0, \{0_{I,0}\})$. Furthermore, they are duals of each other and are units for $\mathbf{\oplus}$ and $\mathbf{\times}$ respectively.
\end{proposition}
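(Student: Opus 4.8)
The plan is to verify each claim in turn, using the definitions of the dual set, the biproduct structure of $\catc$, and the characterisation of $\mathbf{\times}$ and $\mathbf{\oplus}$ already established. First I would handle initiality of $\mathbf{0} = (0, \emptyset)$: for any object $\mathbf{B} = (B, c_\mathbf{B})$ there is a unique morphism $0_{0,B} : 0 \to B$ in $\catc$ by initiality of the zero object, and it is vacuously causal since $c_\mathbf{0} = \emptyset$ contains no state whose image need land in $c_\mathbf{B}$; uniqueness in $\caus[\catc]$ is inherited from uniqueness in $\catc$. Dually, for terminality of $\mathbf{1} = (0, \{0_{I,0}\})$, the unique map $0_{B,0} : B \to 0$ in $\catc$ is causal because every $\rho \in c_\mathbf{B}$ satisfies $\rho \fatsemi 0_{B,0} = 0_{I,0}$, the unique state of $0$, which lies in $c_\mathbf{1}$; again uniqueness transfers from $\catc$. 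One should also note $(0,\emptyset)$ and $(0,\{0_{I,0}\})$ are both closed and flat in the degenerate (zero-system) sense allowed by the definition of flatness, so they are genuine objects of $\caus[\catc]$.

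Next I would check the duality $\mathbf{0^*} = \mathbf{1}$. Since $0^* \cong 0$ in $\catc$, both sides have underlying object $0$, so it suffices to compare causal sets. By Definition~\ref{def:dual}, $\emptyset^* = \{\pi \in \catc(0,I) \mid \forall \rho \in \emptyset.\ \rho \fatsemi \pi = \id_I\} = \catc(0,I)$, which by initiality is the singleton $\{0_{0,I}\}$; transporting along the compact-closed isomorphism this is exactly $\{0_{I,0}\} = c_\mathbf{1}$. Hence $c_\mathbf{0}^* = c_\mathbf{1}$, i.e. $\mathbf{0^*} = \mathbf{1}$, and by involutivity of $(-)^*$ on closed sets also $\mathbf{1^*} = \mathbf{0}$.

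Finally I would verify the unit laws. For $\mathbf{\oplus}$: using the first presentation in~\eqref{eq:def_coproduct}, $c_{\mathbf{0 \oplus B}}$ is the double-dual of $\{\rho_0 \fatsemi \iota_0 \mid \rho_0 \in \emptyset\} \cup \{\rho_B \fatsemi \iota_B \mid \rho_B \in c_\mathbf{B}\}$; the first set is empty, and since $0 \oplus B \cong B$ with $\iota_B$ the isomorphism, this is just $c_\mathbf{B}^{**} = c_\mathbf{B}$ (as $c_\mathbf{B}$ is closed). So $\mathbf{0 \oplus B} \cong \mathbf{B}$, naturally. For $\mathbf{\times}$: using the product presentation $c_{\mathbf{1 \times B}} = \{(\rho_1, \rho_B) \mid \rho_1 \in \{0_{I,0}\}, \rho_B \in c_\mathbf{B}\}$, and again $0 \oplus B \cong B$ identifies $(\rho_1, \rho_B)$ with $\rho_B$, giving $c_{\mathbf{1 \times B}} \cong c_\mathbf{B}$; so $\mathbf{1 \times B} \cong \mathbf{B}$. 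One can alternatively derive the $\mathbf{\times}$ unit law from the $\mathbf{\oplus}$ one by De Morgan duality (Corollary~\ref{corollary:additive_demorgan}) together with $\mathbf{1^*} = \mathbf{0}$, which avoids re-running the computation. The main obstacle, such as it is, is being careful about the handling of the zero object under compact closure and under the flatness condition — in particular ensuring that the "degenerate" clause in the definition of flat is what licenses $(0,\emptyset)$ and $(0,\{0_{I,0}\})$ as objects, and tracking the canonical isomorphisms $0^* \cong 0$ and $0 \oplus B \cong B$ so that the claimed equalities of causal sets are literally correct rather than merely up to iso.
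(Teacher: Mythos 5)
Your treatment of initiality, terminality, and $\mathbf{0^*}=\mathbf{1}$ matches the paper's proof, and your explicit computation of the unit laws is fine (the paper simply invokes the general facts that an initial object is a unit for coproducts and a terminal object a unit for products, which is slightly cleaner but equivalent). However, there is one genuine gap: your derivation of $\mathbf{1^*}=\mathbf{0}$ is circular. You appeal to ``involutivity of $(-)^*$ on closed sets,'' but that presupposes $\emptyset^{**}=\emptyset$, and since $\emptyset^*=\{0_{0,I}\}$ the claim $\emptyset^{**}=\emptyset$ is \emph{literally} the statement $\{0_{I,0}\}^*=\emptyset$ you are trying to prove. Concretely, $\{0_{I,0}\}^*$ is empty iff $0_{I,0}\fatsemi 0_{0,I}=0_{I,I}\neq \id_I$, i.e.\ iff the zero scalar differs from the identity scalar; in a degenerate theory where $0_{I,I}=\id_I$ one would instead get $\mathbf{1^*}=\mathbf{1}$. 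The paper closes this by observing that $\sub(\catc)(I,I)$ is a field (Proposition~\ref{prop:subc-field}), so its zero and unit are distinct and hence $0_{I,I}\neq\id_I$ already in $\catc$. Relatedly, your remark that $(0,\emptyset)$ is closed ``in the degenerate sense allowed by the definition of flatness'' conflates the two conditions: only flatness carries a zero-system exception, whereas closure $c=c^{**}$ must still be verified for $\emptyset$, and doing so requires exactly the same $0_{I,I}\neq\id_I$ observation. Adding that one line repairs both the objecthood of $\mathbf{0}$ and the duality claim.
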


\begin{proof}
Given any $\mathbf{A}$, there is a unique morphism $0_{0,A} \in \catc(0,A)$. The normalisation condition for $0_{0,A} : \mathbf{0 \to A}$ holds vacuously since there are no states of type $\mathbf{0}$ to preserve normalisation of. This is hence a unique morphism in $\caus[\catc]$, making $\mathbf{0}$ initial. An initial object is always a unit for coproducts.

Conversely, for any $\mathbf{A}$ there is a unique morphism $0_{A,0} \in \catc(A,0)$. For any $\rho \in c_\mathbf{A}$, $\rho \fatsemi 0_{A,0} = 0_{I,0} \in c_\mathbf{1}$ by terminality of the zero object, so $0_{A,0} : \mathbf{A \to 1}$. This similarly makes $\mathbf{1}$ terminal in $\caus[\catc]$, and a terminal object is always a unit for products.

For the duality, we note that the zero object is self-dual in any compact closed category $\catc$. $\emptyset^*$ will include the full homset of effects since the normalisation condition will vacuously hold. There is only one effect from the zero object so $\mathbf{0^*} = \mathbf{1}$. Conversely, $\{0_{I,0}\}^*$ will include any effect which composes with $0_{I,0}$ to give $\id_I$. The only effect to consider is $0_{0,I}$, and $0_{I,0} \fatsemi 0_{0,I} = 0_{I,I}$. $0_{I,I} \neq \id_I$ since the subtractive closure of scalars is a field (Proposition \ref{prop:subc-field}) requiring the zero and unit to be distinct so no normalised effect exists, meaning $\mathbf{1^*} = \mathbf{0}$.
\end{proof}

\begin{proposition}[Restatement of Proposition \ref{prop:coproduct_first_order}]\label{proof:prop:coproduct_first_order}
If $\mathbf{A}$ and $\mathbf{B}$ are both first-order types, then so is $\mathbf{A \oplus B}$.
\end{proposition}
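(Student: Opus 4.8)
The plan is to show that the first-order property — namely that $c_{\mathbf{A \oplus B}} = \{\discard_{A \oplus B}\}^*$, or equivalently that $c_{\mathbf{A \oplus B}}^* = \{\mu \cdot \discard_{A \oplus B}\}$ for some invertible $\mu$ — follows from the corresponding property of $\mathbf{A}$ and $\mathbf{B}$. Since $\mathbf{A}$ and $\mathbf{B}$ are first-order, we have $c_\mathbf{A}^* = \{\mu_A \cdot \discard_A\}$ and $c_\mathbf{B}^* = \{\mu_B \cdot \discard_B\}$ for invertible scalars $\mu_A, \mu_B$ (after rescaling we may assume $\mu_A = \mu_B = \id_I$, using that scalars have inverses by \ref{apc:scalars}, but keeping track of the scalars is harmless). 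The obvious route is to use the second characterisation of $c_{\mathbf{A \oplus B}}$ from Equation~\ref{eq:def_coproduct}: $c_{\mathbf{A \oplus B}} = \left\{ \langle \pi_A, \pi_B \rangle \mid \pi_A \in c_\mathbf{A}^*, \pi_B \in c_\mathbf{B}^* \right\}^*$, so $c_{\mathbf{A \oplus B}}^* = \left\{ \langle \pi_A, \pi_B \rangle \mid \pi_A \in c_\mathbf{A}^*, \pi_B \in c_\mathbf{B}^* \right\}$, at least once we know this generating set is already closed (which Lemma~\ref{lemma:additive_defs_equivalent} effectively gives us, since it is expressed as a dual).

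First I would substitute the first-order hypotheses into this description of $c_{\mathbf{A \oplus B}}^*$. This immediately yields $c_{\mathbf{A \oplus B}}^* = \left\{ \langle \mu_A \cdot \discard_A, \mu_B \cdot \discard_B \rangle \right\}$ — a singleton, since there is only one choice of $\pi_A$ and one choice of $\pi_B$. The next step is to identify this single effect: I claim $\langle \mu_A \cdot \discard_A, \mu_B \cdot \discard_B\rangle$ is a scalar multiple of $\discard_{A \oplus B}$. By \ref{apc:discard} we have $\discard_{A \oplus B} = \langle \discard_A, \discard_B\rangle$. So I need $\langle \mu_A \cdot \discard_A, \mu_B \cdot \discard_B\rangle = \mu \cdot \langle \discard_A, \discard_B\rangle$ for some invertible $\mu$; this forces $\mu_A = \mu_B = \mu$ by the universal property of the product $\langle -, -\rangle$ in $\catc$ (precomposing with the injections $\iota_A, \iota_B$). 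This is exactly where the flatness of $c_\mathbf{A}$ and $c_\mathbf{B}$ — or rather the fact that they share a common "scale" — should come into play: I would argue that one can normalise so that $\mu_A = \mu_B$, e.g. because both are pinned down relative to the same ambient $\maxmix$/$\discard$ data, or by rescaling $c_\mathbf{A}$ without changing its being first-order. If a uniform rescaling is not automatic, the cleanest fix is to observe that $\langle \mu_A\cdot\discard_A, \mu_B\cdot\discard_B\rangle = (\mu_A \cdot p_A \fatsemi \discard_A) + (\mu_B \cdot p_B \fatsemi \discard_B)$ and to check directly against the generators of $c_{\mathbf{A \oplus B}}$ (states of the form $\rho_A \fatsemi \iota_A$, $\rho_B \fatsemi \iota_B$) that this composes to $\id_I$ precisely when $\mu_A = \mu_B = \id_I$, so WLOG they are equal.

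Having shown $c_{\mathbf{A \oplus B}}^* = \{\mu \cdot \discard_{A \oplus B}\}$, it remains to conclude that $\mathbf{A \oplus B}$ is first-order, i.e. that $c_{\mathbf{A \oplus B}} = \{\discard_{A \oplus B}\}^* = (\mathbf{A \oplus B})^{*\,*}$-side matches $\mathbf{(A \oplus B)^1}$. Since $c_{\mathbf{A \oplus B}}$ is closed, $c_{\mathbf{A \oplus B}} = c_{\mathbf{A \oplus B}}^{**} = \{\mu \cdot \discard_{A \oplus B}\}^* = \{\discard_{A \oplus B}\}^*$ (the scalar $\mu$ is invertible so taking the dual of $\{\mu \cdot \discard\}$ and of $\{\discard\}$ gives the same set). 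That is precisely the definition of $\mathbf{A \oplus B}$ being a first-order type. The main obstacle I anticipate is purely the bookkeeping around the scalars $\mu_A$ versus $\mu_B$: ensuring that the two first-order types can be taken with a common normalisation, or else threading the non-normalised scalars through Equation~\ref{eq:def_coproduct} and \ref{apc:discard} carefully. Everything else — the duality juggling and the closure argument — is routine given Lemma~\ref{lemma:additive_defs_equivalent}, Corollary~\ref{corollary:additive_demorgan}, and \ref{apc:discard}.
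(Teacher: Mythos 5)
Your proposal follows the paper's proof: substitute the singleton duals $c_\mathbf{A}^*$, $c_\mathbf{B}^*$ into Equation~\ref{eq:def_coproduct} and identify $\langle \discard_A, \discard_B \rangle = \discard_{A \oplus B}$ via \ref{apc:discard}. The only wrinkle is self-inflicted: the paper's \emph{definition} of a first-order type is $c_\mathbf{A} = \{\discard_A\}^*$, which gives $c_\mathbf{A}^* = \{\discard_A\}^{**} = \{\discard_A\}$ on the nose (the bidual of a singleton is itself, e.g.\ by Theorem~\ref{thrm:affine}), so $\mu_A = \mu_B = \id_I$ automatically and the entire normalisation discussion is unnecessary --- you have conflated the definition with the weaker up-to-scalar characterisation appearing in Theorem~\ref{thrm:first_order_characterisation}, under which the statement would genuinely fail for $\mu_A \neq \mu_B$. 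One incidental claim is false as stated: $\{\mu \cdot \discard\}^*$ and $\{\discard\}^*$ are \emph{not} the same set for invertible $\mu \neq \id_I$ (the former consists of states with $\rho \fatsemi \discard = \mu^{-1}$); this is harmless here only because $\mu = \id_I$.
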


\begin{proof}
If $c_\mathbf{A} = \left\{\discard_A\right\}^*$ and $c_\mathbf{B} = \left\{\discard_B\right\}^*$, then we have $c_{\mathbf{A \oplus B}} = \left\{\langle \pi_A, \pi_B \rangle | \pi_A \in c_\mathbf{A}^*, \pi_B \in c_\mathbf{B}^*\right\}^* = \left\{\langle \discard_A, \discard_B \rangle\right\}^* = \left\{\discard_{A \oplus B}\right\}^*$ using \ref{apc:discard}.
\end{proof}

\begin{proposition}[Restatement of Proposition \ref{prop:product_not_first_order}]\label{proof:prop:product_not_first_order}
If $A$ and $B$ are non-zero, then $\mathbf{A \times B}$ is never a first-order type.
\end{proposition}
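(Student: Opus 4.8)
The plan is to show $c_{\mathbf{A \times B}} \subsetneq \{\discard_{A \oplus B}\}^*$ by producing a causal state of the underlying object $A \oplus B$ that is not of product form, so that $\mathbf{A \times B} = (A \oplus B, c_{\mathbf{A \times B}})$ cannot be first-order.

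The key preliminary observation, which I would state and prove first, is that flatness pins down the $\discard$-value of causal states: if $\mathbf{B}$ is an object of $\caus[\catc]$ with $B$ non-zero, then by flatness there is an invertible scalar $\mu_B$ with $\mu_B \cdot \discard_B \in c_\mathbf{B}^*$, and since $c_\mathbf{B} = c_\mathbf{B}^{**}$, every $\rho_B \in c_\mathbf{B}$ satisfies $\rho_B \fatsemi (\mu_B \cdot \discard_B) = \id_I$, hence $\rho_B \fatsemi \discard_B = \mu_B^{-1}$, an invertible and in particular non-zero scalar (non-zero because $\sub(\catc)(I,I)$ is a field, Proposition~\ref{prop:subc-field}). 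The same holds for $\mathbf{A}$, and, again by flatness, $c_\mathbf{A}$ is non-empty (it contains $\lambda_A \cdot \maxmix_A$ for some invertible $\lambda_A$). Now pick any $\rho_A \in c_\mathbf{A}$ and form the state $s := (\mu_A \cdot \rho_A) \fatsemi \iota_A : I \to A \oplus B$, i.e. the one whose $A$-component is $\mu_A \cdot \rho_A$ and whose $B$-component is the zero state $0_{I,B}$. Using $\discard_{A \oplus B} = \langle \discard_A, \discard_B \rangle$ (\ref{apc:discard}) together with the biproduct identities, $s \fatsemi \discard_{A \oplus B} = \mu_A \cdot (\rho_A \fatsemi \discard_A) + 0_{I,B} \fatsemi \discard_B = \mu_A \cdot \mu_A^{-1} = \id_I$, so $s \in \{\discard_{A \oplus B}\}^*$. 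But by the product description of $c_{\mathbf{A \times B}}$ (Equation~\ref{eq:def_product}), if $s$ lay in $c_{\mathbf{A \times B}}$ then its $B$-component $s \fatsemi p_B = 0_{I,B}$ would have to be in $c_\mathbf{B}$, contradicting $0_{I,B} \fatsemi \discard_B = 0 \neq \mu_B^{-1}$. Hence $s \notin c_{\mathbf{A \times B}}$, and we are done.

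I do not anticipate a real obstacle; the one thing to watch is that both non-zero hypotheses are genuinely used — $A \neq 0$ to guarantee $c_\mathbf{A} \neq \emptyset$ so that a causal first component exists, and $B \neq 0$ so that elements of $c_\mathbf{B}$ have invertible (in particular non-zero) $\discard_B$-value, which is exactly what separates $s$ from $c_{\mathbf{A \times B}}$ (if, say, $B$ were zero then $\mathbf{A \times B} \cong \mathbf{A}$ and the statement would fail). The remaining work — manipulating biproduct injections/projections and the copairing form of $\discard_{A \oplus B}$ — is routine and of the same flavour as the identities already used in Lemma~\ref{lemma:additive_defs_equivalent} and Proposition~\ref{prop:coproduct_first_order}.
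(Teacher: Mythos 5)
Your proof is correct. It is essentially the mirror image of the paper's argument: the paper works on the effect side, observing that by flatness both $p_A \fatsemi \discard_A$ and $p_B \fatsemi \discard_B$ lie (up to invertible scalars) in the dual set $c_{\mathbf{A \times B}}^*$, and then uses exactly your witness $(\rho_A, 0_{I,B})$ to show these two effects are distinct, so the dual set cannot be the singleton that a first-order type must have. You instead work on the state side, exhibiting $(\mu_A \cdot \rho_A, 0_{I,B})$ as a $\discard_{A \oplus B}$-normalised state missing from $c_{\mathbf{A \times B}}$. Both arguments hinge on the same observation --- a state with zero $B$-component behaves differently under anything factoring through $p_B$ --- but yours is marginally more self-contained: it needs only the definition of first-order types and the explicit product form of $c_{\mathbf{A \times B}}$ from Equation~\ref{eq:def_product}, whereas the paper's version implicitly relies on a first-order type having a unique causal effect (i.e.\ $\{\discard\}^{**} = \{\discard\}$, which ultimately rests on Theorem~\ref{thrm:affine}). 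Your closing remarks about where each non-zero hypothesis enters are also accurate and match the paper's hypotheses.
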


\begin{proof}
By flatness, both $p_A \fatsemi \discard_A$ and $p_B \fatsemi \discard_B$ are in $c_{\mathbf{A \times B}}$ up to some invertible scalars. They are distinct as morphisms of $\catc$ since they can be distinguished using $(\rho_A, 0_{I,B})$ for any $\rho_A \in c_\mathbf{A}$ (projecting on $B$ will give zero whereas projecting on $A$ will give a non-zero scalar).
\end{proof}

\section{Proofs for Section \ref{sec:oneway}}

Many proofs will have to consider zero systems as a special case, since the causal types will either have no states or no effects to choose from. These cases are often relatively simple since the setting typically degenerates. For example, each of the following isomorphisms can be proved straightforwardly where the lack of states/effects generally manifests as a vacuous proof or a failure to find a witness (where $\mathbf{C}$ represents an arbitrary causal type with both causal states and effects).
\begin{align}\label{eq:zero_tensor}
\mathbf{0 \otimes 0} &\simeq \mathbf{0} &
\mathbf{0 \parr 0} &\simeq \mathbf{0} \nonumber \\
\mathbf{0 \otimes 1} &\simeq \mathbf{0} &
\mathbf{0 \parr 1} &\simeq \mathbf{1} \nonumber \\
\mathbf{1 \otimes 1} &\simeq \mathbf{1} &
\mathbf{1 \parr 1} &\simeq \mathbf{1} \\
\mathbf{0 \otimes C} &\simeq \mathbf{0} &
\mathbf{0 \parr C} &\simeq \mathbf{0} \nonumber \\
\mathbf{1 \otimes C} &\simeq \mathbf{1} &
\mathbf{1 \parr C} &\simeq \mathbf{1} \nonumber
\end{align}

These isomorphisms should not be surprising since they can also be proved from the rules of linear logic. The only ones that can't be mapped to valid sequents immediately are $\mathbf{0 \parr C} \to \mathbf{0}$ which requires $! (\sim C), \bot \parr C \vdash \bot$, and $\mathbf{1} \to \mathbf{1 \otimes C}$ which requires $!C, \top \vdash \top \otimes C$, i.e. we are encoding the existence of states and effects for $\mathbf{C}$ as explicit information we may use in the proof.

As for $\mathbf{<}$, $\mathbf{\triangleleft}$, and $\mathbf{\obslash}$, the proofs for characterising interaction with zero objects are still degenerate in the same way. For the $\mathbf{\triangleleft}$ decompositions, the intermediate system can be chosen as $\mathbf{Z} = \mathbf{I}$ in each case apart from $\mathbf{1 \triangleleft 0} \simeq \mathbf{1}$ where we choose $\mathbf{Z} = \mathbf{0}$ (which is first order since $c_\mathbf{0}^* = \{0_{0,I}\} = \{\discard_0\}$).
\begin{equation}\label{eq:zero_one_way}
\begin{array}{ccccccc}
\mathbf{0 < 0} &= &\mathbf{0 \triangleleft 0} &= &\mathbf{0 \obslash 0} &\simeq &\mathbf{0} \\
\mathbf{0 < 1} &= &\mathbf{0 \triangleleft 1} &= &\mathbf{0 \obslash 1} &\simeq &\mathbf{0} \\
\mathbf{1 < 0} &= &\mathbf{1 \triangleleft 0} &= &\mathbf{1 \obslash 0} &\simeq &\mathbf{1} \\
\mathbf{1 < 1} &= &\mathbf{1 \triangleleft 1} &= &\mathbf{1 \obslash 1} &\simeq &\mathbf{1} \\
\mathbf{0 < C} &= &\mathbf{0 \triangleleft C} &= &\mathbf{0 \obslash C} &\simeq &\mathbf{0} \\
\mathbf{1 < C} &= &\mathbf{1 \triangleleft C} &= &\mathbf{1 \obslash C} &\simeq &\mathbf{1} \\
\mathbf{C < 0} &= &\mathbf{C \triangleleft 0} &= &\mathbf{C \obslash 0} &\simeq &\mathbf{0} \\
\mathbf{C < 1} &= &\mathbf{C \triangleleft 1} &= &\mathbf{C \obslash 1} &\simeq &\mathbf{1}
\end{array}
\end{equation}


For the proofs below, we will just show the proofs for the cases where $A$ and $B$ are non-zero systems.

\begin{lemma}\label{lemma:one_way_equals_pcs}
$c_\mathbf{A} \obslash c_\mathbf{B} = c_\mathbf{A} < c_\mathbf{B}$.
\end{lemma}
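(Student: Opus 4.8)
The plan is to prove the two inclusions separately. Note first that elements of both sides are states $h : I \to A \otimes B$ in $\catc$ (the object carrying $\mathbf{A \parr B}$), and that, since $B$ is non-zero, flatness of $c_\mathbf{B}$ supplies an invertible scalar $\mu$ with $\mu \cdot \discard_B \in c_\mathbf{B}^*$; consequently every $g \in c_\mathbf{B}$ satisfies $g \fatsemi \discard_B = \mu^{-1}$ and $g \fatsemi \pi = 1$ for every $\pi \in c_\mathbf{B}^*$. For the inclusion $c_\mathbf{A} \obslash c_\mathbf{B} \subseteq c_\mathbf{A} < c_\mathbf{B}$ I would argue directly from a witnessing decomposition $\fsub{h} \sim \sum_i f_i \otimes \fsub{g_i}$ (with $g_i \in c_\mathbf{B}$ and $\fsub{f} \sim \sum_i f_i$ for some $f \in c_\mathbf{A}$): post-composing $h$ with $\id_A \otimes \pi$ for an arbitrary $\pi \in c_\mathbf{B}^*$ gives $\sum_i f_i \cdot (g_i \fatsemi \pi) = \sum_i f_i = \fsub{f}$, so $f$ is a left-residual of $h$ independent of $\pi$; since membership in $c_{\mathbf{A \parr B}}$ is part of the definition of $\obslash$, this places $h$ in $c_\mathbf{A} < c_\mathbf{B}$.

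For the inclusion $c_\mathbf{A} < c_\mathbf{B} \subseteq c_\mathbf{A} \obslash c_\mathbf{B}$, take $h \in c_\mathbf{A} < c_\mathbf{B}$ with residual $m \in c_\mathbf{A}$, so $h \fatsemi (\id_A \otimes \pi) = m$ for all $\pi \in c_\mathbf{B}^*$. I would fix a maximal $\sub(\catc)$-linearly independent subset $\{g_i\}_{i \in \mathcal{I}} \subseteq c_\mathbf{B}$, extend it via Lemma~\ref{lemma:extend_basis} to a basis $\{g_k\}_{k \in \mathcal{I} \cup \mathcal{J}}$ of $\sub(\catc)(I,B)$ with dual basis $\{e_k\}_k \subseteq \sub(\catc)(B,I)$, and resolve the identity on the $B$-leg (as in the proof of Theorem~\ref{thrm:affine}) to write $\fsub{h} \sim \sum_k f_k \otimes \fsub{g_k}$ with $f_k := \fsub{h} \fatsemi (\id_A \otimes e_k)$ in $\sub(\catc)(I,A)$. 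Maximality forces $c_\mathbf{B} \subseteq \mathrm{span}_K\{\fsub{g_i} : i \in \mathcal{I}\}$, so each $e_j$ with $j \in \mathcal{J}$ annihilates $c_\mathbf{B}$. The goal is then to show $f_j = 0$ for all $j \in \mathcal{J}$: this leaves $\fsub{h} \sim \sum_{i \in \mathcal{I}} f_i \otimes \fsub{g_i}$ with every $g_i \in c_\mathbf{B}$, and the residual property at $\pi = \mu \cdot \discard_B$ together with $g_i \fatsemi \mu \cdot \discard_B = 1$ identifies $\sum_{i \in \mathcal{I}} f_i = \fsub{h} \fatsemi (\id_A \otimes \mu \cdot \discard_B) = \fsub{m}$, so $f := m \in c_\mathbf{A}$ exhibits $h \in c_\mathbf{A} \obslash c_\mathbf{B}$.

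The main obstacle is proving $f_j = \fsub{h} \fatsemi (\id_A \otimes e_j) = 0$. By $K$-linearity the residual property gives $\fsub{h} \fatsemi (\id_A \otimes (\fsub{\pi} - \fsub{\pi'})) = \fsub{m} - \fsub{m} = 0$ for all $\pi, \pi' \in c_\mathbf{B}^*$, so $h$ is annihilated by $\id_A \otimes \zeta$ for every $\zeta$ in $W := \mathrm{span}_K\{\fsub{\pi} - \fsub{\pi'} \mid \pi, \pi' \in c_\mathbf{B}^*\}$; it therefore suffices to show that every $\xi \in \sub(\catc)(B,I)$ annihilating $c_\mathbf{B}$ lies in $W$, and then apply this with $\xi = e_j$. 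For that I would write $\xi \sim \fsub{\xi^+} - \fsub{\xi^-}$ with $\xi^\pm \in \catc(B,I)$, invoke \ref{apc:effects} on $\xi^-$ to get an effect $e''$ and scalar $\lambda$ with $\xi^- + e'' = \lambda \cdot \discard_B$, whence $\xi \sim \fsub{\sigma} - \lambda \cdot \fsub{\discard_B}$ for $\sigma := \xi^+ + e'' \in \catc(B,I)$; pairing with any $g \in c_\mathbf{B}$ then forces $g \fatsemi \sigma = \lambda \mu^{-1}$. If $\lambda = 0$, both $\mu \cdot \discard_B$ and $\mu \cdot \discard_B + \sigma$ lie in $c_\mathbf{B}^*$ and $\xi \sim \fsub{\sigma}$ is their difference; if $\lambda \neq 0$, then by \ref{apc:scalars} $\mu\lambda^{-1}$ is a scalar, $\mu\lambda^{-1} \cdot \sigma \in c_\mathbf{B}^*$, and $\xi \sim \lambda\mu^{-1} \cdot \big(\fsub{\mu\lambda^{-1} \cdot \sigma} - \fsub{\mu \cdot \discard_B}\big)$; in both cases $\xi \in W$. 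This effect-annihilation step, resting on \ref{apc:effects} and \ref{apc:scalars}, is the only part that is not routine bookkeeping with the basis and the residual equation.
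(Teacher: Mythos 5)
Your proof is correct, and while the easy inclusion $c_\mathbf{A} \obslash c_\mathbf{B} \subseteq c_\mathbf{A} < c_\mathbf{B}$ coincides with the paper's argument, your proof of the converse is genuinely different. The paper resolves the identity on the $A$-leg against a basis $\{\rho_i\}_i$ of $A$ (perturbed so that every coefficient $\fsub{m} \fatsemi e_i$ is invertible), obtaining $B$-components $g_i := (\fsub{m}\fatsemi e_i)^{-1}\cdot(\fsub{h}\fatsemi(e_i\otimes\id_B))$ that are normalised by all of $c_\mathbf{B}^*$ but live only in $\sub(\catc)$; it then repairs them into genuine elements of $c_\mathbf{B}$ by adding multiples of $\maxmix_B$ via Proposition~\ref{prop:bintests} and flatness. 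You instead resolve the identity on the $B$-leg against a basis extending a maximal independent subset of $c_\mathbf{B}$, so the $B$-components are elements of $c_\mathbf{B}$ by construction and no repair (and no division by possibly-zero coefficients) is needed; the burden shifts to showing the complementary dual-basis effects $e_j$ kill $h$, which you reduce to the claim that the annihilator of $c_\mathbf{B}$ in $\sub(\catc)(B,I)$ equals the span of differences of elements of $c_\mathbf{B}^*$. That claim is proved correctly using \ref{apc:effects} and \ref{apc:scalars} --- the same axioms the paper's repair step leans on --- and is a nice reusable fact in its own right (it is essentially the statement that the residual map $\pi \mapsto h\fatsemi(\id_A\otimes\pi)$ being constant on $c_\mathbf{B}^*$ already determines $h$ on the whole annihilator). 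Both arguments are comparable in length; yours exploits the asymmetry in the definition of $\obslash$ (the $f_i$ may be arbitrary elements of $\sub(\catc)(I,A)$ while the $g_i$ must lie in $c_\mathbf{B}$) more directly, and the identification of $\sum_{i}f_i$ with $\fsub{m}$ by pairing with $\mu\cdot\discard_B$ closes it cleanly.
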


\begin{proof}
$\subseteq$: Consider an arbitrary $h \in c_\mathbf{A} \obslash c_\mathbf{B}$ with corresponding witnesses $\mathcal{I}, \{f_i\}_{i \in \mathcal{I}}, f, \{g_i\}_{i \in \mathcal{I}}$, and any effect $\pi \in c_\mathbf{B}^*$. We have $g_i \fatsemi \pi = 1$ for each $g_i$ because $g_i \in c_\mathbf{B}$.

\begin{equation}
\begin{split}
\fsub{h \fatsemi \left(\id_A \otimes \pi\right)} &\sim \left(\sum_{i \in \mathcal{I}} f_i \otimes \fsub{g_i}\right) \fatsemi \fsub{\id_A \otimes \pi} \\
&\sim \sum_{i \in \mathcal{I}} \fsub{g_i \fatsemi \pi} \cdot f_i \\
&\sim \sum_{i \in \mathcal{I}} f_i \\
&\sim \fsub{f}
\end{split}
\end{equation}

By faithfulness of the embedding into $\sub(\catc)$, we have that $h \fatsemi (\id_A \otimes \pi) = f \in c_\mathbf{A}$ for any $\pi \in c_\mathbf{B}^*$ and hence $h \in c_\mathbf{A} < c_\mathbf{B}$.

$\supseteq$: Consider an arbitrary $h \in c_\mathbf{A} < c_\mathbf{B}$ with residual $m \in c_\mathbf{A}$. Let $\{\rho_i\}_i$ be a basis for $A$ in $\catc$ with a dual basis $\{e_i\}_i$ in $\sub(\catc)$ such that $\fsub{m} \fatsemi e_i \not\sim 0$ (and therefore has an inverse) for all $i$. We can guarantee this without loss of generality: we start with the default basis for $A$ from \ref{apc:basis} and find a dual basis using Lemma \ref{lemma:extend_basis}. Since $m$ is not the zero process, there must be some non-zero value $\fsub{m} \fatsemi e_a$. For any $i$ such that $\fsub{m} \fatsemi e_i \sim 0$, we can update $\rho_a' := \rho_a + \rho_i$ and $e_i' := e_i - e_a$ to obtain a new basis and dual basis in which $\fsub{m} \fatsemi e_a' \sim \fsub{m} \fatsemi e_a \not\sim 0$ and $\fsub{m} \fatsemi e_i' \sim \fsub{m} \fatsemi (e_i - e_a) \sim -\fsub{m} \fatsemi e_a \not\sim 0$.

We can now use this basis to resolve the identity on $A$ to give $\fsub{m} \sim \sum_i (\fsub{m} \fatsemi e_i) \cdot \fsub{\rho_i}$ and $\fsub{h} \sim \sum_i (\fsub{m} \fatsemi e_i) \cdot \fsub{\rho_i} \otimes (\fsub{m} \fatsemi e_i)^{-1} \cdot (\fsub{h} \fatsemi (e_i \otimes \id_B))$. Let $g_i := (\fsub{m} \fatsemi e_i)^{-1} \cdot (\fsub{h} \fatsemi (e_i \otimes \id_B))$.

For any $\pi \in c_\mathbf{B}^*$:
\begin{equation}
\begin{split}
g_i \fatsemi \fsub{\pi} &\sim \left(\fsub{m} \fatsemi e_i\right)^{-1} \cdot \left(\fsub{h} \fatsemi \left(e_i \otimes \id_B\right)\right) \fatsemi \fsub{\pi} \\
&\sim \left(\fsub{m} \fatsemi e_i\right)^{-1} \cdot \left(\fsub{h} \fatsemi \left(e_i \otimes \fsub{\pi}\right)\right) \\
&\sim \left(\fsub{m} \fatsemi e_i\right)^{-1} \cdot \left(\fsub{m} \fatsemi e_i\right) \\
&\sim 1
\end{split}
\end{equation}

Since each $g_i$ may only exist in $\sub(\catc)$, this may not be enough to say $g_i \in c_\mathbf{B}$. However, by Proposition \ref{prop:bintests} there is some $g_i' \in \catc(I,B)$ such that $\fsub{g_i'} \sim \fsub{\lambda_i \cdot \maxmix_B} + g_i$. Given $\mu \cdot \maxmix_B \in c_\mathbf{B}$ by flatness, we would have $\left(1 + \mu^{-1} \cdot \lambda_i\right)^{-1} \cdot g_i' \in c_\mathbf{B}$ (in the case where $1 + \mu^{-1} \cdot \lambda_i = 0$ has no inverse, we could have chosen a different value for $\lambda_i$ such as $\lambda_i + 1$).

\begin{equation}
\begin{split}
\fsub{h} \sim& \sum_i \left(\fsub{m} \fatsemi e_i\right) \cdot \fsub{\rho_i} \otimes \left(\fsub{m} \fatsemi e_i\right)^{-1} \cdot \left(\fsub{h} \fatsemi \left(e_i \otimes \id_B\right)\right) \\
\sim& \sum_i (\fsub{m} \fatsemi e_i) \cdot \fsub{\rho_i} \otimes g_i \\
\sim& \sum_i (\fsub{m} \fatsemi e_i) \cdot \fsub{\rho_i} \otimes \fsub{g_i'} - \sum_i (\fsub{m} \fatsemi e_i) \cdot \fsub{\rho_i} \otimes \fsub{\lambda_i \cdot \maxmix_B} \\
\sim& \sum_i (\fsub{m} \fatsemi e_i) \cdot \fsub{(1 + \mu^{-1} \cdot \lambda_i) \cdot \rho_i} \otimes \fsub{(1 + \mu^{-1} \cdot \lambda_i)^{-1} \cdot g_i'} \\ &- \sum_i (\fsub{m} \fatsemi e_i) \cdot \fsub{\mu^{-1} \cdot \lambda_i \cdot \rho_i} \otimes \fsub{\mu \cdot \maxmix_B}
\end{split}
\end{equation}



This now gives $h$ in the form of the definition for $c_\mathbf{A} \obslash c_\mathbf{B}$. For the final condition of that definition, we take the expansion of $m$ under the dual basis on $A$:
\begin{equation}
\begin{split}
\fsub{m} &\sim \sum_i \left(\fsub{m} \fatsemi e_i\right) \cdot \fsub{\rho_i} \\
&\sim \sum_i \left(\fsub{m} \fatsemi e_i\right) \cdot \fsub{(1 + \mu^{-1} \cdot \lambda_i) \cdot \rho_i} - \sum_i \left(\fsub{m} \fatsemi e_i\right) \cdot \fsub{\mu^{-1} \cdot \lambda_i \cdot \rho_i}
\end{split}
\end{equation}

\end{proof}

\begin{lemma}\label{lemma:one_way_normalises_semi_local}
$c_\mathbf{A} < c_\mathbf{B} \subseteq \left(c_\mathbf{A}^* \triangleleft c_\mathbf{B}^*\right)^*$.
\end{lemma}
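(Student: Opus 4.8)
The plan is to unfold the pairing condition defining $\left(c_\mathbf{A}^* \triangleleft c_\mathbf{B}^*\right)^*$ and reduce it to the residual property of a one-way state. Assume $A$ and $B$ are non-zero (the degenerate cases being covered by~\eqref{eq:zero_one_way}); by Definition~\ref{def:dual} it suffices to show that an arbitrary $h \in c_\mathbf{A} < c_\mathbf{B}$ composes to $\id_I$ with an arbitrary $k \in c_\mathbf{A}^* \triangleleft c_\mathbf{B}^*$. First I would rewrite $h$ using Lemma~\ref{lemma:one_way_equals_pcs}: since $c_\mathbf{A} < c_\mathbf{B} = c_\mathbf{A} \obslash c_\mathbf{B}$, there are $f_i \in \sub(\catc)(I,A)$, states $g_i \in c_\mathbf{B}$ and some $f \in c_\mathbf{A}$ with $\fsub{h} \sim \sum_i f_i \otimes \fsub{g_i}$ and $\fsub{f} \sim \sum_i f_i$ in $\sub(\catc)$. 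Then I would unpack the witness for $k$: a first-order $\mathbf{Z} = (Z,\{\discard_Z\}^*)$ together with states $m' \in c_{\mathbf{A^* \parr Z}}$ and $n' \in c_{\mathbf{Z^* \parr B^*}}$ decomposing it. Reading $\mathbf{A^* \parr Z} = \mathbf{A} \multimap \mathbf{Z}$ and $\mathbf{Z^* \parr B^*} = \mathbf{Z} \multimap \mathbf{B^*}$, these are morphisms $m' : \mathbf{A} \to \mathbf{Z}$ and $n' : \mathbf{Z} \to \mathbf{B^*}$ in $\caus[\catc]$, with $k$ their composite along $Z$; transposing via compact closure, the scalar $\langle h, k\rangle$ becomes $h \fatsemi (m' \otimes \id_B) \fatsemi \tilde n$, where $\tilde n : Z \otimes B \to I$ is $n'$ bent into an effect.

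The crux is collapsing the $B$-wire. Plugging a fixed $g_i \in c_\mathbf{B}$ into the $B$-port of $\tilde n$ leaves an effect on $Z$; I claim it equals $\discard_Z$, \emph{independently of $i$}. Indeed, since $n' : \mathbf{Z} \to \mathbf{B^*}$ is causal, for every causal state $z$ of $Z$ the state $n'(z)$ lies in $c_{\mathbf{B^*}} = c_\mathbf{B}^*$, hence, viewed as an effect on $B$, sends $g_i \in c_\mathbf{B}$ to $\id_I$; so the residual effect on $Z$ is normalised on every causal state of $Z$, i.e.\ lies in $\{\discard_Z\}^{**}$, and by~\ref{apc:basis} (two effects normalised on a finite causal basis agree) this set is just $\{\discard_Z\}$. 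Substituting the $\obslash$-form of $h$ and applying this collapse termwise gives, in $\sub(\catc)$, $\fsub{\langle h,k\rangle} \sim \sum_i f_i \fatsemi (m' \fatsemi \discard_Z) = \bigl(\sum_i f_i\bigr) \fatsemi (m' \fatsemi \discard_Z) \sim \fsub{f} \fatsemi \fsub{m' \fatsemi \discard_Z}$; the point of the $i$-independence is precisely that it lets the sum pass through $m' \fatsemi \discard_Z$.

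To finish, I would note that causality of $m' : \mathbf{A} \to \mathbf{Z}$ means $m' \fatsemi \discard_Z \in c_\mathbf{A}^*$, so it composes to $\id_I$ with $f \in c_\mathbf{A}$; faithfulness of $\fsub{-}$ then yields $\langle h, k\rangle = \id_I$, as required. The hard part is really the compact-closed bookkeeping: identifying which internal homs $m'$ and $n'$ inhabit and verifying that $\langle h, k\rangle$ genuinely unfolds as $h \fatsemi (m' \otimes \id_B) \fatsemi \tilde n$ --- after which the only mathematically substantive ingredient is the uniqueness of the deterministic effect on first-order types from~\ref{apc:basis}. If one prefers not to route through Lemma~\ref{lemma:one_way_equals_pcs}, the same conclusion follows by resolving $\id_Z$ against a causal basis of $Z$ and its dual basis (Lemma~\ref{lemma:extend_basis}) and invoking the residual property of $h$ directly, but this alternative still rests on that same first-order uniqueness fact.
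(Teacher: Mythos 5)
Your proof is correct, but it reaches the conclusion by a different route from the paper's. The paper works directly with the residual $r$ of the one-way state: for each causal state $\rho$ of the mediating first-order system $Z$, plugging $\rho$ through $n$ yields an effect in $c_\mathbf{B}^*$, which by the defining property of $<$ extracts the constant residual $r$; since this holds on a causal basis of $Z$ (\ref{apc:basis}), the partial composite of the one-way state with $n$ along the $B$-wire is literally the product $r \otimes \maxmix_{Z^*}$, which lies in $c_\mathbf{A} \otimes c_\mathbf{Z}^* \subseteq c_{\mathbf{A^* \parr Z}}^*$ and hence pairs to $1$ with $m$. You instead collapse the other wire: you first pass to the $\obslash$-decomposition of $h$ via Lemma~\ref{lemma:one_way_equals_pcs} and then use the uniqueness of the deterministic effect $\discard_Z$ on the first-order type $\mathbf{Z}$ to reduce each $g_i$-marginal of $\tilde n$ termwise, so the sum passes through $m' \fatsemi \discard_Z \in c_\mathbf{A}^*$. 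Both arguments hinge on the same two ingredients --- first-orderness of $Z$ plus a constancy property of $h$ --- but you import the constancy from Lemma~\ref{lemma:one_way_equals_pcs}, a substantially heavier result (though proved independently of this lemma, so there is no circularity), whereas the paper gets it for free from the definition of $<$. The paper's version is therefore more self-contained for this particular step; your closing remark about resolving $\id_Z$ against a causal basis of $Z$ and invoking the residual of $h$ directly is, in effect, the paper's proof.
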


\begin{proof}
Consider any $k \in c_\mathbf{A} < c_\mathbf{B}$ with residual $r \in c_\mathbf{A}$ and any $h \in c_\mathbf{A}^* \triangleleft c_\mathbf{B}^*$ which decomposes into $m \in c_{\mathbf{A^* \parr Z}}$ and $n \in c_{\mathbf{Z^* \parr B^*}}$. For any $\rho \in c_\mathbf{Z}$ (in particular, the elements of the canonical basis from \ref{apc:basis}), the following state is in $c_\mathbf{B}^*$:
\begin{equation}
\tikzfig{semi_local_bob_basis}
\end{equation}

Composing this with $k$ will yield $r$ by one-way signalling. Since this holds for every state in the basis of $Z$ and each basis element is in $\left\{\discard_Z\right\}^*$, we can infer that:
\begin{equation}
\tikzfig{semi_local_one_way_combined} = \tikzfig{semi_local_one_way_combined2}
\end{equation}

$r \otimes \maxmix_{Z^*} \in c_\mathbf{A} \otimes (c_\mathbf{Z}^*) \subseteq c_{\mathbf{A^* \parr Z}}^*$ and therefore it maps $m$ to $1$. In summary:
\begin{equation}
\tikzfig{semi_local_one_way_summary} = \tikzfig{semi_local_one_way_summary2} = \tikzfig{semi_local_one_way_summary3} = 1
\end{equation}
\end{proof}

\begin{lemma}\label{lemma:normalisers_of_semi_local_are_one_way}
$\left(c_\mathbf{A} \triangleleft c_\mathbf{B}\right)^* \subseteq c_\mathbf{A}^* < c_\mathbf{B}^*$.
\end{lemma}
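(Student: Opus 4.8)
The plan is to fix $k\in(c_\mathbf{A}\triangleleft c_\mathbf{B})^*$, write $\hat k:A\otimes B\to I$ for the effect corresponding to $k$ under compact closure, and construct an explicit residual. Since every product $\rho_A\otimes\rho_B$ with $\rho_A\in c_\mathbf{A},\rho_B\in c_\mathbf{B}$ is semi-localisable via $\mathbf{Z}=\mathbf{I}$, we have $c_{\mathbf{A\otimes B}}\subseteq c_\mathbf{A}\triangleleft c_\mathbf{B}$ and hence $k\in(c_\mathbf{A}\triangleleft c_\mathbf{B})^*\subseteq c_{\mathbf{A\otimes B}}^*=c_{\mathbf{A^*\parr B^*}}$, so in particular $(\rho_A\otimes\rho_B)\fatsemi\hat k=1$ for all such $\rho_A,\rho_B$. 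Fixing any $\rho_B^0\in c_\mathbf{B}$ and setting $m:=(\id_A\otimes\rho_B^0)\fatsemi\hat k\in\catc(A,I)$, we get $\rho_A\fatsemi m=(\rho_A\otimes\rho_B^0)\fatsemi\hat k=1$ for every $\rho_A\in c_\mathbf{A}$, so already $m\in c_\mathbf{A}^*$. By \ref{apc:basis} it then suffices to prove $(\sigma_i\otimes\rho_B)\fatsemi\hat k=(\sigma_i\otimes\rho_B^0)\fatsemi\hat k$ for each $\rho_B\in c_\mathbf{B}$ and each element $\sigma_i$ of a finite causal basis of $A$, since this forces $(\id_A\otimes\rho_B)\fatsemi\hat k=m$ for all $\rho_B$, i.e. exactly the residual condition for $k\in c_{\mathbf{A^*}}<c_{\mathbf{B^*}}=c_\mathbf{A}^*<c_\mathbf{B}^*$ (the membership $k\in c_{\mathbf{A^*\parr B^*}}$ being the clause above).

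The key device is a second linear functional agreeing with $\hat k$ on all semi-localisable states. Picking $\pi_A^0\in c_\mathbf{A}^*$, $\pi_B^0\in c_\mathbf{B}^*$ (nonempty by flatness), the product $\ell:=\pi_A^0\otimes\pi_B^0$ lies in $c_{\mathbf{A^*\otimes B^*}}=c_{\mathbf{A\parr B}}^*$, so $\ell$ and $\hat k$ both evaluate to $1$ on every $h\in c_\mathbf{A}\triangleleft c_\mathbf{B}\subseteq c_{\mathbf{A\parr B}}$. Hence the $K$-linear functional $\hat k-\ell$ on $\sub(\catc)(I,A\otimes B)$ vanishes on all semi-localisable states and therefore on their $K$-linear span. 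Writing $\psi:=\fsub{\rho_B}-\fsub{\rho_B^0}$, it is thus enough to show $\fsub{\sigma_i}\otimes\psi$ lies in that span, because then
\[
(\sigma_i\otimes\rho_B)\fatsemi\hat k-(\sigma_i\otimes\rho_B^0)\fatsemi\hat k=\hat k(\fsub{\sigma_i}\otimes\psi)=\ell(\fsub{\sigma_i}\otimes\psi)=(\sigma_i\fatsemi\pi_A^0)\bigl(\rho_B\fatsemi\pi_B^0-\rho_B^0\fatsemi\pi_B^0\bigr)=(\sigma_i\fatsemi\pi_A^0)(1-1)=0 .
\]

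For the span membership I would use the first-order intermediary $\mathbf{Z}=\mathbf{I}\oplus\mathbf{I}$ (first order by Proposition \ref{prop:coproduct_first_order}). By flatness fix $\mu$ with $\mu\cdot\maxmix_A\in c_\mathbf{A}$; applying \ref{apc:effects} to the transpose of $\sigma_i$ gives $\sigma_i'\in\catc(I,A)$ and an invertible scalar $\lambda_i$ with $\sigma_i+\sigma_i'=\lambda_i\cdot\maxmix_A$, so that $\tau_i:=\tfrac{\mu}{\lambda_i}\cdot\sigma_i$ satisfies $\tau_i\in\catc(I,A)$ and $\mu\cdot\maxmix_A-\tau_i=\tfrac{\mu}{\lambda_i}\cdot\sigma_i'\in\catc(I,A)$. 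Take $m':=\tau_i\otimes\iota_1+(\mu\cdot\maxmix_A-\tau_i)\otimes\iota_2$, whose $(I\oplus I)$-marginal is $\mu\cdot\maxmix_A\in c_\mathbf{A}$, so $m'\in c_{\mathbf{A\parr(I\oplus I)}}$; and $n':=p_1\fatsemi\rho_B+p_2\fatsemi\rho_B^0$, which is a causal channel $\mathbf{I\oplus I}\to\mathbf{B}$ since $\iota_j\fatsemi n'\in c_\mathbf{B}$ and the causal states of $\mathbf{I}\oplus\mathbf{I}$ are affine combinations of $\iota_1,\iota_2$ (using Theorem \ref{thrm:affine}), so $n'\in c_{\mathbf{(I\oplus I)^*\parr B}}$. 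Since the biproduct injections and projections are genuine morphisms of $\catc$, both $m'$ and $n'$ are too. Composing along $\mathbf{I}\oplus\mathbf{I}$ yields the semi-localisable state $h:=\tau_i\otimes\rho_B+(\mu\cdot\maxmix_A-\tau_i)\otimes\rho_B^0$, and subtracting the (product, hence semi-localisable) state $\mu\cdot\maxmix_A\otimes\rho_B^0$ leaves $\fsub h-\fsub{\mu\cdot\maxmix_A\otimes\rho_B^0}=\tfrac{\mu}{\lambda_i}\cdot\fsub{\sigma_i}\otimes\psi$; rescaling by $\lambda_i\mu^{-1}\in K$ puts $\fsub{\sigma_i}\otimes\psi$ in the $K$-span of semi-localisable states, as needed. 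The zero-system cases are as in \eqref{eq:zero_one_way}.

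I expect the routine bookkeeping of this last paragraph — verifying that $m'$ and $n'$ really are causal states of the relevant types and that their composite along $\mathbf{Z}$ is exactly the claimed $h\in c_\mathbf{A}\triangleleft c_\mathbf{B}$ — to be the only place that needs genuine care, but it uses nothing deeper than \ref{apc:effects} and biproducts; the real content of the lemma is concentrated in the functional-comparison step, which plays the role here that the dropped axiom \ref{prec:oneway} played in the non-additive setting. Combined with Lemma \ref{lemma:one_way_normalises_semi_local} and Lemma \ref{lemma:one_way_equals_pcs}, this supplies the last inclusion needed for Theorem \ref{thrm:one_way_self_dual_and_affine_semi_local}.
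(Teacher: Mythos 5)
Your proof is correct, but it is organised quite differently from the paper's. The paper takes the pairing of $h$ with $\langle b,b'\rangle \in c_{\mathbf{2^*\parr B}}$ (for $\mathbf{2}=\mathbf{I\oplus I}$ and $b,b'\in c_\mathbf{B}$), observes that it lands in $c_{\mathbf{A\parr 2}}^* = c_{\mathbf{A^*\otimes 2^*}}$, and then invokes Theorem~\ref{thrm:no_interaction_with_trivial} to factor it as $\pi_{h,b,b'}\otimes\maxmix_2$, finishing with a diagram chase showing the residual is independent of $b,b'$. You instead run a linear-functional comparison: $\hat k$ and the product effect $\ell=\pi_A^0\otimes\pi_B^0$ both evaluate to $1$ on every semi-localisable state, so their difference kills the $K$-span of those states in $\sub(\catc)$, and you exhibit $\fsub{\sigma_i}\otimes(\fsub{\rho_B}-\fsub{\rho_B^0})$ in that span explicitly via the intermediary $\mathbf{I\oplus I}$ and \ref{apc:effects}, then compute that $\ell$ vanishes on it. The shared kernel of both arguments is the two-outcome first-order toggle $\mathbf{I\oplus I}$; what yours buys is that it bypasses Theorem~\ref{thrm:no_interaction_with_trivial} entirely (and the graphical bookkeeping), at the price of working in the ambient vector space $\sub(\catc)$ and needing \ref{apc:effects} to positivise the basis elements $\sigma_i$ — a trade-off very much in the spirit of the paper's own proofs of Lemma~\ref{lemma:one_way_equals_pcs} and Theorem~\ref{thrm:tensor_nonsignalling}.

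Two small patches. First, the asserted inclusion $c_{\mathbf{A\otimes B}}\subseteq c_\mathbf{A}\triangleleft c_\mathbf{B}$ is stronger than what you can immediately justify (the closure $(-)^{**}$ may leave the semi-localisable set); but you only need $\{\rho_A\otimes\rho_B\}\subseteq c_\mathbf{A}\triangleleft c_\mathbf{B}$, which already gives $(c_\mathbf{A}\triangleleft c_\mathbf{B})^*\subseteq\{\rho_A\otimes\rho_B\}^* = c_{\mathbf{A\otimes B}}^* = c_{\mathbf{A^*\parr B^*}}$. Second, \ref{apc:effects} only promises \emph{some} scalar $\lambda_i$, not an invertible one; if $\lambda_i=0$ replace $(\sigma_i',\lambda_i)$ by $(\sigma_i'+\maxmix_A,\lambda_i+1)$, exactly as the paper does parenthetically in the proof of Lemma~\ref{lemma:one_way_equals_pcs}. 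Neither affects the substance of the argument.
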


\begin{proof}
Consider any $h \in \left(c_\mathbf{A} \triangleleft c_\mathbf{B}\right)^*$.

Let $\mathbf{2} := \mathbf{I \oplus I}$ be the coproduct of monoidal units, which is a first-order type since coproducts preserves the first-order property (Proposition \ref{prop:coproduct_first_order}). For any two $b, b' \in c_\mathbf{B}$, their coproduct is $\langle b, b' \rangle \in c_{\mathbf{2^* \parr B}}$. For any $a \in c_{\mathbf{A \parr 2}}$:
\begin{equation}
\tikzfig{semi_local_one_way_coproduct} \in c_{\mathbf{A} \triangleleft \mathbf{B}}
\end{equation}

Taking the inner product of this morphism with $h$ will give $1$. Since this holds for any $a \in c_{\mathbf{A \parr 2}}$:
\begin{equation}
\tikzfig{semi_local_one_way_coproduct2} \in c_{\mathbf{A \parr 2}}^* = c_{\mathbf{A^* \otimes 2^*}}
\end{equation}

$\mathbf{2}$ is first-order, so $c_\mathbf{2}^* = \left\{\discard_2\right\}$. By Theorem \ref{thrm:no_interaction_with_trivial}, the above morphism must separate into $\pi_{h,b,b'} \otimes \maxmix_2$ for some $\pi_{h,b,b'} \in c_\mathbf{A}^*$.

\begin{equation}
\begin{split}
\tikzfig{semi_local_one_way_coproduct3} &= \tikzfig{semi_local_one_way_coproduct4} \\
&= \tikzfig{semi_local_one_way_coproduct5} \\
&= \tikzfig{semi_local_one_way_coproduct5p} \\
&= \tikzfig{semi_local_one_way_coproduct4p} \\
&= \tikzfig{semi_local_one_way_coproduct3p}
\end{split}
\end{equation}

Since this is the case for any choice of $b$ and $b'$, we can conclude that $h$ reduces to a constant left-residual $\pi_{h,-,-} \in c_\mathbf{A}^*$ and so $h \in c_\mathbf{A}^* < c_\mathbf{B}^*$.
\end{proof}

\begin{corollary}\label{corollary:one_way_closed}
$c_\mathbf{A} < c_\mathbf{B}$ is closed.
\end{corollary}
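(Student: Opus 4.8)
The goal is to establish $\left(c_\mathbf{A} < c_\mathbf{B}\right)^{**} = c_\mathbf{A} < c_\mathbf{B}$. Since $d \subseteq d^{**}$ holds for every set of states $d$, only the inclusion $\left(c_\mathbf{A} < c_\mathbf{B}\right)^{**} \subseteq c_\mathbf{A} < c_\mathbf{B}$ needs work, and the plan is to obtain it by sandwiching the double dual between $c_\mathbf{A} < c_\mathbf{B}$ itself and a single dual of a semi-localisability set, using the two preceding lemmas back to back.

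First I would start from Lemma~\ref{lemma:one_way_normalises_semi_local}, which gives $c_\mathbf{A} < c_\mathbf{B} \subseteq \left(c_\mathbf{A}^* \triangleleft c_\mathbf{B}^*\right)^*$. Applying the (order-preserving) operator $(-)^{**}$ and then the standard polarity identity $d^{***} = d^{*}$ (itself just $d \subseteq d^{**}$ applied to $d$ and to $d^{*}$), this yields
\[
  \left(c_\mathbf{A} < c_\mathbf{B}\right)^{**} \subseteq \left(c_\mathbf{A}^* \triangleleft c_\mathbf{B}^*\right)^{***} = \left(c_\mathbf{A}^* \triangleleft c_\mathbf{B}^*\right)^{*}.
\]
Next I would invoke Lemma~\ref{lemma:normalisers_of_semi_local_are_one_way} with $\mathbf{A}$ and $\mathbf{B}$ replaced by $\mathbf{A^*}$ and $\mathbf{B^*}$, giving $\left(c_{\mathbf{A^*}} \triangleleft c_{\mathbf{B^*}}\right)^{*} \subseteq c_{\mathbf{A^*}}^{*} < c_{\mathbf{B^*}}^{*}$. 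Since the causal set carried by $\mathbf{A^*}$ is exactly $c_\mathbf{A}^*$, and $c_\mathbf{A}$ is closed (all objects of $\caus[\catc]$ carry closed sets), we have $c_{\mathbf{A^*}}^{*} = c_\mathbf{A}^{**} = c_\mathbf{A}$ and likewise $c_{\mathbf{B^*}}^{*} = c_\mathbf{B}$, so this inclusion reads $\left(c_\mathbf{A}^* \triangleleft c_\mathbf{B}^*\right)^{*} \subseteq c_\mathbf{A} < c_\mathbf{B}$. Chaining it with the displayed inclusion completes the argument.

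I do not expect any genuine obstacle: all the substance lives in Lemmas~\ref{lemma:one_way_normalises_semi_local} and~\ref{lemma:normalisers_of_semi_local_are_one_way}, and the corollary is pure bookkeeping with the orthogonality $(-)^*$. The only point deserving a moment's care is keeping the type annotations straight — identifying the dual set $c_\mathbf{A}^*$ with the causal set $c_{\mathbf{A^*}}$, and noting that $\triangleleft$ on $A^* \otimes B^*$ and $<$ on $A \otimes B$ are matched up by compact closure exactly as throughout Section~\ref{sec:oneway} — together with the running convention that zero systems are handled separately via the degeneracies in~\eqref{eq:zero_one_way}, where $\mathbf{A} < \mathbf{B}$ collapses to $\mathbf{0}$ or $\mathbf{1}$, both of which carry closed sets.
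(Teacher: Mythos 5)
Your proposal is correct and follows essentially the same route as the paper: the paper likewise combines Lemmas~\ref{lemma:one_way_normalises_semi_local} and~\ref{lemma:normalisers_of_semi_local_are_one_way} (the latter instantiated at $\mathbf{A^*},\mathbf{B^*}$) to conclude $c_\mathbf{A} < c_\mathbf{B} = \left(c_\mathbf{A}^* \triangleleft c_\mathbf{B}^*\right)^*$ and then observes that any dual set is closed, which is exactly your $d^{***}=d^*$ sandwich phrased as an equality rather than two inclusions.
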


\begin{proof}
By Lemmas \ref{lemma:one_way_normalises_semi_local} and \ref{lemma:normalisers_of_semi_local_are_one_way}, $c_\mathbf{A} < c_\mathbf{B} = \left(c_\mathbf{A}^* \triangleleft c_\mathbf{B}^*\right)^*$. Since it is the dual of some other set, it must be closed.
\end{proof}

\begin{lemma}\label{lemma:semi_local_is_one_way}
$c_\mathbf{A} \triangleleft c_\mathbf{B} \subseteq c_\mathbf{A} < c_\mathbf{B}$.
\end{lemma}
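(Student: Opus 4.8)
The plan is to prove this ``easy'' inclusion directly: show that every semi-localisable $h$ is already one-way signalling by reading off its left-residual from the semi-localising decomposition. (The reverse inclusion, up to double-dualisation, is the content of the surrounding lemmas.) Throughout I assume $A,B$ non-zero, the degenerate cases being recorded in~\eqref{eq:zero_one_way}. So fix $h \in c_\mathbf{A} \triangleleft c_\mathbf{B}$ together with witnessing data: a first-order $\mathbf{Z} = (Z, \{\discard_Z\}^*)$, $m \in c_{\mathbf{A \parr Z}}$ and $n \in c_{\mathbf{Z^* \parr B}}$ such that $h$ is the composite obtained by plugging the $Z$-wire of $m$ into the $Z^*$-wire of $n$ via the cap $\epsilon_Z$. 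The proposed residual is $m' := m \fatsemi (\id_A \otimes \discard_Z)$, i.e.\ $m$ with its memory wire discarded.

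The two facts about $\mathbf{Z}$ that drive the argument are: since $c_{\mathbf{Z^*}} = c_\mathbf{Z}^* = \{\discard_Z\}^{**}$ and the only causal effect of $Z$ is $\discard_Z$ by~\ref{apc:basis}, we get $c_{\mathbf{Z^*}} = \{\discard_Z\}$; and dually $c_\mathbf{Z} = \{\discard_Z\}^*$ is the whole causal state space of $Z$. With these in hand I would argue in three moves. (i) $m' \in c_\mathbf{A}$: for any $\pi_A \in c_\mathbf{A}^*$ the product $\pi_A \otimes \discard_Z$ lies in $c_{\mathbf{A^* \otimes Z^*}} = c_{\mathbf{A \parr Z}}^*$, so $m \fatsemi (\pi_A \otimes \discard_Z) = 1$; letting $\pi_A$ range over $c_\mathbf{A}^*$ and using that $c_\mathbf{A}$ is closed gives $m' \in c_\mathbf{A}$. (ii) For any $\pi \in c_\mathbf{B}^*$, the state $n \fatsemi (\id_{Z^*} \otimes \pi)$ of $Z^*$ is forced to equal $\discard_Z$: for every $\rho_Z \in c_\mathbf{Z}$ we have $\rho_Z \otimes \pi \in c_{\mathbf{Z \otimes B^*}} = c_{\mathbf{Z^* \parr B}}^*$, hence $n \fatsemi (\rho_Z \otimes \pi) = 1$, so this state pairs to $1$ with all of $\{\discard_Z\}^*$ and therefore lies in $\{\discard_Z\}^{**} = \{\discard_Z\}$. (iii) Substituting (ii) into the decomposition of $h$ and contracting the resulting $\discard_Z$ against the $Z$-wire of $m$ yields $h \fatsemi (\id_A \otimes \pi) = m \fatsemi (\id_A \otimes \discard_Z) = m'$ for every $\pi \in c_\mathbf{B}^*$; since $h \in c_{\mathbf{A \parr B}}$ is part of the hypothesis, this is exactly the defining condition for $h \in c_\mathbf{A} < c_\mathbf{B}$ with residual $m'$.

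The main obstacle is step (ii), which is also the unique place first-orderness of $\mathbf{Z}$ is used: it is what pins $n \fatsemi (\id_{Z^*} \otimes \pi)$ to a single state rather than something varying with $\pi$, and hence what makes the residual well-defined. If $\mathbf{Z}$ were permitted to be a general higher-order type the collapse would fail (mirroring precisely why semi-localisability is defined only through first-order intermediaries), so the care required is in invoking $c_{\mathbf{Z^*}} = \{\discard_Z\}$ and the uniqueness of discarding via~\ref{apc:basis} at exactly this point; everything else is routine compact-closed bookkeeping.
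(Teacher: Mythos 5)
Your proposal is correct and follows essentially the same route as the paper's proof: apply $\pi \in c_\mathbf{B}^*$ locally to $n$, use first-orderness of $\mathbf{Z}$ to collapse the resulting causal state of $\mathbf{Z^*}$ to the unique element $\maxmix_{Z^*}$ (equivalently $\discard_Z$), and read off the constant residual $m \fatsemi (\id_A \otimes \discard_Z)$. You simply make explicit two steps the paper leaves implicit — that the residual lies in $c_\mathbf{A}$ and that $\{\discard_Z\}^{**} = \{\discard_Z\}$ via \ref{apc:basis} — both of which check out.
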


\begin{proof}
Let $h \in c_\mathbf{A} \triangleleft c_\mathbf{B}$ decompose into $m \in c_{\mathbf{A \parr Z}}$ and $n \in c_{\mathbf{Z^* \parr B}}$. Applying any effect $\pi \in c_\mathbf{B}^*$ locally to $n$ must result in a causal morphism of type $\mathbf{Z^*}$. Since $\mathbf{Z}$ is first-order, this must be $\maxmix_{Z^*}$ regardless of $\pi$. Thus $h \fatsemi (\id_A \otimes \pi) = m \fatsemi (\id_A \otimes \discard_Z)$ gives a witness residual for $h \in c_\mathbf{A} < c_\mathbf{B}$.
\end{proof}

\begin{lemma}\label{lemma:one_way_normalises_pcs}
$c_\mathbf{A} < c_\mathbf{B} \subseteq \left(c_\mathbf{A}^* \obslash c_\mathbf{B}^*\right)^*$.
\end{lemma}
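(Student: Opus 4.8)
The plan is to reduce both sides to the concrete $\obslash$-description and then run a short bilinear computation. First I would invoke Lemma~\ref{lemma:one_way_equals_pcs} applied to $\mathbf{A}$ and $\mathbf{B}$, which tells us that any $h \in c_\mathbf{A} < c_\mathbf{B}$ also lies in $c_\mathbf{A} \obslash c_\mathbf{B}$. Hence we may fix witnesses: a (finite) index set $\mathcal{I}$, states $\{f_i\}_i \subseteq \sub(\catc)(I,A)$, states $\{g_i\}_i \subseteq c_\mathbf{B}$, and $f \in c_\mathbf{A}$, with $\fsub{h} \sim \sum_i f_i \otimes \fsub{g_i}$ and $\fsub{f} \sim \sum_i f_i$. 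Similarly, an arbitrary $k \in c_\mathbf{A}^* \obslash c_\mathbf{B}^* = c_{\mathbf{A^*}} \obslash c_{\mathbf{B^*}}$ unpacks, straight from the definition of $\obslash$ with $\mathbf{A^*},\mathbf{B^*}$ in place of $\mathbf{A},\mathbf{B}$, into states $\{p_j\}_j \subseteq \sub(\catc)(I,A^*)$, states $\{q_j\}_j \subseteq c_{\mathbf{B^*}} = c_\mathbf{B}^*$, and $p \in c_{\mathbf{A^*}} = c_\mathbf{A}^*$, with $\fsub{k} \sim \sum_j p_j \otimes \fsub{q_j}$ and $\fsub{p} \sim \sum_j p_j$. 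Via the compact structure I would transpose $k$ to an effect on $A \otimes B$, carrying this decomposition over componentwise (transposition is linear and compatible with $\otimes$ and with $\fsub{-}$), so that the $A$-component of $h$ meets the $A^*$-component of $k$ and the $B$-components meet.

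Next I would compute the pairing $h \fatsemi k$ inside $\sub(\catc)$, using that $\sub(\catc)$ is $\mathrm{Ab}$-enriched and in fact $K$-vector-space enriched (Proposition~\ref{prop:subc-field}), so composition and $\otimes$ distribute over the sums. Expanding gives $\fsub{h \fatsemi k} \sim \sum_{i,j} (f_i \fatsemi p_j) \cdot \fsub{g_i \fatsemi q_j}$. The key simplification is that each $g_i \in c_\mathbf{B}$ and each (transposed) $q_j \in c_\mathbf{B}^*$, so $g_i \fatsemi q_j = \id_I$ for every pair $i,j$; the double sum therefore collapses to $\sum_{i,j} f_i \fatsemi p_j \sim \big(\sum_i f_i\big) \fatsemi \big(\sum_j p_j\big) \sim \fsub{f} \fatsemi \fsub{p} \sim \fsub{f \fatsemi p}$. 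Since $f \in c_\mathbf{A}$ and $p \in c_\mathbf{A}^*$ we get $f \fatsemi p = \id_I$, hence $\fsub{h \fatsemi k} \sim \fsub{\id_I}$, and by faithfulness of the embedding $\fsub{-}$ we conclude $h \fatsemi k = \id_I$ in $\catc$. As $k$ was arbitrary, $h$ normalises every element of $c_\mathbf{A}^* \obslash c_\mathbf{B}^*$, i.e.\ $h \in (c_\mathbf{A}^* \obslash c_\mathbf{B}^*)^*$, which is the claim.

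I expect the only genuinely fiddly part to be the transposition bookkeeping: keeping straight which tensor factor of the $\obslash$-decomposition of $h$ is paired with which factor of $k$, and checking that transposing $k$ from a state of $\mathbf{A^* \parr B^*}$ to an effect on $A \otimes B$ really does turn the stated decomposition into $\sum_j p_j^{*} \otimes \fsub{q_j^{*}}$ with $q_j^{*} \in c_\mathbf{B}^*$ and $\sum_j p_j^{*}$ the transpose of an element of $c_\mathbf{A}^*$. The algebraic core — collapsing the sum via $g_i \fatsemi q_j = \id_I$ and $\sum_i f_i = \fsub{f}$, $\sum_j p_j = \fsub{p}$ — is then immediate. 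As elsewhere in this section, the degenerate cases where $A$ or $B$ is a zero object are covered by the isomorphisms in \eqref{eq:zero_one_way}, so it suffices to argue for $A$, $B$ non-zero.
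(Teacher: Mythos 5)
Your argument is correct, but it takes a different route from the paper's. You decompose \emph{both} sides: $h$ via Lemma~\ref{lemma:one_way_equals_pcs} (the nontrivial inclusion $c_\mathbf{A} < c_\mathbf{B} \subseteq c_\mathbf{A} \obslash c_\mathbf{B}$, which itself needs basis extension and Proposition~\ref{prop:bintests}) and $k$ via the definition of $\obslash$, then collapse the double sum using $g_i \fatsemi q_j = \id_I$ followed by $f \fatsemi p = \id_I$. There is no circularity in this — Lemma~\ref{lemma:one_way_equals_pcs} is proved independently — and the bilinearity steps all go through in the $K$-vector-space enrichment of $\sub(\catc)$. The paper's proof is lighter: it decomposes only $k$ as $\fsub{k} \sim \sum_i f_i \otimes \fsub{g_i}$ with each $g_i \in c_\mathbf{B}^*$ an honest effect in $\catc$, and then uses the \emph{defining} property of $h \in c_\mathbf{A} < c_\mathbf{B}$ directly — each $g_i$ applied to the $B$-leg of $h$ returns the constant residual $m$ — so the single sum collapses to $\fsub{f \fatsemi m^*} \sim \fsub{1}$ with no need to invoke any $\obslash$-decomposition of $h$. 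What your version buys is symmetry (both factors treated identically, and the statement becomes the self-duality of $\obslash$ under $(-)^*$ rather than a $<$/$\obslash$ pairing); what the paper's buys is economy, keeping this lemma independent of the harder direction of Lemma~\ref{lemma:one_way_equals_pcs}. One small caution on your "fiddly part": the paper's definition of $\obslash$ types the $f_i$ as living in $\sub(\catc)(I, A\otimes B)$, which is evidently a typo for $\sub(\catc)(I,A)$ (as the condition $\fsub{f} \sim \sum_i f_i$ with $f \in c_\mathbf{A}$ forces); your reading is the intended one, and the transposition bookkeeping you flag is indeed routine once that is fixed.
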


\begin{proof}
Take any $h \in c_\mathbf{A} < c_\mathbf{B}$ with residual $m \in c_\mathbf{A}$ and any $k \in c_\mathbf{A}^* \obslash c_\mathbf{B}^*$ decomposing as $k + \sum_{i \in \mathcal{I}^-} f_i^- \otimes g_i^- = \sum_{i \in \mathcal{I}^+} f_i^+ \otimes g_i^+$ with $n + \sum_{i \in \mathcal{I}^-} f_i^- = \sum_{i \in \mathcal{I}^+} f_i^+$, $n \in c_\mathbf{A}^*$, $\left\{g_i^\pm\right\}_{i \in \mathcal{I}^\pm} \subseteq c_\mathbf{B}^*$.

Take any $h \in c_\mathbf{A} < c_\mathbf{B}$ with residual $m \in c_\mathbf{A}$ and any $k \in c_\mathbf{A}^* \obslash c_\mathbf{B}^*$ decomposing as $\fsub{k} \sim \sum_{i \in \mathcal{I}} f_i \otimes \fsub{g_i}$ with $\{g_i\}_{i \in \mathcal{I}} \subseteq c_\mathbf{B}^*$ and $f \in c_\mathbf{A}^*$ such that $\fsub{f} \sim \sum_{i \in \mathcal{I}} f_i$.

\begin{equation}
\begin{split}
\fsub{k \fatsemi h^*} &\sim \left( \sum_{i \in \mathcal{I}} f_i \otimes \fsub{g_i} \right) \fatsemi \fsub{h^*} \\
&\sim \sum_{i \in \mathcal{I}} f_i \fatsemi \fsub{m^*} \\
&\sim \fsub{f \fatsemi m^*} \\
&\sim \fsub{1}
\end{split}
\end{equation}


So by faithfulness of the embedding into $\sub(\catc)$, $k \fatsemi h^* = 1$.
\end{proof}

\begin{theorem}[Restatement of Theorem \ref{thrm:one_way_self_dual_and_affine_semi_local}]\label{proof:thrm:one_way_self_dual_and_affine_semi_local}
$c_\mathbf{A} < c_\mathbf{B} = \left(c_\mathbf{A}^* < c_\mathbf{B}^*\right)^* = \left(c_\mathbf{A} \triangleleft c_\mathbf{B}\right)^{**} = c_\mathbf{A} \obslash c_\mathbf{B}$
\end{theorem}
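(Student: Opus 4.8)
The plan is to assemble the four sets into a cycle of (in)equalities using the lemmas already proved in this section, working throughout under the assumption that $\mathbf{A}$ and $\mathbf{B}$ are non-zero (all remaining combinations involving $\mathbf{0}$ or $\mathbf{1}$ are recorded in Equation \ref{eq:zero_one_way}). Write $L := c_\mathbf{A} < c_\mathbf{B}$, $N := (c_\mathbf{A}^* < c_\mathbf{B}^*)^*$, $S := (c_\mathbf{A} \triangleleft c_\mathbf{B})^{**}$ and $P := c_\mathbf{A} \obslash c_\mathbf{B}$. The equality $L = P$ is exactly Lemma \ref{lemma:one_way_equals_pcs}, so it remains to establish $L = S$ and $S = N$.

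For $S \subseteq L$: Lemma \ref{lemma:semi_local_is_one_way} gives $c_\mathbf{A} \triangleleft c_\mathbf{B} \subseteq L$; applying the monotone operator $(-)^{**}$ and using that $L$ is closed by Corollary \ref{corollary:one_way_closed} gives $S \subseteq L^{**} = L$. For $L \subseteq S$ I would argue by duality: it is enough to show $(c_\mathbf{A} \triangleleft c_\mathbf{B})^* \subseteq L^*$, since dualising then gives $L = L^{**} \subseteq (c_\mathbf{A} \triangleleft c_\mathbf{B})^{**} = S$. By Lemma \ref{lemma:normalisers_of_semi_local_are_one_way} we have $(c_\mathbf{A} \triangleleft c_\mathbf{B})^* \subseteq c_\mathbf{A}^* < c_\mathbf{B}^*$, so it suffices to check $c_\mathbf{A}^* < c_\mathbf{B}^* \subseteq L^*$. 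This is where the self-duality enters: Lemma \ref{lemma:one_way_normalises_pcs} applied at $\mathbf{A}^*, \mathbf{B}^*$, together with Lemma \ref{lemma:one_way_equals_pcs}, yields $L \subseteq (c_\mathbf{A}^* \obslash c_\mathbf{B}^*)^* = (c_\mathbf{A}^* < c_\mathbf{B}^*)^*$; dualising this and using $c_\mathbf{A}^* < c_\mathbf{B}^* \subseteq (c_\mathbf{A}^* < c_\mathbf{B}^*)^{**}$ gives $c_\mathbf{A}^* < c_\mathbf{B}^* \subseteq L^*$, as wanted.

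For $S = N$ I would simply specialise Corollary \ref{corollary:one_way_closed} (which reads $c_\mathbf{X} < c_\mathbf{Y} = (c_\mathbf{X}^* \triangleleft c_\mathbf{Y}^*)^*$ for any objects) to $\mathbf{X} = \mathbf{A}^*$, $\mathbf{Y} = \mathbf{B}^*$, using that $c_\mathbf{A}$ and $c_\mathbf{B}$ are closed because they are objects of $\caus[\catc]$; this gives $c_\mathbf{A}^* < c_\mathbf{B}^* = (c_\mathbf{A} \triangleleft c_\mathbf{B})^*$, and taking duals yields $N = S$. Combining $L = S$, $S = N$ and $L = P$ gives the four-way equality. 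The genuine content of the theorem sits in the supporting lemmas, so there is no real remaining obstacle beyond tracking the antitone behaviour of $(-)^*$ and invoking closedness (Corollary \ref{corollary:one_way_closed}) at the right places; if anything were to be singled out as the crux it would be the self-duality, i.e.\ Lemma \ref{lemma:one_way_normalises_pcs}, which says that a one-way-signalling process normalises against one-way-signalling processes in the dual types.
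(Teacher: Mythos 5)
Your proposal is correct and follows essentially the same route as the paper: it assembles the same four supporting lemmas (Lemmas \ref{lemma:one_way_equals_pcs}, \ref{lemma:one_way_normalises_semi_local}, \ref{lemma:normalisers_of_semi_local_are_one_way}, \ref{lemma:semi_local_is_one_way}, via Corollary \ref{corollary:one_way_closed}, plus Lemma \ref{lemma:one_way_normalises_pcs}) into the same cycle of inclusions, differing only in bookkeeping order. The only nitpicks are cosmetic: Corollary \ref{corollary:one_way_closed} as stated only asserts closedness (the identity $c_\mathbf{A} < c_\mathbf{B} = (c_\mathbf{A}^* \triangleleft c_\mathbf{B}^*)^*$ you invoke lives in its proof), and the inclusion $L \subseteq (c_\mathbf{A}^* \obslash c_\mathbf{B}^*)^*$ comes from Lemma \ref{lemma:one_way_normalises_pcs} applied at $\mathbf{A},\mathbf{B}$ rather than at $\mathbf{A}^*,\mathbf{B}^*$ — neither affects the argument.
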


\begin{proof}
Combining Lemmas \ref{lemma:one_way_normalises_semi_local} and \ref{lemma:normalisers_of_semi_local_are_one_way} gives $\left(c_\mathbf{A}^* < c_\mathbf{B}^*\right)^* = \left(c_\mathbf{A} \triangleleft c_\mathbf{B}\right)^{**}$. Then Corollary \ref{corollary:one_way_closed} and Lemma \ref{lemma:semi_local_is_one_way} give $\left(c_\mathbf{A} \triangleleft c_\mathbf{B}\right)^{**} \subseteq c_\mathbf{A} < c_\mathbf{B}$ with the other direction of inclusion given by Lemmas \ref{lemma:one_way_equals_pcs} and \ref{lemma:one_way_normalises_pcs}.
\end{proof}

Building up to the definition of a $\bv$-category from Blute, Panangaden, and Slavnov \cite{Blute2012}, we first note that $*$-autonomous categories are linearly distributive categories with negation \cite{Cockett1997}. The next step is to show that the one-way signalling operator forms a (non-symmetric) monoidal structure.

\begin{proposition}
$(\mathbf{<}, \mathbf{I}, \alpha, \lambda, \rho)$ forms a monoidal structure for $\caus[\catc]$.
\end{proposition}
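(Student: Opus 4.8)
The plan is to take every structure morphism for $(\mathbf{<},\mathbf{I},\alpha,\lambda,\rho)$ to be the one already present in $\catc$: the action of $\mathbf{<}$ on a pair $f\colon\mathbf{A}\to\mathbf{A}'$, $g\colon\mathbf{B}\to\mathbf{B}'$ is the $\catc$-morphism $f\otimes g$, and $\alpha,\lambda,\rho$ are the associator and unitors of $\catc$ --- the same underlying maps that already witness associativity and unitality of $\mathbf{\otimes}$ and of $\mathbf{\parr}$ in $\caus[\catc]$ (recall $\mathbf{A<B}$ has underlying object $A\otimes B$ and $\mathbf{I}$ is the unit for $\mathbf{\parr}$). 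With this choice, naturality of $\alpha,\lambda,\rho$, the triangle identity and the pentagon identity hold for free, being equations already valid in $\catc$. So the whole content is that each of these $\catc$-morphisms, and its inverse, lands in $\caus[\catc]$ as a map between the relevant $\mathbf{<}$-objects; equivalently, a short list of (in)equalities between the causal sets $c_\mathbf{A}<c_\mathbf{B}$. As elsewhere in this section I would treat only non-zero systems, the zero cases being subsumed by the isomorphisms of~\eqref{eq:zero_one_way}.

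Next I would dispatch the easy parts. For bifunctoriality, $f\otimes g$ is already a morphism $\mathbf{A\parr B}\to\mathbf{A}'\parr\mathbf{B}'$, hence sends $c_\mathbf{A}<c_\mathbf{B}\subseteq c_{\mathbf{A\parr B}}$ into $c_{\mathbf{A}'\parr\mathbf{B}'}$; and since $\caus[\catc]$ is $*$-autonomous, $g$ causal makes $\pi'\mapsto g\fatsemi\pi'$ send $c_{\mathbf{B}'}^*$ into $c_\mathbf{B}^*$, so for $h\in c_\mathbf{A}<c_\mathbf{B}$ with residual $m$ and any $\pi'\in c_{\mathbf{B}'}^*$ we get $\bigl((f\otimes g)\fatsemi h\bigr)\fatsemi(\id\otimes\pi')=f\fatsemi\bigl(h\fatsemi(\id\otimes(g\fatsemi\pi'))\bigr)=f\fatsemi m\in c_{\mathbf{A}'}$, independent of $\pi'$; hence $(f\otimes g)\fatsemi h\in c_{\mathbf{A}'}<c_{\mathbf{B}'}$. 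Applying the same to isomorphisms handles inverses, and the functor equations are inherited from $\catc$. For the unitors, using $c_\mathbf{I}=\{1\}=c_\mathbf{I}^*$ one checks that $\lambda_A$ carries $c_\mathbf{I}<c_\mathbf{A}$ exactly onto $c_\mathbf{A}$ and $\rho_A$ carries $c_\mathbf{A}<c_\mathbf{I}$ exactly onto $c_\mathbf{A}$: the ``$\parr$ with $\mathbf{I}$'' part is already an isomorphism of the underlying types, while the residual clause degenerates to a tautology (witness $m=1$ on the left and any $m\in c_\mathbf{A}$ on the right), so these restrict to isomorphisms of $\caus[\catc]$.

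The real work --- and the step I expect to be the main obstacle --- is the associator, i.e.\ the identity $c_{(\mathbf{A<B})<\mathbf{C}}=c_{\mathbf{A<(B<C)}}$ of subsets of $\catc(I,A\otimes B\otimes C)$ (transported along the reassociation of $\catc$), which is really associativity of one-way signalling. I would prove the inclusion ``$\subseteq$'' directly and obtain ``$\supseteq$'' from self-duality. For ``$\subseteq$'', take $h$ in the left-hand set: it has a residual $m\in c_{\mathbf{A<B}}$ with $h\fatsemi(\id_{A\otimes B}\otimes\pi_C)=m$ for all $\pi_C\in c_\mathbf{C}^*$, and $m$ in turn has a residual $m_A\in c_\mathbf{A}$ with $m\fatsemi(\id_A\otimes\pi_B)=m_A$ for all $\pi_B\in c_\mathbf{B}^*$. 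To put $h$ in $c_\mathbf{A}<c_{\mathbf{B<C}}$, take any effect $\tau\in c_{(\mathbf{B<C})^*}$; by Theorem~\ref{thrm:one_way_self_dual_and_affine_semi_local}, $(\mathbf{B<C})^*=\mathbf{B^*<C^*}$ and $c_{\mathbf{B^*}}<c_{\mathbf{C^*}}=c_{\mathbf{B^*}}\obslash c_{\mathbf{C^*}}$, so $\fsub{\tau}\sim\sum_i\phi_i\otimes\fsub{\psi_i}$ in $\sub(\catc)$ with $\psi_i\in c_\mathbf{C}^*$ and $\fsub{\phi}\sim\sum_i\phi_i$ for some $\phi\in c_\mathbf{B}^*$. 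Each $\psi_i\in c_\mathbf{C}^*$ collapses the $C$-slot of $h$ to $m$, so in $\sub(\catc)$ we get $\fsub{h\fatsemi(\id_A\otimes\tau)}\sim\sum_i\fsub{m\fatsemi(\id_A\otimes\phi_i)}=\fsub{m\fatsemi(\id_A\otimes\phi)}=\fsub{m_A}$, using $\phi\in c_\mathbf{B}^*$; faithfulness of $\fsub{-}$ then gives $h\fatsemi(\id_A\otimes\tau)=m_A$, a constant in $c_\mathbf{A}$. Contracting once more with any $\pi_A\in c_\mathbf{A}^*$ yields $1$, confirming $h\in c_{\mathbf{A}\parr(\mathbf{B<C})}$ as well, so $h\in c_\mathbf{A}<c_{\mathbf{B<C}}$ with residual $m_A$. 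Finally, since this inclusion holds for all $\mathbf{A},\mathbf{B},\mathbf{C}$, instantiating at $\mathbf{A^*},\mathbf{B^*},\mathbf{C^*}$ and using $(\mathbf{X<Y})^*=\mathbf{X^*<Y^*}$ (a consequence of the same theorem) together with closedness of all the sets in play flips it into the reverse inclusion, giving the equality; hence $\alpha$ and $\alpha^{-1}$ are both causal. The only genuinely non-formal point is this re-bracketing: one must pass into $\sub(\catc)$ and exploit the $\obslash$-presentation of the signalling effects on $\mathbf{B<C}$ to split $\tau$ into a sum of tensors, after which everything reduces to scalar cancellation.
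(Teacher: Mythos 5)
Your proposal is correct and follows essentially the same route as the paper: take all structure maps from $\catc$ so that naturality and coherence are inherited, reduce everything to causality of those maps, handle the unitors via the triviality of $c_\mathbf{I}=\{1\}$, and prove causality of the associator by expanding an arbitrary effect in $c_{(\mathbf{B<C})^*}=c_{\mathbf{B}^*}\obslash c_{\mathbf{C}^*}$ (via Theorem~\ref{thrm:one_way_self_dual_and_affine_semi_local}) as $\sum_i\phi_i\otimes\fsub{\psi_i}$ in $\sub(\catc)$ and collapsing the sum to the constant residual $m_A$, with the inverse direction by duality. Apart from a harmless composition-order slip in the bifunctoriality paragraph (it should be $h\fatsemi(f\otimes g)$, not $(f\otimes g)\fatsemi h$), the argument matches the paper's proof.
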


\begin{proof}
We generate the bifunctor $\mathbf{<}$ by mapping objects as $\mathbf{A < B} := \left(A \otimes B, c_\mathbf{A} < c_\mathbf{B}\right)$ and copying the action on morphisms from $\otimes : \catc \times \catc \to \catc$. Given any $h \in c_{\mathbf{A < B}}$ with residual $m \in c_\mathbf{A}$, and morphisms $f : \mathbf{A} \to \mathbf{C}$ and $g : \mathbf{B} \to \mathbf{D}$, $h \fatsemi (f \otimes g)$ will be of type $\mathbf{C < D}$ with residual $m \fatsemi f \in c_\mathbf{C}$.

Naturality of $\alpha$, $\lambda$, and $\rho$ are inherited from $\catc$, as are the coherence identities. We just need to show that each of these is causal when using $\mathbf{<}$-types.


For $\alpha_{A, B, C} : \mathbf{(A < B) < C} \to \mathbf{A < (B < C)}$, take any $h \in c_{\mathbf{(A < B) < C}}$ with successive residuals $m \in c_{\mathbf{A < B}}$ and $n \in c_\mathbf{A}$. Any effect $k \in c_{\mathbf{B < C}}^* \simeq c_\mathbf{B}^* \obslash c_\mathbf{C}^*$ can be expressed as $\fsub{k} \sim \sum_i f_i \otimes \fsub{g_i}$ where $\{g_i\}_i \subseteq c_\mathbf{C}^*$ and $\fsub{f} \sim \sum_i f_i$ for some $f \in c_\mathbf{B}^*$. We want to show that connecting $h$ and $k$ (via the associator) yields a constant residual in $c_\mathbf{A}$ independent of $k$.

\begin{equation}
\tikzfig{seq_monoidal_associator} \sim \sum_i \tikzfig{seq_monoidal_associator2} \sim \sum_i \tikzfig{seq_monoidal_associator3} \sim \tikzfig{seq_monoidal_associator4}
\end{equation}

For $\lambda_A : \mathbf{I < A} \to \mathbf{A}$, consider any $h \in c_{\mathbf{I < A}}$ and effect $\pi \in c_\mathbf{A}^*$. The residual of $h$ must be $1$ because it is the only causal state of $\mathbf{I}$, hence adjoining $h$ and $\pi$ (via $\lambda_A$) will yield $1$ for normalisation.

For $\rho_A : \mathbf{A < I} \to \mathbf{A}$, consider any $h \in c_{\mathbf{A < I}}$. The unique effect from $\mathbf{I}$ is $1$, so $h \fatsemi (\id_A \otimes 1) \fatsemi \rho_A = h \fatsemi \rho_A$ must be the residual, and hence $h \fatsemi \rho_A \in c_\mathbf{A}$.

The inverses $\alpha^{-1}$, $\lambda^{-1}$, and $\rho^{-1}$ are all causal by dual arguments.
\end{proof}

\begin{proposition}
$(\caus[\catc], \mathbf{<}, \mathbf{I})$ has a weak interchange with respect to $(\caus[\catc], \mathbf{\otimes}, \mathbf{I}, \mathbf{\parr}, \mathbf{I})$.
\end{proposition}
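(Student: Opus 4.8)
\emph{Proof plan.} The plan is to produce the four components $w_\otimes$, $w_\parr$, $w_{I_\otimes}$, $w_{I_\parr}$ and check the conditions of \cite[Definition 4.1]{Blute2012}. The organising observation is that the three constructors $\mathbf{A\otimes B}$, $\mathbf{A\parr B}$ and $\mathbf{A<B}$ all have the same underlying $\catc$-object $A\otimes B$ (using compact closure of $\catc$ to identify $(A^*\otimes B^*)^*$ with $A\otimes B$), and all three units are $\mathbf{I}$, with underlying object $I$. Accordingly I would define each component so that its underlying $\catc$-morphism is the canonical ``middle-four'' interchange $(R\otimes U)\otimes(T\otimes V)\to(R\otimes T)\otimes(U\otimes V)$ built from the associators and symmetry of $(\catc,\otimes)$ (respectively a unitor $I\otimes I\to I$ in the two unit components). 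With this choice, naturality of the $w$'s and every compatibility/coherence equation required of a weak interchange becomes an equation of $\caus[\catc]$-morphisms whose underlying $\catc$-equation holds by Mac Lane coherence together with (strict) monoidality of the embedding $\catc\hookrightarrow\caus[\catc]$ on the relevant data. Thus the entire content of the proposition is to verify that each of these canonical $\catc$-isomorphisms is in fact a morphism of $\caus[\catc]$, i.e.\ respects the causal sets.

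The substantive case is $w_\otimes\colon(\mathbf{R<U})\otimes(\mathbf{T<V})\to(\mathbf{R\otimes T})<(\mathbf{U\otimes V})$. By Theorem~\ref{thrm:affine} the source set $c_{(\mathbf{R<U})\otimes(\mathbf{T<V})}$ is the affine closure $\affp$ of the product states $h_1\otimes h_2$ with $h_1\in c_\mathbf{R}<c_\mathbf{U}$ and $h_2\in c_\mathbf{T}<c_\mathbf{V}$; post-composition with the $\catc$-morphism $w_\otimes$ preserves affine combinations, and the target $c_{(\mathbf{R\otimes T})<(\mathbf{U\otimes V})}$ is closed and flat, hence equal to its own affine closure (again Theorem~\ref{thrm:affine}), so it suffices to handle a single product state $h_1\otimes h_2$. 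Here I would use the $\obslash$-presentation of one-way signalling from Theorem~\ref{thrm:one_way_self_dual_and_affine_semi_local}: write $\fsub{h_1}\sim\sum_i f_i\otimes\fsub{g_i}$ with $g_i\in c_\mathbf{U}$, $f_i\in\sub(\catc)(I,R)$ and $\sum_i f_i=\fsub{m_1}$ for the residual $m_1\in c_\mathbf{R}$, and likewise $\fsub{h_2}\sim\sum_j f'_j\otimes\fsub{g'_j}$ with $g'_j\in c_\mathbf{V}$ and $\sum_j f'_j=\fsub{m_2}$, $m_2\in c_\mathbf{T}$. Reordering the four tensor factors via $w_\otimes$ gives $\fsub{(h_1\otimes h_2)\fatsemi w_\otimes}\sim\sum_{i,j}(f_i\otimes f'_j)\otimes\fsub{g_i\otimes g'_j}$, where $g_i\otimes g'_j\in c_{\mathbf{U\otimes V}}$ and $\sum_{i,j}f_i\otimes f'_j=\fsub{m_1\otimes m_2}$ with $m_1\otimes m_2\in c_{\mathbf{R\otimes T}}$; this is precisely a witness placing $(h_1\otimes h_2)\fatsemi w_\otimes$ in $c_{\mathbf{R\otimes T}}\obslash c_{\mathbf{U\otimes V}}=c_{\mathbf{R\otimes T}}<c_{\mathbf{U\otimes V}}$ (with residual $m_1\otimes m_2$). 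The zero-object cases are covered by \eqref{eq:zero_one_way}.

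For $w_\parr$ I would avoid a second explicit computation and instead dualise: $\caus[\catc]$ is $*$-autonomous, so $(-)^*$ is a functor, and by the self-duality $(\mathbf{A<B})^*=\mathbf{A^*<B^*}$ (Theorem~\ref{thrm:one_way_self_dual_and_affine_semi_local}) together with $(\mathbf{A\otimes B})^*=\mathbf{A^*\parr B^*}$, the dual of $w_\otimes$ instantiated at $\mathbf{R}=\mathbf{C}^*$, $\mathbf{U}=\mathbf{D}^*$, $\mathbf{T}=\mathbf{E}^*$, $\mathbf{V}=\mathbf{F}^*$ is a genuine $\caus[\catc]$-morphism $(\mathbf{C\parr E})<(\mathbf{D\parr F})\to(\mathbf{C<D})\parr(\mathbf{E<F})$, whose underlying map is again, by coherence of $\catc$, the canonical interchange; this is our $w_\parr$. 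For the units, $\mathbf{I<I}$ has underlying object $I\otimes I$ and $c_{\mathbf{I<I}}\subseteq c_{\mathbf{I\parr I}}=\{1\}$ while $1\in c_{\mathbf{I<I}}$ trivially (residual $1$), so $\mathbf{I<I}\cong\mathbf{I}$; take $w_{I_\otimes}\colon\mathbf{I}\to\mathbf{I<I}$ and $w_{I_\parr}\colon\mathbf{I<I}\to\mathbf{I}$ to be the two mutually inverse unitor isomorphisms, which are causal since they act as $1\mapsto1$. Every remaining compatibility condition of \cite[Definition 4.1]{Blute2012} then holds because it holds on underlying objects.

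I expect the only real obstacle to be the causality of $w_\otimes$, and within that the key move is the reduction, via Theorems~\ref{thrm:affine} and~\ref{thrm:one_way_self_dual_and_affine_semi_local}, from arbitrary elements of the tensor product to product states, whose $\obslash$-decompositions multiply together cleanly to exhibit the interchanged state as one-way signalling with the expected residual $m_1\otimes m_2$; naturality, coherence, the units, and $w_\parr$ are then all formal, inherited either from coherence of $\catc$ or from $*$-autonomy of $\caus[\catc]$.
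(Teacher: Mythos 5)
Your proposal is correct and follows essentially the same route as the paper: take $w_\otimes$ to be the middle-four interchange, verify causality by multiplying together the $\obslash$-decompositions of the two one-way-signalling factors to exhibit $(h_1\otimes h_2)\fatsemi w_\otimes$ in $c_{\mathbf{R\otimes T}}\obslash c_{\mathbf{U\otimes V}}$ with residual $m_1\otimes m_2$, obtain $w_\parr$ and the unit maps by duality and unitors, and inherit all coherence conditions from $\catc$. Your explicit reduction from arbitrary elements of the tensor to product states via Theorem~\ref{thrm:affine} is a small but welcome clarification that the paper leaves implicit.
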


\begin{proof}
Since the units of these monoidal structures are all the same, $w_{I_\otimes} : \mathbf{I} \to \mathbf{I < I}$ is the trivial structural map $\lambda_I = \rho_I$ common to all of the structures. Coassociativity and commutativity with the unitors is immediate from coherence in $\catc$.

The actual structural map $w_\mathbf{\otimes} : \mathbf{(R < U) \otimes (T < V)} \to \mathbf{(R \otimes T) < (U \otimes V)}$ is taken to be $\id_R \otimes \sigma_{U,T} \otimes \id_V : (R \otimes U) \otimes (T \otimes V) \to (R \otimes T) \otimes (U \otimes V)$ (ignoring the usual coherence morphisms). Naturality and compatibility with the coherence morphisms is inherited from naturality and coherence in $\catc$. We just need to show that it is causal for this typing.

Given $h \in c_{\mathbf{R < U}} = c_\mathbf{R} \obslash c_\mathbf{U}$ and $k \in c_{\mathbf{T < V}} = c_\mathbf{T} \obslash c_\mathbf{V}$, suppose they have decompositions $\fsub{h} \sim \sum_i r_i \otimes \fsub{u_i}$ and $\fsub{k} \sim \sum_j t_j \otimes \fsub{v_j}$ where $\forall i . u_i \in c_\mathbf{U}$, $\exists r \in c_\mathbf{R} . \fsub{r} \sim \sum_i r_i$, $\forall j . v_j \in c_\mathbf{V}$, and $\exists t \in c_\mathbf{T} . \fsub{t} \sim \sum_i t_i$. Then $\fsub{(h \otimes k) \fatsemi (\id_R \otimes \sigma_{U,T} \otimes \id_V)} \sim \sum_{i,j} r_i \otimes t_j \otimes \fsub{u_i \otimes v_j}$. This is a state in $c_{\mathbf{(R \otimes T) < (U \otimes V)}} = c_{\mathbf{R \otimes T}} \obslash c_{\mathbf{U \otimes V}}$ since $\forall i, j . u_i \otimes v_j \in c_{\mathbf{U \otimes V}}$ and $\exists r \otimes t \in c_{\mathbf{R \otimes T}} . \fsub{r \otimes t} \sim \sum_{i,j} r_i \otimes t_j$.

The dual maps $w_{I_\parr} : \mathbf{I < I} \to \mathbf{I}$ and $w_\mathbf{\parr} : \mathbf{(C \parr E) < (D \parr F)} \to \mathbf{(C < D) \parr (E < F)}$ are given by the transposes of $w_{I_\otimes}$ and $w_\mathbf{\otimes}$. Compatibility between these and the linear distribution $\delta : \mathbf{A \otimes (B \parr C)} \to \mathbf{(A \otimes B) \parr C}$ (which is just the associator in $\catc$) is again given by coherence of $\catc$.
\end{proof}

\begin{theorem}[Restatement of Theorem \ref{thrm:bv_category}]\label{proof:thrm:bv_category}
$\caus[\catc]$ is a $\bv$-category.
\end{theorem}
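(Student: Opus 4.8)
The plan is to assemble the structure already in hand and then dispatch the handful of remaining coherence conditions in the definition of a $\bv$-category. Since $\caus[\catc]$ is $*$-autonomous, it is in particular a symmetric linearly distributive category with negation \cite{Cockett1997}, which supplies the $(\mathbf{\otimes}, \mathbf{\parr})$ fragment together with the linear distributor $\delta$. The two preceding propositions provide the remaining ingredients: $(\mathbf{<}, \mathbf{I}, \alpha, \lambda, \rho)$ is a (non-symmetric) monoidal structure, and it carries a weak interchange $w_\otimes, w_{I_\otimes}, w_\parr, w_{I_\parr}$ relative to $(\mathbf{\otimes}, \mathbf{\parr})$, with the relevant structure maps already shown to be causal. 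The bulk of \cite[Definition 4.1]{Blute2012} is then a list of coherence diagrams relating these $w$'s to the associators and unitors of $\mathbf{<}$, $\mathbf{\otimes}$, $\mathbf{\parr}$ and to $\delta$; I would verify each by observing that the forgetful functor $\caus[\catc] \to \catc$ is faithful and that every structure map involved is the image of a morphism of $\catc$ built purely from associators, unitors and symmetries of the single product $\otimes$ of the compact closed category $\catc$. Hence each diagram commutes by Mac Lane coherence in $\catc$, once one has checked (as done for the generators above) that the composites are well-typed in $\caus[\catc]$.

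It then remains to produce the isomorphism $m : I_\obslash \to I_\otimes$ and to check the two extra equations. Here the coincidence of units does the work: $I_\obslash = I_\parr = I_\otimes = \mathbf{I}$ with $\mathbf{I} = \mathbf{I^*}$, so we take $m = \id_\mathbf{I}$. That $m$ is an isomix map is exactly the statement that $\caus[\catc]$ is an isomix category, which was noted when $\mathbf{\otimes}$ and $\mathbf{\parr}$ were introduced. With $m = \id_\mathbf{I}$, the first displayed condition collapses to $\rho^\otimes_\mathbf{I} = \lambda^\otimes_\mathbf{I} : \mathbf{I \otimes I} \to \mathbf{I}$, and, recalling that $w_{I_\otimes}$ was defined to be the common unitor $\lambda_\mathbf{I} = \rho_\mathbf{I}$, the second (counit) condition collapses to another instance of unitor coherence; both follow from coherence in $\catc$.

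I expect the main obstacle to be purely organisational: \cite{Blute2012} imposes a long list of weak-interchange coherence diagrams, and one must be careful to confirm, diagram by diagram, both that each morphism occurring is genuinely causal (so lives in $\caus[\catc]$) and that the underlying $\catc$-equation is an instance of monoidal coherence. No new idea beyond the ones already used, namely causality of the $w$'s and of the unitors, faithfulness of the forgetful functor, and the degeneracy $\mathbf{I} = \mathbf{I^*}$, should be needed, and with the unit conditions handled as above this completes the proof that $\caus[\catc]$ is a $\bv$-category.
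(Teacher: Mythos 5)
Your proposal is correct and follows essentially the same route as the paper: assemble the $*$-autonomous (hence symmetric linearly distributive with negation) structure, the $\mathbf{<}$ monoidal structure, and the weak interchange from the preceding propositions, inherit all coherence from $\catc$ via the faithful identity-on-underlying-morphisms forgetful functor, and exploit the coincidence of all three units to take the isomix map $m$ to be the identity scalar on $\mathbf{I}$, which trivialises the remaining unit conditions. No meaningful difference from the paper's argument.
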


\begin{proof}
$\caus[\catc]$ is a symmetric linearly distributive category with negation and a weak interchange. The units of $(\caus[\catc], \mathbf{\otimes}, \mathbf{I})$, $(\caus[\catc], \mathbf{<}, \mathbf{I})$, and $(\caus[\catc], \mathbf{\parr}, \mathbf{I})$ are all the same with the same unitors as morphisms in $\catc$ and the unit weak interchange maps are just unitors as well. The isomix map required to make it a $\bv$-category is just $1 : \mathbf{I} \to \mathbf{I}$.
\end{proof}

\begin{theorem}\label{proof:thrm:first_order_characterisation}[Restatement of Theorem \ref{thrm:first_order_characterisation}]
$\mathbf{A^* \parr A} = \mathbf{A^* < A} \Leftrightarrow \exists \mu \neq 0 . c_\mathbf{A}^* = \{\mu \cdot \discard_A\}$.
\end{theorem}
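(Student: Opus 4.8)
The plan is to verify the two inclusions of causal sets underlying $\mathbf{A^* \parr A} = \mathbf{A^* < A}$ separately, dispatching the zero-object cases at the outset by appeal to the degeneracy tables (Equations~\eqref{eq:zero_tensor} and~\eqref{eq:zero_one_way}) and assuming $A$ non-zero thereafter. Both types have underlying object $A^* \otimes A$. Unfolding the definitions and using closedness of $c_\mathbf{A}$ (so $c_{\mathbf{A^{**}}} = c_\mathbf{A}$), the par type carries $c_{\mathbf{A^* \parr A}} = \{\rho^* \otimes \pi \mid \rho \in c_\mathbf{A},\ \pi \in c_\mathbf{A}^*\}^*$, where $\rho^*$ is the transpose of $\rho$ read as an effect on $A^*$ and $\pi$ as an effect on the $A$-leg; meanwhile $c_{\mathbf{A^*}} < c_\mathbf{A}$ is by definition already a subset of $c_{\mathbf{A^* \parr A}}$. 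So only the residual structure needs to be compared.

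For ``$\Leftarrow$'', assume $c_\mathbf{A}^* = \{\mu \cdot \discard_A\}$ with $\mu \neq 0$. Given any $h \in c_{\mathbf{A^* \parr A}}$, set $m := h \fatsemi (\id_{A^*} \otimes \mu\cdot\discard_A)$. Normalisation of $h$ against the generating effects says exactly that $m \fatsemi \rho^* = h \fatsemi (\rho^* \otimes \mu\cdot\discard_A) = \id_I$ for every $\rho \in c_\mathbf{A}$, which (via compact closure) is the statement $m \in c_\mathbf{A}^* = c_{\mathbf{A^*}}$. Since $c_\mathbf{A}^*$ is the singleton $\{\mu\cdot\discard_A\}$, the universal quantifier in the definition of $c_{\mathbf{A^*}} < c_\mathbf{A}$ ranges over the single effect $\mu\cdot\discard_A$, for which the residual equation holds by construction of $m$; hence $h \in c_{\mathbf{A^*}} < c_\mathbf{A}$, and the two causal sets coincide.

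For ``$\Rightarrow$'', the key witness is the cup $\eta_A : I \to A^* \otimes A$. It lies in $c_{\mathbf{A^* \parr A}}$, since by yanking $\eta_A \fatsemi (\rho^* \otimes \pi) = \rho \fatsemi \pi = \id_I$ for all $\rho \in c_\mathbf{A}$ and $\pi \in c_\mathbf{A}^*$. By hypothesis $\eta_A \in c_{\mathbf{A^*}} < c_\mathbf{A}$, so it has a residual $m \in c_{\mathbf{A^*}}$ with $\eta_A \fatsemi (\id_{A^*} \otimes \pi) = m$ for every $\pi \in c_\mathbf{A}^*$. But $\eta_A \fatsemi (\id_{A^*} \otimes \pi)$ is precisely the transpose of $\pi$, and transposition is injective, so all elements of $c_\mathbf{A}^*$ are equal: $c_\mathbf{A}^* = \{\pi_0\}$. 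As $A$ is non-zero, flatness of $c_\mathbf{A}$ furnishes an invertible $\mu$ with $\mu\cdot\discard_A \in c_\mathbf{A}^*$, forcing $\pi_0 = \mu\cdot\discard_A$, which is the desired conclusion.

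The argument is largely bookkeeping once the separating witness $\eta_A$ is identified; the one delicate point is keeping the transpose conventions consistent when passing between states and effects of $A$, $A^*$, and $A^{**}$, and confirming that $\eta_A$ itself (not merely a rescaling of it) is normalised — which the yanking identity delivers directly. The only subtlety beyond this is that the zero objects are exactly where ``scalar multiple of discarding'' and ``first-order'' diverge, and these are settled immediately from the degeneracy tables rather than by the main computation.
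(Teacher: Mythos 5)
Your proof is correct and follows essentially the same route as the paper's: the cup $\eta_A$ (normalised by yanking) is the separating witness for $\Rightarrow$, with injectivity of transposition forcing $c_\mathbf{A}^*$ to be a singleton and flatness pinning it to $\mu\cdot\discard_A$; and for $\Leftarrow$ the residual condition trivialises because the universal quantifier ranges over a single effect. Your version is in fact slightly more explicit than the paper's in verifying that the candidate residual actually lands in $c_{\mathbf{A^*}}$, but the argument is the same.
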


\begin{proof}
The cases for zero systems can be confirmed with Equations \ref{eq:zero_tensor} and \ref{eq:zero_one_way}. So suppose $\mathbf{A}$ is non-zero.

$\Rightarrow$: $\mathbf{A^* \parr A}$ will always contain the compact cup $\eta_A$ up to some invertible scalar. Suppose this is one-way signalling with left-residual $\pi \in c_{\mathbf{A^*}}$. Then by compact closure, any effect $\pi' \in c_\mathbf{A}^*$ must have $\pi' = \pi^*$. This means there is a unique effect, and by flatness this must be some multiple of the discard map $\discard_A$.

$\Leftarrow$: Given any $h : \mathbf{A^* \parr A}$ for some first-order $\mathbf{A}$, the one-way signalling condition holds trivially since there is a unique effect yielding a unique residual.
\end{proof}

\begin{proposition}\label{proof:prop:causality_exception}[Restatement of Proposition \ref{prop:causality_exception}]
$\mathbf{(I \oplus I) \parr A} = \mathbf{(I \oplus I) < A} \Leftrightarrow \mathbf{A} = \mathbf{1} \vee \exists \mu \neq 0 . c_\mathbf{A}^* = \{\mu \cdot \discard_A\}$
\end{proposition}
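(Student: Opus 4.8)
The plan is to follow the proof of Theorem~\ref{thrm:first_order_characterisation}, handling the two zero systems separately and doing the real work for non-zero $\mathbf{A}$. Write $\mathbf{2}:=\mathbf{I\oplus I}$; by Proposition~\ref{prop:coproduct_first_order} it is first-order, with $c_\mathbf{2}=\{\discard_{I\oplus I}\}^*$ containing both injectors $\iota_1,\iota_2\colon I\to I\oplus I$ and, since these form a basis of the object $I\oplus I$, $c_\mathbf{2}^*=\{\discard_{I\oplus I}\}$. Identifying $(I\oplus I)\otimes A\cong A\oplus A$ by distributivity, I first record the bookkeeping fact that a state of $\mathbf{2\parr A}$ is a pair $(h_1,h_2)$ with $h_i\in\catc(I,A)$ (the conditionals obtained by feeding the two projectors of $\mathbf{2}$), that $(h_1,h_2)\in c_{\mathbf{2\parr A}}$ iff $h_1+h_2\in c_\mathbf{A}$, and that such a pair lies in $c_{\mathbf{2<A}}$ iff, in addition, the scalar $h_1\fatsemi\pi$ is the same for every $\pi\in c_\mathbf{A}^*$ (equivalently so is $h_2\fatsemi\pi=1-h_1\fatsemi\pi$) --- i.e. the binary marginal is insensitive to which causal effect is applied on the $A$-side, which is Definition~\ref{def:OneWay} read off one projector at a time. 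The question is then for which $\mathbf{A}$ this insensitivity is automatic.

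For the zero cases: if $\mathbf{A}=\mathbf{1}$ then $c_\mathbf{A}^*=\emptyset$, so the one-way signalling clause is vacuous and $\mathbf{2\parr 1}=\mathbf{2<1}$; if $\mathbf{A}=\mathbf{0}$ then both sides equal $(0,\emptyset)$ and $c_\mathbf{0}^*=\{\discard_0\}$, so both sides of the equivalence hold. (Both also follow from Equations~\ref{eq:zero_tensor} and~\ref{eq:zero_one_way}.) For ``$\Leftarrow$'' with non-zero $\mathbf{A}$ and $c_\mathbf{A}^*=\{\mu\cdot\discard_A\}$, I reuse the argument of Theorem~\ref{thrm:first_order_characterisation}: the unique causal effect forces the residual of any $(h_1,h_2)\in c_{\mathbf{2\parr A}}$ to be $m:=(h_1\fatsemi\mu\discard_A,\,h_2\fatsemi\mu\discard_A)$, and $m\in c_\mathbf{2}$ since $m\fatsemi\discard_{I\oplus I}=(h_1+h_2)\fatsemi\mu\discard_A=1$.

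The substance is ``$\Rightarrow$'' for non-zero $\mathbf{A}$, which I prove contrapositively: assuming $c_\mathbf{A}^*$ is not a single ray, I construct a state of $\mathbf{2\parr A}$ that is not one-way signalling. By flatness fix invertible $\mu_0,\nu$ with $\mu_0\discard_A\in c_\mathbf{A}^*$ and $\nu\maxmix_A\in c_\mathbf{A}$; then there is $\pi'\in c_\mathbf{A}^*$ with $\pi'\neq\mu_0\discard_A$, and by \ref{apc:basis} the fiducial states of $A$ separate these morphisms $A\to I$, so I pick $w\in\{\discard_A\}^*$ with $w\fatsemi\pi'\neq w\fatsemi\mu_0\discard_A=\mu_0$. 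By Proposition~\ref{prop:bintests} take a positive complement $w'\in\catc(I,A)$ and scalar $\lambda$ with $w+w'=\lambda\cdot\maxmix_A$; composing with $\discard_A$ shows $\lambda d_A=1+(w'\fatsemi\discard_A)$, and after replacing $(w',\lambda)$ by $(w'+\maxmix_A,\lambda+1)$ if needed I may take $\lambda$ invertible. Set
\[
  h_1:=\tfrac{\nu}{2\lambda}\cdot(\lambda\cdot\maxmix_A+w),\qquad h_2:=\tfrac{\nu}{2\lambda}\cdot w',
\]
which are genuine states of $\catc$ (all scalars involved lie in $\catc(I,I)$ by \ref{apc:scalars}) with $h_1+h_2=\nu\maxmix_A\in c_\mathbf{A}$, so $(h_1,h_2)\in c_{\mathbf{2\parr A}}$. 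Since $\nu\maxmix_A\in c_\mathbf{A}$, every effect of $c_\mathbf{A}^*$ sends $\maxmix_A$ to $\nu^{-1}$, hence $\maxmix_A\fatsemi(\mu_0\discard_A-\pi')=0$ in $\sub(\catc)$ and
\[
  h_1\fatsemi\mu_0\discard_A-h_1\fatsemi\pi'=\tfrac{\nu}{2\lambda}\cdot w\fatsemi(\mu_0\discard_A-\pi')=\tfrac{\nu}{2\lambda}\cdot(\mu_0-w\fatsemi\pi')\neq 0.
\]
So $h_1\fatsemi\pi$ is not constant on $c_\mathbf{A}^*$, hence $(h_1,h_2)\in c_{\mathbf{2\parr A}}\setminus c_{\mathbf{2<A}}$ and $\mathbf{2\parr A}\neq\mathbf{2<A}$.

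I expect the last paragraph to be the main obstacle. The residual characterisation of $\mathbf{2<A}$ and the ``$\Leftarrow$'' direction are routine; the work is in producing the separating pair $(h_1,h_2)$ inside $\catc$ itself rather than merely in $\sub(\catc)$, which is why Proposition~\ref{prop:bintests} and the bookkeeping with invertible scalars $\lambda,\mu_0,\nu$ are needed. I also plan to write the reduction through the injectors and projectors of $\mathbf{I\oplus I}$ explicitly, since the identical argument gives the stated generalisation --- $\mathbf{I\oplus I}$ replaced by any $\mathbf{B}$ with two distinct states --- once $\iota_1,\iota_2,p_1,p_2$ are replaced by a basis and dual basis of $\mathbf{B}$ from Lemma~\ref{lemma:extend_basis}.
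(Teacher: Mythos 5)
Your proposal is correct and follows essentially the same route as the paper's proof: dispose of the zero cases, note that a unique causal effect makes one-way signalling trivial, and prove the contrapositive of the forward direction by taking two distinct effects in $c_\mathbf{A}^*$, separating them with a fiducial state via \ref{apc:basis}, complementing it with Proposition~\ref{prop:bintests}, and packaging the state and its complement into an element of $c_{\mathbf{(I\oplus I)\parr A}}$ with a non-constant conditional. Your version is slightly more careful than the paper's on two minor points (explicitly arranging for $\lambda$ to be invertible, and spelling out the pair-of-conditionals description of $c_{\mathbf{2\parr A}}$ and $c_{\mathbf{2<A}}$), but the construction and the key lemmas invoked are the same.
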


\begin{proof}
As usual, we can verify the zero system cases on their own, giving both $\mathbf{(I \oplus I) \parr 0} = \mathbf{(I \oplus I) < 0} \simeq \mathbf{0}$ and $\mathbf{(I \oplus I) \parr 1} = \mathbf{(I \oplus I) < 1} \simeq \mathbf{1}$. For the remainder, we suppose $\mathbf{A}$ is non-zero.

$\Leftarrow$: The one-way signalling condition holds trivially since there is a unique effect for $\mathbf{A}$, yielding a unique residual on $\mathbf{I \oplus I}$.

$\Rightarrow$: For the contrapositive, suppose that $\pi, \pi' \in c_\mathbf{A}^*$ are two distinct effects. By \ref{apc:basis} there must be some morphism $f \in \catc(I, A)$ that distinguishes them. Let $f + f' = \lambda \cdot \maxmix_A$ via Proposition \ref{prop:bintests}, and let $\mu$ be the invertible scalar such that $\mu \cdot \maxmix_A \in c_\mathbf{A}$ by flatness.

We construct $h := \mu \cdot \lambda^{-1} \cdot (\iota_1 \otimes f + \iota_2 \otimes f') \in \catc(I, (I \oplus I) \otimes A)$. This is of type $\mathbf{(I \oplus I) \parr A}$ since $h \fatsemi (\discard_{I \oplus I} \otimes \id_A) = \mu \cdot \lambda^{-1} \cdot (f + f') = \mu \cdot \maxmix_A \in c_\mathbf{A}$. However, $h$ is not in $c_{\mathbf{(I \oplus I) < A}}$ since $h \fatsemi (\id_{I \oplus I} \otimes \pi) \neq h \fatsemi (\id_{I \oplus I} \otimes \pi')$ because projecting out the first element would give unequal scalars $\mu \cdot \lambda^{-1} \cdot (f \fatsemi \pi) \neq \mu \cdot \lambda^{-1} \cdot (f \fatsemi \pi')$.
\end{proof}

\section{Proofs for Section \ref{sec:nonsignalling}}

\begin{theorem}[Restatement of Theorem \ref{thrm:tensor_nonsignalling}]\label{proof:thrm:tensor_nonsignalling}
$(c_\mathbf{A} < c_\mathbf{B}) \cap (c_\mathbf{A} > c_\mathbf{B}) = c_{\mathbf{A \otimes B}}$.
\end{theorem}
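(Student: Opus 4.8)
The plan is to prove the two inclusions separately; as elsewhere in this appendix, the degenerate cases where $A$ or $B$ is a zero system are dispatched by the identities in Equations~\ref{eq:zero_tensor} and \ref{eq:zero_one_way}, so assume $A,B$ non-zero. For the inclusion $c_{\mathbf{A\otimes B}} \subseteq (c_\mathbf{A} < c_\mathbf{B}) \cap (c_\mathbf{A} > c_\mathbf{B})$: any separable state $\rho_A \otimes \rho_B$ with $\rho_A \in c_\mathbf{A}$, $\rho_B \in c_\mathbf{B}$ admits the one-term decomposition $\fsub{\rho_A \otimes \rho_B} \sim \fsub{\rho_A} \otimes \fsub{\rho_B}$ witnessing membership in $c_\mathbf{A} \obslash c_\mathbf{B} = c_\mathbf{A} < c_\mathbf{B}$ (Theorem~\ref{thrm:one_way_self_dual_and_affine_semi_local}), and symmetrically in $c_\mathbf{A} > c_\mathbf{B}$. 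Both of these sets are closed (Corollary~\ref{corollary:one_way_closed}), hence so is their intersection (if $c,d$ are closed then $(c\cap d)^{**} \subseteq (c^* \cup d^*)^* = c\cap d$), so it contains the closure of the separable causal states, which by definition is $c_{\mathbf{A\otimes B}}$.

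The content is the reverse inclusion. Fix $h \in (c_\mathbf{A} < c_\mathbf{B}) \cap (c_\mathbf{A} > c_\mathbf{B})$; by Theorem~\ref{thrm:affine}, $c_{\mathbf{A\otimes B}} = \affp(\{\rho_A \otimes \rho_B \mid \rho_A \in c_\mathbf{A}, \rho_B \in c_\mathbf{B}\})$, so it suffices to write $\fsub{h}$ as an affine combination of separable causal states. I would choose an affine basis $\{\rho_l\}_{l\in L_0} \subseteq c_\mathbf{A}$ of the affine span $\aff(c_\mathbf{A})$ (this exists because $\aff(c_\mathbf{A})$ is affinely spanned by $c_\mathbf{A}$ and the homsets are finite-dimensional by \ref{apc:basis}); these are linearly independent in $\sub(\catc)$, since flatness of $c_\mathbf{A}$ places $\aff(c_\mathbf{A})$ inside a hyperplane missing the origin, so by Lemma~\ref{lemma:extend_basis} they extend to a basis $\{\rho_l\}_{l\in L}$ of $A$ with dual basis $\{e_l\}_{l\in L} \subseteq \sub(\catc)(A,I)$; do the same for $B$ to get $\{\sigma_k\}_{k\in K_0} \subseteq c_\mathbf{B}$, $\{\sigma_k\}_{k\in K}$, $\{d_k\}_{k\in K}$. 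Resolving the identity on $A$ and on $B$, $\fsub{h} \sim \sum_{l\in L,\,k\in K} h_{lk} \cdot \fsub{\rho_l \otimes \sigma_k}$ with $h_{lk} := \fsub{h} \fatsemi (e_l \otimes d_k) \in \sub(\catc)(I,I)$.

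The crux is that $h_{lk} = 0$ unless $l\in L_0$ and $k\in K_0$, and this is where both signalling directions are used. For $l\notin L_0$ the effect $e_l$ vanishes on $c_\mathbf{A}$ (it kills each $\rho_{l'}$, $l'\in L_0$, hence every affine combination of them, and $c_\mathbf{A}\subseteq \aff(c_\mathbf{A})$), and dually $d_k$ vanishes on $c_\mathbf{B}$ for $k\notin K_0$. Writing $h$ via $c_\mathbf{A} < c_\mathbf{B} = c_\mathbf{A} \obslash c_\mathbf{B}$ as $\fsub{h}\sim\sum_i f_i \otimes \fsub{g_i}$ with $g_i\in c_\mathbf{B}$ gives $h_{lk} = \sum_i (f_i \fatsemi e_l)(\fsub{g_i} \fatsemi d_k) = 0$ for $k\notin K_0$; writing $h$ via $c_\mathbf{A} > c_\mathbf{B}$ as $\fsub{h}\sim\sum_j \fsub{f_j'} \otimes g_j'$ with $f_j'\in c_\mathbf{A}$ gives $h_{lk} = \sum_j (\fsub{f_j'} \fatsemi e_l)(g_j' \fatsemi d_k) = 0$ for $l\notin L_0$. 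Hence $\fsub{h}\sim\sum_{l\in L_0,\,k\in K_0} h_{lk}\cdot\fsub{\rho_l\otimes\sigma_k}$, a combination of separable causal states. It is affine: picking invertible $\mu,\nu$ with $\mu\cdot\discard_A\in c_\mathbf{A}^*$, $\nu\cdot\discard_B\in c_\mathbf{B}^*$ (flatness), and using $h\in c_{\mathbf{A\parr B}}$ together with $\rho_l\fatsemi\mu\cdot\discard_A = 1 = \sigma_k\fatsemi\nu\cdot\discard_B$, one gets $\sum_{l\in L_0,\,k\in K_0} h_{lk} = h\fatsemi(\mu\cdot\discard_A \otimes \nu\cdot\discard_B) = 1$. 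Theorem~\ref{thrm:affine} then yields $h\in c_{\mathbf{A\otimes B}}$.

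I expect the main technical obstacle to be the dual-basis bookkeeping: justifying that tensors of bases are bases in $\sub(\catc)$ so the resolution of the identity in the product basis is legitimate (this follows from resolving the identity on each factor using the cups and caps of $\sub(\catc)$), and verifying the small convex-geometry facts about $\aff(c_\mathbf{A})$ — that it admits an affine basis lying in $c_\mathbf{A}$ and that it avoids the origin. The genuinely load-bearing point, that one signalling direction pins down the $B$-coordinates of $h$ while the other pins down the $A$-coordinates, so that only the ``doubly causal'' entries $h_{lk}$ survive, is the categorical reflection of the Gutoski / Chiribella et al. result that no-signalling channels are affine combinations of product channels, and I expect it to go through cleanly once the bases are in place.
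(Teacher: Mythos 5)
Your proof is correct, but it organises the hard direction differently from the paper. The paper works asymmetrically: it takes a single $\obslash$-decomposition $\fsub{h} \sim \sum_i f_i \otimes \fsub{g_i}$ coming from $c_\mathbf{A} < c_\mathbf{B}$, extends a maximal linearly independent subset of the $\{g_i\}$ to a basis of $B$, shows the coefficients outside that subset vanish, and then uses the \emph{right}-residual $n$ to normalise the $A$-side factors $(\fsub{n}\fatsemi e_j)^{-1}\cdot(\fsub{h}\fatsemi(\id_A\otimes e_j))$ — which requires perturbing the basis so that $\fsub{n}\fatsemi e_j \neq 0$, and then invoking Proposition~\ref{prop:bintests} to shift these factors by multiples of $\maxmix_A$ so they land in $c_\mathbf{A}$. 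You instead choose affine bases \emph{inside} $c_\mathbf{A}$ and $c_\mathbf{B}$ from the outset (justified by flatness placing the causal sets in a hyperplane off the origin, so affine independence gives linear independence and Lemma~\ref{lemma:extend_basis} applies), double-resolve the identity, and use the two $\obslash$-decompositions symmetrically to kill all cross-terms $h_{lk}$ with $l\notin L_0$ or $k\notin K_0$. This buys you a cleaner, symmetric argument that avoids both the nonvanishing-denominator bookkeeping and the $\maxmix$-shifting via Proposition~\ref{prop:bintests}, at the modest cost of the extra (easily verified) convex-geometry facts about affine bases of $\aff(c_\mathbf{A})$; your closure argument for the easy inclusion, via $(c\cap d)^{**}\subseteq(c^*\cup d^*)^* = c\cap d$, is also sound and equivalent in effect to the paper's direct computation of residuals for affine combinations of products.
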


\begin{proof}
We refer to Equations \ref{eq:zero_tensor} and \ref{eq:zero_one_way} for the cases involving zero systems, allowing us to assume we are working with non-zero systems for the remainder of this proof.

The $\supseteq$ direction is easy since any affine combination of product states $\fsub{h} \sim \sum_i \alpha_i \cdot \fsub{f_i \otimes g_i}$ will be non-signalling with left-residual characterised by the decomposition $\fsub{h \fatsemi (\id_A \otimes \pi)} \sim \sum_i \alpha_i \cdot \fsub{f_i}$ for any choice of $\pi \in c_\mathbf{B}^*$ and a similarly constructed right-residual. The remainder of the proof concerns the $\subseteq$ direction, i.e. showing that all non-signalling processes are affine combinations of product processes.

Consider any $h \in (c_\mathbf{A} < c_\mathbf{B}) \cap (c_\mathbf{A} > c_\mathbf{B})$ with left-residual $m \in c_\mathbf{A}$ and right-residual $n \in c_\mathbf{B}$. Since $h \in c_\mathbf{A} \obslash c_\mathbf{B} = c_\mathbf{A} < c_\mathbf{B}$ by Lemma \ref{lemma:one_way_equals_pcs}, we have some $\{f_i\}_i, \{g_i\}_i$ such that $\fsub{h} \sim \sum_i f_i \otimes \fsub{g_i}$, $\fsub{m} \sim \sum_i f_i$ with $\forall i . g_i \in c_\mathbf{B}$. Take some maximal linearly-independent subset of $\{g_i\}_i$ and extend it to a basis $\{\rho_j\}_j$ for $B$ in $\catc$ with a dual basis $\{e_j\}_j$ in $\sub(\catc)$ using Lemma \ref{lemma:extend_basis}. Similar to the proof of Lemma \ref{lemma:one_way_equals_pcs}, we ensure that $\fsub{n} \fatsemi e_j \neq 0$ (and hence has an inverse) for every $\rho_j$ corresponding to some linear combination of $\left\{g_i\right\}_i$ - we will denote this subset of the indices as $\mathcal{I}$. Since this basis was formed by a maximal linearly-independent set, every $g_i$ must be a linear combination of only the $\mathcal{I}$ terms and hence $\forall j \notin \mathcal{I} . \fsub{g_i} \fatsemi e_j \sim 0$.

For any $j \notin \mathcal{I}$ and any effect $\pi \in c_\mathbf{A}^*$, we have:
\begin{equation}
\begin{split}
\fsub{h} \fatsemi (\id_A \otimes e_j) &\sim \sum_i (f_i \otimes \fsub{g_i}) \fatsemi (\id_A \otimes e_j) \\
&\sim \sum_i (\fsub{g_i} \fatsemi e_j) \cdot f_i \\
&\sim \sum_i \fsub{0_{I,I}} \cdot f_i \\
&\sim \fsub{0_{I,A}}
\end{split}
\end{equation}

\begin{equation}
\begin{split}
\fsub{n} \fatsemi e_j &\sim \fsub{h} \fatsemi (\fsub{\pi} \otimes e_j) \\
&\sim \fsub{0_{I,A}} \fatsemi \fsub{\pi} \\
&\sim \fsub{0_{I,I}}
\end{split}
\end{equation}

By resolving the identity on $B$, we have:
\begin{equation}
\begin{split}
\fsub{h} &\sim \sum_j (\fsub{h} \fatsemi (\id_A \otimes e_j)) \otimes \fsub{\rho_j} \\
&\sim \sum_{j \in \mathcal{I}} (\fsub{h} \fatsemi (\id_A \otimes e_j)) \otimes \fsub{\rho_j} \\
&\sim \sum_{j \in \mathcal{I}} (\fsub{n} \fatsemi e_j)^{-1} \cdot (\fsub{h} \fatsemi (\id_A \otimes e_j)) \otimes (\fsub{n} \fatsemi e_j) \cdot \fsub{\rho_j}
\end{split}
\end{equation}

\begin{equation}
\begin{split}
\fsub{n} &\sim \sum_j (\fsub{n} \fatsemi e_j) \cdot \fsub{\rho_j} \\
&\sim \sum_{j \in \mathcal{I}} (\fsub{n} \fatsemi e_j) \cdot \fsub{\rho_j}
\end{split}
\end{equation}

Analogously to the proof of Lemma \ref{lemma:one_way_equals_pcs}, we can define $f_j := (\fsub{n} \fatsemi e_j)^{-1} \cdot (\fsub{h} \fatsemi (\id_A \otimes e_j))$ which will be normalised by any $\pi \in c_\mathbf{A}^*$, then use Proposition \ref{prop:bintests} to find some $f_j' \in \catc(I,A)$ such that $\fsub{f_j'} \sim \fsub{\lambda_j \cdot \maxmix_A} + f_j$ and therefore $\alpha \cdot f_j' \in c_\mathbf{A}$ for some invertible scalar $\alpha \in \catc(I,I)$. We follow in the same manner until we reach the following equation:
\begin{equation}
\begin{split}
\fsub{h} &\sim \sum_{j \in \mathcal{I}} \fsub{f_j'} \otimes (\fsub{n} \fatsemi e_j) \cdot \fsub{\rho_j} - \sum_{j \in \mathcal{I}} \fsub{\lambda_i \cdot \maxmix_A} \otimes (\fsub{n} \fatsemi e_j) \cdot \fsub{\rho_j}
\end{split}
\end{equation}

Since each of $\{\rho_j\}_{j \in \mathcal{I}}$ is a linear combination of $\{g_i\}_i$, this represents $h$ as a linear combination of terms in $c_\mathbf{A} \otimes c_\mathbf{B}$. The normalisation of $h \in c_{\mathbf{A \parr B}}$ guarantees that this linear combination is affine, and hence $h \in c_{\mathbf{A \otimes B}} = (c_\mathbf{A} \otimes c_\mathbf{B})^{**}$ by Theorem \ref{thrm:affine}.
\end{proof}

\end{document}